\def\IEEEsubmission{0}
\def\figuresize{3.5in}
\def\figuresizeS{3.0in}
\def\complexNumbers{\mathbb{C}}
\def\integersPositive{\mathbb{Z}^{+}}
\def\functionSpace[#1]{\mathcal{F}(#1)}
\def\realNumbers{\mathbb{R}}
\def\integers{\mathbb{Z}}
\def\constante{{\rm e}}
\def\constantj{{\rm j}}
\def\expectationOperator[#1][#2]{\mathbb{E}\left[#1\right]}
\def\uniformDistribution[#1][#2]{{\mathcal{U}_{[#1,#2]}}}
\def\traceOperator[#1]{{\mathrm{tr}}\{#1\}}
\def\identityMatrix[#1]{\mathrm{\textbf{I}}_{#1}}
\def\zeroVector[#1]{{ {{\mathbf{0}}}}_{#1}}
\def\oneVector[#1]{{ {\mathbf{1}}}_{#1}}
\def\exponentialIntegral[#1]{\mathrm{Ei}(#1)}
\def\functionArbitrary[#1]{f_{#1}}
\def\functionArbitraryEstimate[#1]{\hat{f}_{#1}}
\def\clamp[#1][#2]{\text{clamp}_{#2}\left(#1\right)}
\def\signNormal[#1]{\text{sign}\left(#1\right)}
\def\diagOperation[#1]{\text{diag}\left\{#1\right\}}
\def\probability[#1]{\mathrm{Pr}({#1})}
\def\complexGaussian[#1][#2]{\mathcal{CN}({#1,#2})}
\def\gaussian[#1][#2]{\mathcal{N}({#1,#2})}
\def\normalPDF[#1]{\phi\left(#1\right)}
\def\CDF[#1][#2][#3]{F_{#1}^{#2}\left({#3}\right)}
\def\PDF[#1][#2][#3]{f_{#1}^{#2}\left({#3}\right)}
\def\numberOfEdgeDevices{K}
\def\indexED{k}
\def\voteVectorEDEle[#1][#2]{v_{#1,#2}^{(\indexRound)}}
\def\voteVectorED[#1]{{\textit{\textbf{v}}}_{#1}^{(\indexRound)}}
\def\voteVectorAcrossED[#1]{\textit{\textbf{u}}_{#1}^{(\indexRound)}}
\def\voteAll{\textit{{\textbf{V}}}^{(\indexRound)}}
\def\voteAllWithout{\dot{\textit{{\textbf{V}}}}}
\def\majorityVoteEle[#1]{w_{#1}^{(\indexRound)}}
\def\majorityVoteDetectedEle[#1]{\hat{w}_{#1}^{(\indexRound)}}
\def\varMonomial{{i_{\rm d}}}
\def\monomial[#1]{x_{#1}}
\def\monomialAmp[#1]{y_{#1}}
\def\lengthGaGb{L}
\def\scalingParameters{\xi}
\def\scaleEexp[#1]{a_{#1}}
\def\scaleEexpAtRound[#1][#2]{a_{#1}^{(#2)}}
\def\angleexpAll[#1]{b_{#1}}
\def\angleexpAllAtRound[#1][#2]{b_{#1}^{(#2)}}
\def\permutationMonoD[#1]{{\hat{\pi}_{#1}}}
\def\eleGa[#1]{{a}_{#1}}
\def\eleGb[#1]{{b}_{#1}}
\def\eleGc[#1]{{c}_{#1}}
\def\eleGt[#1]{{t}_{#1}}
\def\angleexpAllBitD[#1]{\hat{k}_{#1}}
\def\permutationMonoD[#1]{{\hat{\pi}_{#1}}}
\def\eleGa[#1]{{a}_{#1}}
\def\eleGb[#1]{{b}_{#1}}
\def\eleGc[#1]{{c}_{#1}}
\def\eleGt[#1]{{t}_{#1}}
\def\apac[#1][#2]{\rho_{#1}(#2)}
\def\apacPositive[#1][#2]{\rho^{+}_{#1}(#2)}
\def\binaryAsignment[#1][#2]{b_{#1}^{(#2)}}
\def\pmfBinomial[#1][#2]{{p}_{\rm b}(#1;#2)}
\def\eleSeqf[#1]{{f}_{#1}}
\def\eleSeqg[#1]{{g}_{#1}}
\def\eleSeqcf[#1]{{c}_{f,#1}}
\def\eleSeqcg[#1]{{c}_{g,#1}}
\def\scaleA[#1]{\alpha_{#1}}
\def\scaleB[#1]{\beta_{#1}}
\def\permutationShift[#1]{{\psi_{#1}}}
\def\permutationMono[#1]{{\pi_{#1}}}
\def\permutationMonoPre[#1]{{\pi'_{#1}}}
\def\seqPermutationCompShift{\bm{\pi}}
\def\vecArrangement[#1]{\textbf{b}_{#1}}
\def\seqGa{\textit{\textbf{a}}}
\def\seqGb{\textit{\textbf{b}}}
\def\seqGaIt[#1]{\textit{\textbf{a}}^{(#1)}}
\def\seqGbIt[#1]{\textit{\textbf{b}}^{(#1)}}
\def\seqGf[#1]{\textit{\textbf{f}}_{#1}}
\def\seqGg[#1]{\textit{\textbf{g}}_{#1}}
\def\seqGfdot[#1]{\bar{\textit{\textbf{f}}}_{#1}}
\def\seqGgdot[#1]{\bar{\textit{\textbf{g}}}_{#1}}
\def\OFDMinTime[#1][#2]{s_{#1}(#2)}
\def\seqx{\textit{\textbf{x}}}
\def\seqGaP{A}
\def\seqGaItP[#1]{{A}^{(#1)}}
\def\seqGbItP[#1]{{B}^{(#1)}}
\def\seqSubP[#1]{{H}_{#1}}
\def\seqGfP[#1]{{{F}}_{#1}}
\def\seqGgP[#1]{{{G}}_{#1}}
\def\seqSub[#1]{\textit{\textbf{h}}_{#1}}
\def\lagForCorrelation{k}
\def\functionSpace{\mathcal{F}}
\def\functionf[#1]{p^{(#1)}}
\def\functiong[#1]{q^{(#1)}}
\def\functionfdot[#1]{{p}_{\indexIterationANF}^{(#1)}} 
\def\functiongdot[#1]{{q}_{\indexIterationANF}^{(#1)}} 
\def\funcfForFinalPhase{f_{\rm i}}
\def\funcfForFinalAmplitude{f_{\rm r}}
\def\funcfForFinalPhaseED[#1]{f_{{\rm i},#1}}
\def\funcfForFinalPhaseDecED[#1]{\check{f}_{{\rm i},#1}}
\def\funcfForFinalAmplitudeED[#1]{f_{{\rm r},#1}}
\def\funcfForFinalAmplitudeDecED[#1]{\check{f}_{{\rm r}, #1}}
\def\funcfForANF{f}
\def\funcEnum{{i}}
\def\varMonomial{{i}}
\def\funcGfForANF[#1]{f_{#1}}
\def\funcGgForANF[#1]{g_{#1}}
\def\polySeq[#1][#2]{{#1}(#2)}
\def\setMonomials[#1]{{\mathcal{M}_{#1}}}
\def\phaseOffsetl[#1]{\Delta_{#1}}
\def\phaseOffsetll[#1]{\Delta_{#1}}
\def\amplitudeOffset[#1]{\epsilon'_{#1}}
\def\amplitudeOffsetl[#1]{\epsilon_{#1}}
\def\lengthOfSequence{L}
\def\totalPowerScale{\delta}
\def\indexEleOfSeq{i}
\def\indexIteration{n}
\def\indexFirstOrderMonomial{j}
\def\orderMonomial[#1]{k_{#1}}
\def\coeffientsANF[#1]{c_{#1}}
\def\polyVariable{z}
\def\numberOfIterations{m}
\def\numberOfPointsForPSK{H}
\def\modulationSymbolF[#1]{m_{#1}}
\def\cardinalitySetOfOperators[#1]{{H}_{#1}}
\def\transmittedSeqP{T}
\def\transmittedSeq[#1]{\textit{\textbf{t}}_{#1}^{(\indexRound)}}
\def\transmittedSeqEle[#1]{{t}_{#1}^{(\indexRound)}}
\def\transmittedSeqNoRound[#1]{\textit{\textbf{t}}_{#1}}
\def\transmittedSeqEleNoRound[#1]{{t}_{#1}}
\def\receivedSeqP{R}
\def\receivedSeq[#1]{\textit{\textbf{r}}_{#1}^{(\indexRound)}}
\def\receivedSeqEle[#1]{{r}_{#1}^{(\indexRound)}}
\def\SNR{{\tt{SNR}}}
\def\channelAtSubcarrier[#1]{h_{#1}}
\def\noiseAtSubcarrier[#1]{\omega_{#1}}
\def\transmitPower[#1]{P_{#1}}
\def\noiseVariance{\sigma_{\rm noise}^2}
\def\numberOfEDsPlus[#1]{K^{+}_{#1}}
\def\numberOfEDsMinus[#1]{K^{-}_{#1}}
\def\numberOfEDsZero[#1]{K^{0}_{#1}}
\def\metricPlus[#1]{E^{+}_{#1}}
\def\metricMinus[#1]{E^{-}_{#1}}
\def\computationErrorRate[#1]{{\tt{CER}}(#1)}
\def\probabilityErrorPlus[#1]{{\tt{CER}^{+}}(#1)}
\def\probabilityErrorMinus[#1]{{\tt{CER}^{-}}(#1)}
\def\rate[#1]{\lambda_{#1}}
\def\probabilityCER{{\tt{CER}}}
\def\Trefresh{T_\mathrm{update}}
\def\indexRound{\ell}
\def\updateRate{\mu}
\def\velocityVectorEle[#1][#2]{{u}^{(#1)}_{#2}}
\def\feedbackEle[#1][#2]{{g}^{(#1)}_{#2}}
\def\locationInitialEle[#1]{{c}_{\mathrm{i},#1}}
\def\locationTargetEle[#1]{{c}_{\mathrm{t},#1}}
\def\locationEle[#1][#2]{{{c}}^{({#1})}_{#2}}
\def\locationEstimateEle[#1][#2]{{{\tilde{c}}}^{({#1})}_{#2}}
\def\locationEstimationErrorEle[#1][#2]{{\epsilon}^{(#1)}_{#2}}
\def\varianceSensor{\sigma_{\rm s}^2}
\def\stdSensor{\sigma_{\rm s}}
\def\indexCoordinate{l}
\def\maximumVelocity{u_\text{\rm limit}}
\def\clamp[#1][#2]{\text{clamp}_{#2}\left(#1\right)}
\def\numberOfCorrectDecision{{N^{+}}}
\def\numberOfZeroDecision{{N^{0}}}
\def\numberOfIncorrectDecision{{N^{-}}}
\def\probabilityPlusDecision{\alpha}
\def\probabilityNullDecision{\gamma}
\def\probabilityMinusDecision{\beta}
\def\computationRate{\mathcal{R}}
\def\efficientImprovement{\epsilon}
\def\rateTraditional{r}
\def\integralVar{t}
\def\CDFvariable{x}
\def\charFcn[#1]{\varphi(#1)}
\def\charFcnPlus[#1]{\varphi^{+}_{\indexIteration}(#1)}
\def\charFcnMinus[#1]{\varphi^{-}_{\indexIteration}(#1)}
\def\charFcnMinusConj[#1]{{{\varphi}^{-}_{\indexIteration}(#1)}^*}
\def\distance[#1][#2]{\delta^{(#2)}_{#1}}
\def\normalCDF[#1]{\Phi\left(#1\right)}
\def\renormalCDF[#1]{\Psi(#1)}
\def\phiFunc[#1][#2]{\phi_{#1}({#2})}
\def\anIndexForK{k}
\def\voteVectorEDEleCoordinate[#1][#2]{\bar{c}_{#1}^{(#2)}}
\def\varP{k}
\def\varN{l}
\def\varZ{m}
\def\positiveHalfOAC{\zeta}
\def\BibTeX{{\rm B\kern-.05em{\sc i\kern-.025em b}\kern-.08em
		T\kern-.1667em\lower.7ex\hbox{E}\kern-.125emX}}
\let\norm\undefined 
\DeclarePairedDelimiter\norm{\lVert}{\rVert}
\newcommand\mydots{\hbox to 1em{.\hss.\hss.}}
\DeclarePairedDelimiter\ceil{\lceil}{\rceil}
\DeclarePairedDelimiter\floor{\lfloor}{\rfloor}
\newtheorem{theorem}{Theorem}
\newtheorem{definition}{Definition}
\newtheorem{lemma}{Lemma}
\newtheorem{proposition}{Proposition}
\newtheorem{corollary}{Corollary}
\newtheorem{example}{\color{black} Example}
\DeclareMathOperator{\sign}{sign}
\newif\ifAC@uppercase@first%
\def\Aclp#1{\AC@uppercase@firsttrue\aclp{#1}\AC@uppercase@firstfalse}%
\def\AC@aclp#1{%
	\ifcsname fn@#1@PL\endcsname%
	\ifAC@uppercase@first%
	\expandafter\expandafter\expandafter\MakeUppercase\csname fn@#1@PL\endcsname%
	\else%
	\csname fn@#1@PL\endcsname%
	\fi%
	\else%
	\AC@acl{#1}s%
	\fi%
}%
\def\Acp#1{\AC@uppercase@firsttrue\acp{#1}\AC@uppercase@firstfalse}%
\def\AC@acp#1{%
	\ifcsname fn@#1@PL\endcsname%
	\ifAC@uppercase@first%
	\expandafter\expandafter\expandafter\MakeUppercase\csname fn@#1@PL\endcsname%
	\else%
	\csname fn@#1@PL\endcsname%
	\fi%
	\else%
	\AC@ac{#1}s%
	\fi%
}%
\def\Acfp#1{\AC@uppercase@firsttrue\acfp{#1}\AC@uppercase@firstfalse}%
\def\AC@acfp#1{%
	\ifcsname fn@#1@PL\endcsname%
	\ifAC@uppercase@first%
	\expandafter\expandafter\expandafter\MakeUppercase\csname fn@#1@PL\endcsname%
	\else%
	\csname fn@#1@PL\endcsname%
	\fi%
	\else%
	\AC@acf{#1}s%
	\fi%
}%
\def\Acsp#1{\AC@uppercase@firsttrue\acsp{#1}\AC@uppercase@firstfalse}%
\def\AC@acsp#1{%
	\ifcsname fn@#1@PL\endcsname%
	\ifAC@uppercase@first%
	\expandafter\expandafter\expandafter\MakeUppercase\csname fn@#1@PL\endcsname%
	\else%
	\csname fn@#1@PL\endcsname%
	\fi%
	\else%
	\AC@acs{#1}s%
	\fi%
}%
\edef\AC@uppercase@write{\string\ifAC@uppercase@first\string\expandafter\string\MakeUppercase\string\fi\space}%
\def\AC@acrodef#1[#2]#3{%
	\@bsphack%
	\protected@write\@auxout{}{%
		\string\newacro{#1}[#2]{\AC@uppercase@write #3}%
	}\@esphack%
}%
\def\Acl#1{\AC@uppercase@firsttrue\acl{#1}\AC@uppercase@firstfalse}
\def\Acf#1{\AC@uppercase@firsttrue\acf{#1}\AC@uppercase@firstfalse}
\def\Ac#1{\AC@uppercase@firsttrue\ac{#1}\AC@uppercase@firstfalse}
\def\Acs#1{\AC@uppercase@firsttrue\acs{#1}\AC@uppercase@firstfalse}
\acrodef{WSN}{wireless sensor network}
\acrodef{USRP}{universal software radio peripheral}
\acrodef{SN}{sensor node}
\acrodef{FC}{fusion center}
\acrodef{MAC}{multiple-access channel}
\acrodef{FL}{federated learning}
\acrodef{ED}{edge device}
\acrodef{CS}{compressed sensing}
\acrodef{ES}[BS]{base station}
\acrodef{DCN}{data center network}
\acrodef{RIS}{reconfigurable intelligent surfaces}
\acrodef{IMC}{in-memory computing}
\acrodef{FPGA}{field-programmable gate array}
\acrodef{SDR}{software-defined radio}
\acrodef{PS}{processing system}
\acrodef{SS}{soft synchronization}
\acrodef{IQ}{in-phase/quadrature}
\acrodef{IP}{intellectual property}
\acrodef{DMA}{direct-memory access}
\acrodef{RAM}{random access memory}
\acrodef{CC}{companion computer}
\acrodef{FEE}{function estimation error}
\acrodef{MSK}{minimum-shift keying}
\acrodef{TDMA}{time-domain multiple access}
\acrodef{PLNC}{physical-layer network coding}
\acrodef{UAV}{unmanned aerial vehicle}
\acrodef{LoRa}{Long-Range}
\acrodef{DC}{direct-current}
\acrodef{DAC}{digital-to-analog converter}
\acrodef{ADC}{anlog-to-digital converter}
\acrodef{CS}{complementary sequence}
\acrodef{GCP}{Golay complementary pair}
\acrodef{ANF}{algebraic normal form}
\acrodef{AACF}{aperiodic auto-correlation function}
\acrodef{RM}{Reed-Muller}
\acrodef{PUCCH}{physical uplink control channel}
\acrodef{PRACH}{physical random access channel}
\acrodef{OBO}{output-power back-off}
\acrodef{ACLR}{adjacent-channel-leakage ratio}
\acrodef{LDPC}{low-density parity check}
\acrodef{PDF}{probability density function}
\acrodef{CDF}{cumulative distribution function}
\acrodef{TBMA}{type-based multiple access}
\acrodef{MSFE}{mean-squared function error}
\acrodef{FEE}{function-estimation error}
\acrodef{CER}{computation error rate}
\acrodef{BCER}{block-computation error rate}
\acrodef{CFO}{carrier frequency offset}
\acrodef{TO}{time offset}
\acrodef{PO}{phase offset}
\acrodef{RSSI}{received signal strength  information}
\acrodef{STLC}{space-time line code}
\acrodef{CCI}{co-channel interference}
\acrodef{CSIT}[CSIT]{\ac{CSI} at the transmitter}
\acrodef{CSIR}[CSIR]{\ac{CSI} at the receiver}
\acrodef{MIMO}{multiple-input-multiple-output}
\acrodef{PC}{phase correction}
\acrodef{ZF}{zero-forcing}
\acrodef{ANOVA}{analysis of variance}
\acrodef{PCA}{principal component analysis}
\acrodef{TIG}{Technical Interest Group}
\acrodef{FSK}{frequency-shift keying}
\acrodef{PPM}{pulse-position modulation}
\acrodef{PAM}{pulse-amplitude modulation}
\acrodef{MRC}{maximum-ratio combining}
\acrodef{HP}{hard-coded participation}
\acrodef{HPA}{hard-coded participation with absentees}
\acrodef{SP}{soft-coded participation}
\acrodef{FSK-MV}{\ac{FSK}-based \ac{MV}}
\acrodef{RF}{radio-frequency}
\acrodef{MF}{matched filter}
\acrodef{PPM}{pulse-position modulation}
\acrodef{CSK}{chirp-shift keying}
\acrodef{PPM-MV}[PPM-MV]{\ac{PPM}-based \ac{MV}}
\acrodef{DFT-s-OFDM}{\ac{DFT}-spread \ac{OFDM}}
\acrodef{SC}{single-carrier}
\acrodef{SGD}{stochastic gradient descent}
\acrodef{signSGD}{sign stochastic gradient descent}
\acrodef{SL}{split learning}
\acrodef{SNR}{signal-to-noise ratio}
\acrodef{RMSE}{root-mean-squared error}
\acrodef{OFDM}{orthogonal frequency division multiplexing}
\acrodef{DFT}{discrete Fourier transform}
\acrodef{PSK}{phase-shift keying}
\acrodef{QAM}{quadrature amplitude modulation}
\acrodef{QPSK}{quadrature phase-shift keying}
\acrodef{PMEPR}{peak-to-mean envelope power ratio}
\acrodef{BER}{bit-error ratio}
\acrodef{SNR}{signal-to-noise ratio}
\acrodef{PSD}{power spectral density}
\acrodef{SE}{spectral efficiency}
\acrodef{CP}{cyclic prefix}
\acrodef{AWGN}{additive white Gaussian noise}
\acrodef{CFR}{channel frequency response}
\acrodef{CIR}{channel impulse response}
\acrodef{MMSE}{minimum mean-squared error}
\acrodef{LMMSE}{linear minimum mean-squared error}
\acrodef{BPSK}{binary phase shift keying}
\acrodef{BPSK}{quadrature phase shift keying}
\acrodef{BLER}{block-error rate}
\acrodef{ML}{maximum likelihood}
\acrodef{PHY}{physical layer}
\acrodef{PA}{power amplifier}
\acrodef{IDFT}{inverse discrete Fourier transform}
\acrodef{DoF}{degrees-of-freedom}
\acrodef{IoT}{Internet-of-Things}
\acrodef{FDE}{frequency-domain equalization}
\acrodef{RF}{radio-frequency}
\acrodef{IM}{index modulation}
\acrodef{MF}{matched filter}
\acrodef{PPM}{pulse-position modulation}
\acrodef{MSE}{mean-squared error}
\acrodef{MRT}{maximum-ratio transmission}
\acrodef{ERC}{equal-ratio combining}
\acrodef{BAA}{broadband analog aggregation}
\acrodef{OBDA}{one-bit broadband digital aggregation}
\acrodef{FEEL}{federated edge learning}
\acrodef{FL}{federated learning}
\acrodef{UL}{uplink}
\acrodef{DL}{downlink}
\acrodef{OAC}{over-the-air computation}
\acrodef{TCI}{truncated-channel inversion}
\acrodef{MV}{majority vote}
\acrodef{CNN}{convolution neural network}
\acrodef{ReLU}{rectified-linear unit}
\acrodef{CSI}{channel state information}
\acrodef{PAPR}{peak-to-average power ratio}
\acrodef{SC}{single-carrier}
\acrodef{iid}[IID]{independent and identically distributed}
\acrodef{RMS}{root-mean-square}
\acrodef{4G}{Fourth Generation}
\acrodef{5G}{Fifth Generation}
\acrodef{NR}{New Radio}
\acrodef{LTE}{Long-Term Evolution}
\acrodef{DFT-s-OFDM}{\ac{DFT}-spread \ac{OFDM}}
\acrodef{OFDMA}{orthogonal frequency division multiple access}
\acrodef{HARQ}{hybrid automatic repeat request}
\acrodef{D2D}{Device-to-Device}
\acrodef{NOMA}{non-orthogonal multiple access}
\acrodef{OMA}{orthogonal multiple access}
\acrodef{IMT}{International Mobile Telecommunications}
\acrodef{ITU}{International Telecommunication Union}
\begin{document}

\title{
	{Reliable  Majority Vote Computation with Complementary Sequences for  UAV Waypoint Flight Control}\\
	\thanks{Alphan~\c{S}ahin and Xiaofeng~Wang are with the Electrical  Engineering Department,
		University of South Carolina, Columbia, SC, USA. E-mail: asahin@mailbox.sc.edu, wangxi@cec.sc.edu}
		\thanks{This paper was presented in part at the IEEE Military Communications Conference 2023  \cite{sahinMILCOMuav_2023}.}	
	\author{Alphan~\c{S}ahin,~\IEEEmembership{Member,~IEEE} and Xiaofeng~Wang,~\IEEEmembership{Member,~IEEE}} 
}
\maketitle

\begin{abstract}
In this study, we propose a non-coherent \acl{OAC} scheme to  calculate the \ac{MV} reliably in fading channels. The proposed approach relies on modulating the amplitude of the elements of \aclp{CS} based on the sign of the parameters to be aggregated. Since it does not use channel state information at the nodes, it is compatible with time-varying channels. To demonstrate the efficacy of our method, we employ it in a scenario where an \acl{UAV} is guided by distributed sensors, relying on the MV computed using our proposed scheme.  We show that the proposed scheme notably reduces the computation error rate  with a longer sequence length in fading channels while maintaining the peak-to-mean-envelope power ratio of the transmitted orthogonal frequency division multiplexing signals to be less than or equal to 3~dB.

\end{abstract}
\begin{IEEEkeywords}
Complementary sequences, OFDM, over-the-air computation, power amplifier non-linearity.
\end{IEEEkeywords}
\section{Introduction}
\acresetall

Multi-user interference is often considered an undesired phenomenon for communication systems as it can degrade the link performance. In contrast, the same underlying phenomenon, i.e., the signal superposition property of wireless multiple-access channels, can be very useful in the computation of special mathematical functions by harnessing the additive nature of the wireless channel. The gain obtained with \ac{OAC} is that the resource usage can be reduced to a one-time cost, which otherwise scales with the number of devices \cite{Altun_2021survey,sahinSurvey2023,Zheng_2023}. Hence, \ac{OAC} can benefit applications  by reducing the latency when a large number of nodes participate in computation over limited resources. Nevertheless, the transmitted signals for \ac{OAC} are perturbed by noise and distorted by fading and hardware impairments such as \ac{PA} non-linearity and synchronization errors. To address these issues, in this work, we propose a non-coherent \ac{OAC} method for \ac{MV} computation with \acp{CS} \cite{Golay_1961}. We demonstrate its efficacy for an \ac{UAV} waypoint flight control scenario,  where a \ac{UAV} is guided by many distributed sensors harnessing the multi-user interference to use the limited wireless resources as efficiently as possible.

\subsection{Related Work and Challenges}
\subsubsection{Over-the-air computation}
The idea of function computation over a \ac{MAC} was first thoroughly analyzed in Bobak's pioneering work in \cite{Nazer_2007}. In \cite{goldenbaum2013harnessing} and \cite{goldenbaum2015nomographic}, Goldenbaum shows that \ac{OAC} can be utilized to compute a family of functions, i.e., nomographic functions, including arithmetic mean, norm, polynomial function, maximum, and \ac{MV}.
OAC  has recently gained momentum with an increased number of applications where the ultimate purpose of  communications is computation. For example, the authors in \cite{chen2021distributed,Sahin_2022MVjournal,Guangxu_2020,Guangxu_2021}  implement \ac{FL} \cite{pmlr-v54-mcmahan17a} over a wireless network, where \ac{OAC} is used for aggregating a large number of gradients or model parameters of the edge devices at an edge server efficiently. Similarly, \ac{OAC} is considered for split learning in \cite{Krouka_2021} to aggregate smashed data, i.e., the outputs of a neural network. 
We refer the readers to \cite{Altun_2021survey,sahinSurvey2023,Zheng_2023,hellstrom2020wireless}, and \cite{Zhibin_2022oac} and the references therein for \ac{OAC} and its exciting applications, such as distributed localization, wireless data centers, and wireless control systems. 

The primary challenge of computing functions via signal superposition is that the receiver observes the superposition of the  signals {\em distorted} by the wireless channels between the receiver and transmitters. To address this issue, a large number of studies adopt pre-equalization techniques,  where the parameters desired to be aggregated are distorted with the reciprocals of the channel coefficients before the transmission so that a coherent superposition is achieved at the receiver \cite{Amiri_2020,Guangxu_2020,Guangxu_2021,Wei_2022}. This approach can provide excellent results when the phase synchronization among the devices can be maintained. Furthermore, as the effective channel corresponds to an \ac{AWGN} channel, the reliability of \ac{OAC} can be improved further with lattice codes \cite{goldenbaum2015nomographic,Nazer_2007,Lan_2023,XieICC_2024}. However, in practice, it is very difficult to maintain the phase synchronization  as the phase response of the {\em composite} wireless channel (i.e., including the responses of the transmitter and receiver) is a strong function of mobility and hardware impairments such as clock errors, residual \ac{CFO}, and time-synchronization errors. For instance,  a single sample deviation can cause large phase rotations in the frequency domain \cite{sahinDemo2022} and only a $\pm15$-degree phase mismatch across the transmitters can degrade the \ac{OAC} performance dramatically \cite[Fig. 4]{XieICC_2024}. In \cite{Haejoon_2021}, it is shown that non-stationary  channel conditions in a \ac{UAV} network can severely deteriorate the coherent signal superposition. 
In \cite{Amiria_2021} and \cite{Busra_2023}, a more practical scheme where the transmitters are blind (i.e.,  no \ac{CSI} at the transmitters) while the receiver has an estimate of the aggregated \ac{CSI} (i.e., the sum of fading coefficients across the links) is considered. It is shown that using a large number of antennas at the receiver can mitigate the interference components arising due to the inner product operation to estimate the superposed values. However, this approach may not be viable when the computing node has limited space and battery life. To overcome the phase synchronization bottleneck, another approach is to use non-coherent OAC  at the expense of sacrificing more resources. For instance, two orthogonal resources  are allocated to compute an \ac{MV} function in \cite{Sahin_2022MVjournal} and \cite{safiMVlocOJCS} by exploiting the energy accumulation via modulation techniques such as \ac{FSK}, \ac{PPM}, and \ac{CSK}. In \cite{sahinGC_2022},  the authors demonstrate this approach in practice by using five Adalm Pluto \acp{SDR} to train a neural network  without phase synchronization across the transmitters. Similarly, in \cite{Akshay_IPSN2020}, orthogonal resources are used  for negative- and positive-valued aggregation with \ac{FSK} and \ac{CSK}.  By using more resources along with a decomposition relying on a balanced number system, a quantized \ac{OAC} is 
investigated in \cite{sahinGCbalanced_2022}. The aforementioned non-coherent techniques exploit type-function, i.e., frequency histogram, by defining discrete classes to compute functions such as arithmetic mean, maximum, minimum, and median \cite{Mergen_2006tsp,Mergen_2007}. 
In \cite{Goldenbaum_2013tcom} and \cite{Goldenbaum_wcl2014}, random unimodular sequences are proposed to be utilized at the devices. In this method, a transmitter modulates the energy of the sequence with the parameter to be aggregated. At the receiver, the energy of the received superposed sequence is calculated  for continuous-valued OAC at the expense of interference components. A proof of concept demonstration regarding Goldenbaum's approach is provided in \cite{Kortke_2014}. Nevertheless,
 achieving reliable computation with a non-coherent OAC technique in a fading channel is still an unsolved issue in the literature.

The second challenge for reliable computation arises because the received signal powers of the nodes need to be similar, if not identical. 
If a transmitted signal for \ac{OAC} has a large \ac{PMEPR}, it can result in a reduced cell size due to the power back-off or a higher adjacent channel interference due to the \ac{PA} saturation \cite{safiMVlocOJCS}. The adjacent channel interference can also increase further due to the simultaneous transmissions from many nodes participating in \ac{OAC}. In the literature, few \ac{OAC} schemes are analyzed from the perspective of \ac{PMEPR}. To reduce \ac{PMEPR}, chirps and \ac{SC} waveforms are used in \cite{safiMVlocOJCS} and \cite{Cai_2018}, respectively. It is well-known that the \ac{PMEPR} of the \ac{OFDM} signals using \acp{CS} is less than or equal to 3~dB while achieving some coding gain \cite{davis_1999}. However, to the best of our knowledge, \acp{CS} have not been utilized for reliable \ac{OAC} while reducing the dynamic range of the transmitted \ac{OFDM} signals.

\subsubsection{Wireless Control Systems}
Suppose a control unit needs to receive feedback from a larger number of  sensors over wireless links for a control application. In this scenario,
without OAC, orthogonal wireless resources must be allocated to the sensors to receive sensory data, and the control unit must wait for the acquisition to be completed to perform the desired computation. As a result, the latency (or resource consumption) increases linearly with the number of sensors, which can cause an unstable system response or a slower system. To address this issue, the stability of a dynamic plant is investigated under limited wireless resources in \cite{Cai_2018}, and \ac{OAC} is exploited to compute the feedback from a large number of distributed sensors as quickly as possible to ensure  the stability of a dynamic plant. In \cite{Park_2021}, a general state-space model of a discrete-time
linear time-invariant system is proposed to be computed with \ac{OAC}. In \cite{Jihoon_2022Platooning}, \ac{OAC} is utilized to achieve mean consensus for a vehicle platooning application. It is shown that all vehicles converge to a specific value proportional to the average position without using an orthogonal multiplexing technique.  With the same motivation for reducing the latency, in this work, we consider a scenario where a UAV receives feedback from distributed erroneous sensors to infer its flying direction as quickly as possible so that it can  fly stably. To our knowledge, the \ac{UAV} waypoint flight control scenario has not been investigated in the literature by taking \ac{OAC} into account. 
It is also worth mentioning that  \ac{OAC} is investigated in specific scenarios that involve \acp{UAV}. For instance, in \cite{Xiang_ojc2022}, the \acp{UAV} compute the arithmetic mean of ground sensor readings with OAC. In \cite{Min_uav2022}, \ac{UAV} trajectories are optimized based on the locations of the sensors. However, the purpose of OAC is not wireless control in these papers.  
%
%

\subsection{Contributions}
In this study, we focus on computing \acp{MV} reliably in fading channels. Our contributions  can be listed as follows:

\begin{itemize}
\item We propose  a new non-coherent \ac{OAC} scheme based on \acp{CS} \cite{sahin_2020gm} to improve the robustness of computation against fading channels while limiting the dynamic range of transmitted signals to mitigate the distortion due to hardware non-linearity. Since the proposed approach does not rely on the availability of \ac{CSI} at the transmitters and receiver, it also provides robustness against time-varying channels and time-synchronization errors. 

\item By extending our preliminary work in \cite{sahinMILCOMuav_2023},  we rigorously analyze the \ac{CER} of the proposed \ac{OAC} scheme. We derive the \ac{CER} in Corollary~\ref{cor:cerpqzKpKnKz} and Corollary~\ref{cor:cerpqz} based on Lemma~\ref{lemma:errProbGivenVote}.

\item We demonstrate the applicability of the proposed method to a \ac{UAV} flight control scenario based on \ac{MV} computation. We provide the corresponding convergence analysis and show that the proposed approach is globally uniformly ultimately bounded in mean square in Theorem~\ref{th:oacMVstable}.

\item We support our findings with comprehensive simulations. We also generate numerical results based on Goldenbaum's \ac{OAC} scheme in \cite{Goldenbaum_2013tcom} to provide a comparative analysis.
\end{itemize}

{\em Organization:} The rest of the paper is organized as follows. Section~\ref{sec:system} provides the notation and  preliminary discussions used in the rest of the sections. 
In Section~\ref{sec:scheme}, the proposed \ac{OAC} scheme is discussed in detail. In Section~\ref{sec:performance}, we theoretically analyze the \ac{CER} of the proposed scheme.
In Section~\ref{sec:convergence}, the convergence of the \ac{UAV} waypoint flight control is discussed. In Section~\ref{sec:numerical}, we assess the proposed scheme numerically. We conclude the paper in Section \ref{sec:conclusion}. A summary of the notation used throughout the paper is given in \tablename~\ref{table:notation}.
\begin{table}[t]
	\caption{Notation summary.}
	\centering
		\resizebox{\if\IEEEsubmission1 6in \else 3.5in \fi}{!}{	
		\begin{tabular}{@{}p{0.625in}|p{ \if\IEEEsubmission1 5in \else 2.75in \fi  }}
			\textbf{Notation} & \textbf{Meaning of the notation}\\
			\hline\hline
			$\complexNumbers$ & The sets of complex numbers \\\hline
			$\realNumbers$ & The sets of real numbers \\\hline
			$\integers_H$ & The sets of integers modulo $H$ \\\hline
			$\integers^\numberOfIterations_\numberOfPointsForPSK$ &
			The set of $\numberOfIterations$-dimensional integers where each element is in $\integers_\numberOfPointsForPSK$\\\hline
			$\expectationOperator[\cdot][x]$ & The expectation of its  argument over all random variables\\\hline
			$\complexGaussian[0][\sigma^2]$ & A zero-mean symmetric complex Gaussian distribution with variance $\sigma^2$ \\ \hline
			$\uniformDistribution[a][b]$ & The uniform distribution with the support between $a$ and $b$ \\\hline
			$\normalCDF[\cdot]$& The \ac{CDF} of the standard normal distribution \\\hline
			$\CDF[x][][a;b]$& The \ac{CDF} of a random variable $x$ evaluated at $a$ for a given parameter $b$ \\\hline
			$\charFcn[t]$ & The characteristic function of a random variable $x$, i.e., $\expectationOperator[\constante^{\constantj t x}][]$ \\\hline
			$\probability[A;x]$ & The probability of  the event $A$ with a parameter $x$\\\hline
			$\probability[A|B]$ & The conditional probability of an event $A$ given the event $B$\\\hline
			 $\oneVector[L]$ & A vector of length $L$, where its elements are only $1$\\\hline
			 $\zeroVector[L]$ & A vector of length $L$, where its elements are only $0$\\\hline
			 $(\eleGa[{\indexEleOfSeq}])_{i=0}^{\lengthGaGb-1}$ & A sequence of length $\lengthGaGb$, i.e., $\seqGa=(\eleGa[0],\eleGa[1],\dots, \eleGa[\lengthGaGb-1])$ \\\hline
			 $x^*$& The complex conjugate of $x\in\complexNumbers$\\\hline
			 $\max\{a,b\}$& The maximum element of $(a,b)$\\\hline
			 $\signNormal[\cdot]$& The signum function\\\hline
			 $\floor{\cdot}$& The floor function\\\hline
			 $\ceil{\cdot}$& The ceiling function\\\hline
			 $\funcfForANF:\integers^\numberOfIterations_2\rightarrow\realNumbers$ & A pseudo-Boolean function\\\hline
			 $\constante$ & Euler's constant\\\hline
			 $\constantj$ & $\sqrt{-1}$\\\hline
		\end{tabular}
	}
	\label{table:notation}
\end{table}

\section{System Model}
\label{sec:system}
Consider a scenario where a \ac{UAV} flies from one point of interest $(\locationInitialEle[1],\locationInitialEle[2],\locationInitialEle[3])$ to another point of interest $(\locationTargetEle[1],\locationTargetEle[2],\locationTargetEle[3])$. Suppose that the \ac{UAV} cannot localize its location in the room. However, it can receive feedback from $\numberOfEdgeDevices\ge1$  sensors\footnote{To localize the UAVs, the sensors can rely on computer vision techniques and use wide-angle cameras. For example, the sensor can exploit the changes in the features of the environment by comparing them with those previously created 3-D maps or use stereo vision capture and depth maps, as discussed in \cite{MUSTAFAH2012575} and \cite{Carrio_2020}.} deployed in the room about the velocity of the \ac{UAV} on the $x$-, $y$-, and $z$-axis for every $\Trefresh$ seconds. Based on the feedback from the sensors, the \ac{UAV} updates its position at the $\indexRound$th round for the $x$-, $y$-, and $z$-axis, denoted by $\locationEle[\indexRound][1]$, $\locationEle[\indexRound][2]$, and $\locationEle[\indexRound][3]$, respectively, as
\begin{align}
	\locationEle[\indexRound+1][\indexCoordinate] &= \locationEle[\indexRound][\indexCoordinate] -\Trefresh\velocityVectorEle[\indexRound][\indexCoordinate]~, \label{eq:dynamics}
\end{align}
where
\begin{align}
	\velocityVectorEle[\indexRound][\indexCoordinate]&= 
	\begin{cases}
		\max\{{\updateRate\feedbackEle[\indexRound][\indexCoordinate]},-\maximumVelocity\}& \feedbackEle[\indexRound][\indexCoordinate]<0\\
		\min\{{\updateRate\feedbackEle[\indexRound][\indexCoordinate]},\maximumVelocity\}& \feedbackEle[\indexRound][\indexCoordinate]\ge0
	\end{cases}
	\label{eq:clamp},
\end{align}
 for $\locationEle[0][\indexCoordinate]\triangleq\locationInitialEle[\indexCoordinate]$, $\forall\indexCoordinate\in\{1,2,3\}$. In \eqref{eq:dynamics} and \eqref{eq:clamp}, $\velocityVectorEle[\indexRound][\indexCoordinate]$ is the velocity at the $\indexRound$th round for the $\indexCoordinate$th coordinate,  $\maximumVelocity>0$ is the maximum velocity of the \ac{UAV}, $\updateRate>0$ is the update rate, $\feedbackEle[\indexRound][\indexCoordinate]$ is the velocity-update strategy given by
\begin{align}
	\feedbackEle[\indexRound][\indexCoordinate]=\begin{cases}
		\displaystyle\frac{1}{\numberOfEdgeDevices}\sum_{\indexED=1}^{\numberOfEdgeDevices}\locationEstimateEle[\indexRound][\indexED,\indexCoordinate]-\locationTargetEle[\indexCoordinate],&\text{Cont. (ideal)}\\
		\underbrace{\signNormal[{\sum_{\indexED=1}^{\numberOfEdgeDevices}{\underbrace{\signNormal[{\locationEstimateEle[\indexRound][\indexED,\indexCoordinate]-\locationTargetEle[\indexCoordinate]}]}_{\triangleq\voteVectorEDEleCoordinate[\indexED,\indexCoordinate][\indexRound]}}}]}_{\triangleq\majorityVoteEle[\indexCoordinate]},
		&\text{MV (ideal)}\\
		\majorityVoteDetectedEle[\indexCoordinate],&\text{MV (OAC)}
	\end{cases},
	\label{eq:strategies}
\end{align}
where 
$\locationEstimateEle[\indexRound][\indexED,\indexCoordinate]=\locationEle[\indexRound][\indexCoordinate]+\locationEstimationErrorEle[\indexRound][\indexED,\indexCoordinate]$ is an estimate of the $\indexCoordinate$th coordinate of the UAV position at the $\indexED$th sensor, $\voteVectorEDEleCoordinate[\indexED,\indexCoordinate][\indexRound]$ is the $\indexED$th sensor's vote for the $\indexCoordinate$th coordinate, $\locationEstimationErrorEle[\indexRound][\indexED,\indexCoordinate]$ is a zero-mean Gaussian variable with the variance $\varianceSensor$, and $\majorityVoteEle[\indexCoordinate]$ and $\majorityVoteDetectedEle[\indexCoordinate]$ denote the
 \ac{MV} computed under perfect communications and the  \ac{MV} obtained with the proposed scheme (i.e., \eqref{eq:decision}) for the $\indexCoordinate$th coordinate, respectively. We use the term {\em ideal} to imply perfect communication between the sensors and the UAV.

In \eqref{eq:strategies}, the UAV averages the sensors’ outputs  to decide which direction to fly for the continuous case. In the MV cases, such  averaging is not possible. Instead, the UAV goes in the direction (with the length $\pm\updateRate\Trefresh$) on each coordinate based on the majority of the sensors’ output. Note that an \ac{MV}-based update was previously investigated in machine learning literature, showing that an MV-based update is related to an update based on the sign of the median value \cite{Chennips_2020}. Finally, we note that the model can be more sophisticated than the one in \eqref{eq:dynamics} and take other dynamics, such as \ac{UAV} imperfections, into account~\cite{bouabdallah2007full}. Since our paper focuses on \ac{OAC}, we use \eqref{eq:dynamics} as a baseline control model to assess the proposed OAC scheme.

\subsection{Complementary Sequences}
Let $\seqGa=(\eleGa[{\indexEleOfSeq}])_{i=0}^{\lengthGaGb-1}\triangleq(\eleGa[0],\eleGa[1],\dots, \eleGa[\lengthGaGb-1])$ be a sequence  of length $\lengthGaGb$ for $\eleGa[{\indexEleOfSeq}]\in\complexNumbers$ and $\eleGa[\lengthGaGb-1]\neq0$. We associate the sequence $\seqGa$ with the polynomial
$\polySeq[\seqGaP][\polyVariable] = \eleGa[\lengthGaGb-1]\polyVariable^{\lengthGaGb-1} + \eleGa[\lengthGaGb-2]\polyVariable^{\lengthGaGb-2}+ \dots + \eleGa[0]$
in indeterminate $\polyVariable$. The \ac{AACF} of the sequence $\seqGa$ given by
\begin{align}
	\apac[\seqGa][\lagForCorrelation]\triangleq
	\begin{cases}
		\sum_{\indexEleOfSeq=0}^{\lengthGaGb-\lagForCorrelation-1} \eleGa[\indexEleOfSeq]^*\eleGa[\indexEleOfSeq+\lagForCorrelation], & 0\le\lagForCorrelation\le\lengthGaGb-1\\
		\sum_{\indexEleOfSeq=0}^{\lengthGaGb+\lagForCorrelation-1} \eleGa[\indexEleOfSeq]\eleGa[\indexEleOfSeq-\lagForCorrelation]^*, & -\lengthGaGb+1\le\lagForCorrelation<0\\
		0,& \text{otherwise}
	\end{cases}~.
\end{align}
If $\apac[\seqGa][\lagForCorrelation]+\apac[\seqGb][\lagForCorrelation] = 0$ holds for $\lagForCorrelation\neq0$, 
the sequences $\seqGa$ and $\seqGb$ are referred to as \acp{CS} \cite{Golay_1961}. It can be shown the \ac{PMEPR} of an \ac{OFDM} symbol constructed based on a \ac{CS} is less than or equal to 3~dB \cite{davis_1999}.

Let $\funcfForANF(\seqx)$ be a map from $\integers^\numberOfIterations_2=\{\seqx\triangleq(\monomial[1],\monomial[2],\dots, \monomial[\numberOfIterations])|\forall \monomial[\indexFirstOrderMonomial]\in\integers_2\}$ to $\realNumbers$ as $\funcfForANF:\integers^\numberOfIterations_2\rightarrow\realNumbers$, i.e., a pseudo-Boolean function.
A family of \acp{CS} can be obtained by using pseudo-Boolean functions  as follows:
\begin{theorem}[{\cite{sahin_2020gm}}]
	\label{th:reduced}
	Let 
	$\seqPermutationCompShift=(\permutationMono[\indexIteration])_{\indexIteration=1}^{\numberOfIterations}$ be a permutation of $\{1,2,\dots,\numberOfIterations\}$. For any $\numberOfPointsForPSK,\numberOfIterations\in\integersPositive$, $\scaleEexp[\indexIteration],\scaleEexp[0]\in\realNumbers$, and $\angleexpAll[\indexIteration],\angleexpAll[0] \in \integers_\numberOfPointsForPSK$ for $\indexIteration\in\{1,2,\mydots,\numberOfIterations\}$, let
	\begin{align}
&\funcfForFinalAmplitude(\seqx)
= \sum_{\indexIteration=1}^{\numberOfIterations}\scaleEexp[\indexIteration]\monomialAmp[{\permutationMono[{\indexIteration}]}]+\scaleEexp[0]\label{eq:realPartReduced}~,
\\		
		&\funcfForFinalPhase(\seqx)
		= {\frac{\numberOfPointsForPSK}{2}\sum_{\indexIteration=1}^{\numberOfIterations-1}\monomial[{\permutationMono[{\indexIteration}]}]\monomial[{\permutationMono[{\indexIteration+1}]}]}+\sum_{\indexIteration=1}^\numberOfIterations \angleexpAll[\indexIteration]\monomial[{\permutationMono[{\indexIteration}]}]+  \angleexpAll[0]\label{eq:imagPartReduced}~,
	\end{align}
where  $
	\monomialAmp[{\permutationMono[{\indexIteration}]}]$ is $
		(\monomial[{\permutationMono[{\indexIteration}]}] +\monomial[{\permutationMono[{\indexIteration+1}]}])_2$ and 	$\monomial[{\permutationMono[{\numberOfIterations}]}]$ for $ \indexIteration<\numberOfIterations$ and $\indexIteration=\numberOfIterations$, respectively. 	Then, the sequence $\transmittedSeqNoRound[]=(\transmittedSeqEleNoRound[0],\mydots,\transmittedSeqEleNoRound[\lengthOfSequence-1])$, where  its associated polynomial  is  given by
	\begin{align}
		\polySeq[{\transmittedSeqP}][\polyVariable] &= 
		\sum_{\forall\seqx\in\integers^\numberOfIterations_2}\underbrace{
		\constante^{\funcfForFinalAmplitude(\seqx)}
		\constante^{\constantj\frac{2\pi}{\numberOfPointsForPSK}\funcfForFinalPhase(\seqx)}  }_{\transmittedSeqEleNoRound[\funcEnum(\seqx)]} 
		\polyVariable^{\funcEnum(\seqx)}~,\label{eq:encodedFOFDMonly}
	\end{align}
	is a \ac{CS} of length $\lengthOfSequence=2^\numberOfIterations$, where $\funcEnum(\seqx) \triangleq   \sum_{\indexFirstOrderMonomial=1}^{\numberOfIterations}\monomial[\indexFirstOrderMonomial]2^{\numberOfIterations-\indexFirstOrderMonomial}$, i.e.,  a decimal representation of the binary number constructed using all elements in the sequence $\seqx$. 
\end{theorem}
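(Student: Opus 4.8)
The plan is to exhibit an explicit Golay mate for $\transmittedSeqNoRound[]$ and then verify the defining cancellation of aperiodic autocorrelations directly. Index the $\lengthOfSequence=2^\numberOfIterations$ entries of $\transmittedSeqNoRound[]$ by the binary vectors $\seqx\in\integers_2^\numberOfIterations$ through $\funcEnum$, so $\transmittedSeqEleNoRound[\funcEnum(\seqx)]=\constante^{\funcfForFinalAmplitude(\seqx)}\constante^{\constantj\frac{2\pi}{\numberOfPointsForPSK}\funcfForFinalPhase(\seqx)}$, and let $\transmittedSeqNoRound[]'$ be the sequence whose $\funcEnum(\seqx)$-th entry is $(-1)^{\monomial[{\permutationMono[1]}]}\transmittedSeqEleNoRound[\funcEnum(\seqx)]$; equivalently, $\transmittedSeqNoRound[]'$ is built as in \eqref{eq:encodedFOFDMonly} from $\funcfForFinalAmplitude$ and $\funcfForFinalPhase+\frac{\numberOfPointsForPSK}{2}\monomial[{\permutationMono[1]}]$, so it has exactly the magnitude profile of $\transmittedSeqNoRound[]$. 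By the definition of complementary sequences, it suffices to show $\apac[\transmittedSeqNoRound[]][\lagForCorrelation]+\apac[\transmittedSeqNoRound[]'][\lagForCorrelation]=0$ for every $\lagForCorrelation\in\{1,\dots,\lengthOfSequence-1\}$; negative lags follow by conjugate symmetry.

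Fix $\lagForCorrelation\ge1$; for a binary vector $\seqx$ with $\funcEnum(\seqx)+\lagForCorrelation\le\lengthOfSequence-1$ let $\seqx'$ be the binary vector with $\funcEnum(\seqx')=\funcEnum(\seqx)+\lagForCorrelation$. Since the sign $(-1)^{\monomial[{\permutationMono[1]}]}$ is real, the summand of $\apac[\transmittedSeqNoRound[]'][\lagForCorrelation]$ indexed by $\seqx$ equals $(-1)^{\monomial[{\permutationMono[1]}](\seqx)+\monomial[{\permutationMono[1]}](\seqx')}$ times that of $\apac[\transmittedSeqNoRound[]][\lagForCorrelation]$, so $\apac[\transmittedSeqNoRound[]][\lagForCorrelation]+\apac[\transmittedSeqNoRound[]'][\lagForCorrelation]$ equals twice the sum of $\transmittedSeqEleNoRound[\funcEnum(\seqx)]^{*}\transmittedSeqEleNoRound[\funcEnum(\seqx)+\lagForCorrelation]$ over the $\seqx$ whose $\permutationMono[1]$-th coordinate agrees with that of $\seqx'$ (all other terms cancel). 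For such a surviving $\seqx$ we have $\seqx\neq\seqx'$, so there is a smallest index $v\in\{2,\dots,\numberOfIterations\}$ at which the $\permutationMono[v]$-th coordinates of $\seqx$ and $\seqx'$ differ. I would pair each surviving $\seqx$ with the vector $\hat{\seqx}$ obtained by flipping the coordinates of $\seqx$ in positions $\permutationMono[1],\dots,\permutationMono[{v-1}]$, and likewise $\seqx'$ with $\hat{\seqx}'$. Because $\seqx$ and $\seqx'$ agree on those positions and differ at $\permutationMono[v]$, this map preserves $\lagForCorrelation$, keeps $\funcEnum(\hat{\seqx})$ and $\funcEnum(\hat{\seqx})+\lagForCorrelation$ in $\{0,\dots,\lengthOfSequence-1\}$, fixes $v$, and (as $v\ge2$) has no fixed points, so it is a genuine pairing of the surviving terms.

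The crux is that $\seqx\mapsto\hat{\seqx}$ turns $\transmittedSeqEleNoRound[\funcEnum(\seqx)]^{*}\transmittedSeqEleNoRound[\funcEnum(\seqx)+\lagForCorrelation]$ into its negative, which reduces to two claims. First, the magnitudes multiply to the same value: flipping the full prefix $\permutationMono[1],\dots,\permutationMono[{v-1}]$ telescopes through the exclusive-or structure of \eqref{eq:realPartReduced}, so that among $\monomialAmp[{\permutationMono[1]}],\dots,\monomialAmp[{\permutationMono[\numberOfIterations]}]$ only $\monomialAmp[{\permutationMono[{v-1}]}]=(\monomial[{\permutationMono[{v-1}]}]+\monomial[{\permutationMono[{v}]}])_2$ is affected; and since $\seqx,\seqx'$ agree at $\permutationMono[{v-1}]$ but disagree at $\permutationMono[{v}]$, the flip changes $\monomialAmp[{\permutationMono[{v-1}]}]$ by opposite amounts on $\seqx$ and on $\seqx'$, so $\funcfForFinalAmplitude(\seqx)+\funcfForFinalAmplitude(\seqx')$ is invariant. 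Second, the accumulated phase difference shifts by $\frac{\numberOfPointsForPSK}{2}$: in \eqref{eq:imagPartReduced} every linear term and every quadratic term $\monomial[{\permutationMono[{\indexIteration}]}]\monomial[{\permutationMono[{\indexIteration+1}]}]$ with $\indexIteration\le v-2$ changes identically on $\seqx$ and on $\seqx'$, hence cancels in $\funcfForFinalPhase(\seqx')-\funcfForFinalPhase(\seqx)$, leaving only $\frac{\numberOfPointsForPSK}{2}\monomial[{\permutationMono[{v-1}]}]\monomial[{\permutationMono[{v}]}]$, which contributes an extra $\pm\frac{\numberOfPointsForPSK}{2}\equiv\frac{\numberOfPointsForPSK}{2}\Mod{\numberOfPointsForPSK}$ to $\funcfForFinalPhase(\hat{\seqx}')-\funcfForFinalPhase(\hat{\seqx})$ compared with $\funcfForFinalPhase(\seqx')-\funcfForFinalPhase(\seqx)$. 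Together these give the sign flip, so each surviving term cancels against its partner, the sum vanishes, and $\transmittedSeqNoRound[]$ is a complementary sequence of length $2^\numberOfIterations$ with mate $\transmittedSeqNoRound[]'$.

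I expect the magnitude-invariance claim to be the main obstacle: it is the only place where the amplitude parameters $\scaleEexp[\indexIteration]$ couple to the combinatorics, and it is precisely what forces the pairing to flip the entire prefix $\permutationMono[1],\dots,\permutationMono[{v-1}]$ rather than the single vertex $\permutationMono[{v-1}]$ that suffices in the amplitude-free argument of \cite{davis_1999} --- a single-vertex flip would also disturb $\monomialAmp[{\permutationMono[{v-2}]}]$ in a way that does not cancel between $\seqx$ and $\seqx'$. An alternative, possibly cleaner, route is a recursive construction: assemble $(\transmittedSeqNoRound[],\transmittedSeqNoRound[]')$ from a length-$2^{\numberOfIterations-1}$ complementary pair by a single elementary step that scales one constituent by $\constante^{\scaleEexp[\numberOfIterations]}$ and delays/rotates the other, check that one such step preserves complementarity, and unroll the recursion along the path $\permutationMono[1]-\cdots-\permutationMono[\numberOfIterations]$ to recover \eqref{eq:realPartReduced}--\eqref{eq:imagPartReduced}; there the delicate part is keeping track of how the powers $\polyVariable^{\funcEnum(\seqx)}$ accumulate.
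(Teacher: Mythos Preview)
The paper does not prove Theorem~\ref{th:reduced}; it is imported verbatim from \cite{sahin_2020gm} and used as a black box, so there is no in-paper argument to compare against.

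On its own merits your argument is correct. It is the Davis--Jedwab pairing strategy, upgraded to accommodate the amplitude function $\funcfForFinalAmplitude$. The point you single out as delicate is indeed the crux: in the constant-amplitude case of \cite{davis_1999} one flips only the single coordinate $\permutationMono[{v-1}]$, but here that would also disturb $\monomialAmp[{\permutationMono[{v-2}]}]$; flipping the whole prefix $\permutationMono[1],\dots,\permutationMono[{v-1}]$ telescopes through the XOR structure so that only $\monomialAmp[{\permutationMono[{v-1}]}]$ moves, and since $\seqx,\seqx'$ agree at $\permutationMono[{v-1}]$ but disagree at $\permutationMono[{v}]$ the changes in $\funcfForFinalAmplitude$ cancel between $\seqx$ and $\seqx'$. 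Your checks that the map is a fixed-point-free involution preserving the lag $\lagForCorrelation$ also go through, because $\seqx$ and $\seqx'$ agree on the flipped prefix and hence $(\hat{\seqx},\hat{\seqx}')$ has the same $v$. The phase computation is fine for all $\numberOfPointsForPSK$, since $\constante^{\constantj\frac{2\pi}{\numberOfPointsForPSK}\cdot(\pm\numberOfPointsForPSK/2)}=-1$ regardless of parity.

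The recursive alternative you sketch --- a generalized Golay--Rudin--Shapiro iteration that scales one constituent by $\constante^{\scaleEexp[\indexIteration]}$ and phase-rotates/delays the other, unrolled along the path $\permutationMono[1]\text{--}\cdots\text{--}\permutationMono[\numberOfIterations]$ --- is the standard way such families are derived in the sequence-design literature and is the route taken in \cite{sahin_2020gm}. That approach makes the construction and the exponent bookkeeping on $\polyVariable^{\funcEnum(\seqx)}$ transparent; your direct pairing proof is self-contained and makes explicit exactly where the amplitude and phase contributions decouple in the cancellation.
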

Theorem~\ref{th:reduced} shows that the functions that determine the amplitude and the phase of the elements of the CS $\transmittedSeqNoRound[]$ (i.e., $\funcfForFinalAmplitude(\seqx)$ and $\funcfForFinalPhase(\seqx)$) and \ac{RM} codes have similar structures. The function $\funcfForFinalPhase(\seqx)$ is in the form of the cosets of the first-order \ac{RM} code within the second-order \ac{RM} code  \cite{davis_1999}. Notice that the mapping between $\{(\monomialAmp[1],\mydots,\monomialAmp[\numberOfIterations])\}$ and $\{(\monomial[1],\mydots,\monomial[\numberOfIterations])\}$ is bijective and results in a Gray code when the elements of the set $\{(\monomial[1],\mydots,\monomial[\numberOfIterations])\}$ are ordered  lexicographically~\cite{sahin_2020gm}. Hence, the function $\funcfForFinalAmplitude(\seqx)$ is also similar to the first-order \ac{RM} code, except that the operations occur in $\realNumbers$. We refer the readers to \cite{parker_2003} for various representations of \acp{CS} and their properties.

\subsection{Signal Model and Wireless Channel}
We assume that the sensors and the \ac{UAV}  are equipped with a single antenna.  Let  $\transmittedSeq[\indexED]=(\transmittedSeqEle[\indexED,0],\mydots,\transmittedSeqEle[\indexED,\lengthOfSequence-1])$ be a  \ac{CS} of length $\lengthOfSequence$ transmitted from the $\indexED$th sensor over an \ac{OFDM} symbol by mapping its elements to a set of contiguous subcarriers. Assuming that all sensors access the wireless channel simultaneously and the \ac{CP} duration is larger than the sum of the maximum time-synchronization error and the maximum-excess delay of the channel, we can express the polynomial representation of the received sequence $\receivedSeq[]=(\receivedSeqEle[0],\mydots,\receivedSeqEle[\lengthOfSequence-1])$ at the \ac{UAV} after the signal superposition as
\begin{align}
\polySeq[{\receivedSeqP}][\polyVariable] &=\sum_{\varMonomial=0}^{\lengthGaGb-1} \underbrace{\left(\sum_{\indexED=1}^{\numberOfEdgeDevices}
\channelAtSubcarrier[\indexED,\varMonomial] \sqrt{\transmitPower[\indexED]}
\transmittedSeqEle[\indexED,\varMonomial]+\noiseAtSubcarrier[\varMonomial]\right)}_{\receivedSeqEle[\varMonomial]}
\polyVariable^{\varMonomial}~,	
	\label{eq:symbolOnSubcarrier}
\end{align}
where  $\channelAtSubcarrier[\indexED,\varMonomial]\sim\complexGaussian[0][1]$ is the Rayleigh fading channel coefficient between the \ac{UAV} and the $\indexED$th sensor for the $\varMonomial$th element of the sequence unless otherwise stated, $\transmitPower[\indexED]$ is the average transmit power, and ${\noiseAtSubcarrier[\varMonomial]}\sim\complexGaussian[0][\noiseVariance]$ is the \ac{AWGN}. 

We assume that the {\em average} received signal powers of the sensors at the \ac{UAV} are aligned with a power control mechanism. This assumption is weak as the impact of the large-scale channel model on the average received signal power can be tracked well with the state-of-the-art closed-loop power control loops by using control channels such as  \ac{PUCCH} or \ac{PRACH} in 3GPP \ac{5G} \ac{NR} \cite{10.5555/3294673}. As a result, the relative positions of the sensors to the UAV do not change our analyses. Note that a similar assumption is also made in \cite{Guangxu_2020, Guangxu_2021, Busra_2023}, where the time-variation in the channel is captured by the realizations of $\channelAtSubcarrier[\indexED,\varMonomial]$ in \eqref{eq:symbolOnSubcarrier}. In this study,  without loss of generality, we set $\transmitPower[\indexED]$, $\forall\indexED$, to $1$~Watt and calculate the \ac{SNR} of a sensor at the \ac{UAV} as $\SNR=1/\noiseVariance$.

\subsection{Problem Statement}

Suppose that the fading coefficient $\channelAtSubcarrier[\indexED,\varMonomial]$  is not available at the $\indexED$th sensor and the \ac{UAV} due to  synchronization impairments, reciprocity calibration errors, or  mobility. Under this constraint, the main objective of the \ac{UAV} is to compute the \ac{MV} $\majorityVoteEle[\indexCoordinate]$, $\forall\indexCoordinate$, by exploiting the signal superposition property of the multiple-access channels. Our main goal is to obtain a scheme that computes the MVs with a low probability of incorrect detection without using the fading coefficients at the sensors and the UAV, while the \ac{PMEPR} of the transmitted signal is guaranteed to be less than a certain value.
Although the \acp{CS} generated with Theorem~\ref{th:reduced} can address the latter challenge by keeping  the \ac{PMEPR} of transmitted OFDM signals at most $3$~dB, it is not trivial to use them for \ac{OAC}. 
In the following section, we show that Theorem 1 can be utilized to construct an OAC scheme to compute MVs without using the CSI at the sensors and the UAV.

\section{Proposed Scheme}
\label{sec:scheme}
%
Given the number of parameters that can be chosen independently in Theorem~\ref{th:reduced}, we consider $\numberOfIterations$ MV computations. Let $\voteVectorED[\indexED]$ be the vector of $\numberOfIterations$ votes of the $\indexED$th sensor, i.e., $(\voteVectorEDEle[\indexED][1],\mydots,\voteVectorEDEle[\indexED][\numberOfIterations])$ for $\voteVectorEDEle[\indexED][\indexIteration]\in\{-1,0,1\}$, $\forall\indexIteration$. If $\voteVectorEDEle[\indexED][\indexIteration]=0$, i.e., an absentee vote, the $\indexED$th sensor does not participate in the $\indexIteration$th \ac{MV} computation. Note that the absentee votes have previously been shown to be useful  for addressing data heterogeneity for wireless federated learning along with \ac{OAC}  \cite{Sahin_2022MVjournal}.  In particular, to address the scenario discussed in Section~\ref{sec:system},  we set $\voteVectorEDEle[\indexED][\indexIteration]=\voteVectorEDEleCoordinate[\indexED,\indexIteration][\indexRound]$ for $\indexIteration=\{1,2,3\}$ and $\voteVectorEDEle[\indexED][\indexIteration]=0$ for $\indexIteration\in\{4,5,\mydots,\numberOfIterations\}$ without loss of generality, unless otherwise stated.

The proposed scheme modulates the amplitude of the elements of the \ac{CS} via $\funcfForFinalAmplitude(\seqx)$ as a function of the votes $\voteVectorED[\indexED]$ at the $\indexED$th sensor. To this end,  based on Theorem~\ref{th:reduced}, let us denote the functions used at the $\indexED$th sensor as $\funcfForFinalAmplitudeED[\indexED](\seqx)$ and $\funcfForFinalPhaseED[\indexED](\seqx)$, and their parameters as $\{\scaleEexpAtRound[\indexED,0][\indexRound],\scaleEexpAtRound[\indexED,1][\indexRound],\mydots,\scaleEexpAtRound[\indexED,\numberOfIterations][\indexRound]\}$ and $\{\angleexpAllAtRound[\indexED,0][\indexRound],\angleexpAllAtRound[\indexED,1][\indexRound],\mydots,\angleexpAllAtRound[\indexED,\numberOfIterations][\indexRound]\}$, respectively. To synthesize the transmitted sequence $\transmittedSeq[\indexED]$ of length $\lengthGaGb=2^\numberOfIterations$, we use a fixed permutation $\seqPermutationCompShift$ and  map $\voteVectorEDEle[\indexED][\indexIteration]$ to $\scaleEexpAtRound[\indexED,\indexIteration][\indexRound]$ as 
\begin{align}
	\scaleEexpAtRound[\indexED,\indexIteration][\indexRound]=\begin{cases}
		-\scalingParameters,&\voteVectorEDEle[\indexED][\indexIteration]=-1\\
		0,&\voteVectorEDEle[\indexED][\indexIteration]=0\\
		+\scalingParameters,&\voteVectorEDEle[\indexED][\indexIteration]=+1
	\end{cases},~\forall\indexIteration~,
\label{eq:modulation}
\end{align}
where $\scalingParameters>0$ is a scaling parameter. To ensure that the squared $\ell_2$-norm of the \ac{CS} $\transmittedSeq[\indexED]$ is  $2^\numberOfIterations$, i.e., $\norm{\transmittedSeq[\indexED]}_2^2=2^\numberOfIterations$, we choose $\scaleEexpAtRound[\indexED,0][\indexRound]$ as
\begin{align}
\scaleEexpAtRound[\indexED,0][\indexRound] =- \frac{1}{2}\sum_{\indexIteration=1}^{\numberOfIterations}\ln{\frac{1+\constante^{2\scaleEexpAtRound[\indexED,\indexIteration][\indexRound]}}{2}}.
\label{eq:normalizationCoef}
\end{align}
To derive \eqref{eq:normalizationCoef}, notice that $\scaleEexpAtRound[\indexED,\indexIteration][\indexRound]$ scales $2^{\numberOfIterations-1}$ elements of the \ac{CS} by $\constante^{\scaleEexpAtRound[\indexED,\indexIteration][\indexRound]}$ in \eqref{eq:realPartReduced} for $\indexIteration>0$. Therefore, $\norm{\transmittedSeq[\indexED]}_2^2$ is scaled by ${(1+\constante^{2\scaleEexpAtRound[\indexED,\indexIteration][\indexRound]}})/{2}$. By considering $\scaleEexpAtRound[\indexED,1][\indexRound],\mydots,\scaleEexpAtRound[\indexED,\numberOfIterations][\indexRound]$, the total scaling can be calculated as $\totalPowerScale\triangleq\prod_{\indexIteration=1}^{\numberOfIterations}{(1+\constante^{2\scaleEexpAtRound[\indexED,\indexIteration][\indexRound]}})/{2}$. Thus, $\constante^{2\scaleEexpAtRound[\indexED,0][\indexRound]}=1/\totalPowerScale$ must hold for $\norm{\transmittedSeq[\indexED]}_2^2=2^\numberOfIterations$, which results in \eqref{eq:normalizationCoef}.

With \eqref{eq:modulation} and \eqref{eq:normalizationCoef}, if $\voteVectorEDEle[\indexED][\indexIteration]\neq0$ for $\scalingParameters\rightarrow\infty$, one half of elements (i.e., the ones for $
\monomialAmp[{\permutationMono[{\indexIteration}]}]=0$)  of the \ac{CS} $\transmittedSeq[\indexED]$  are set to $0$  while the other half (i.e., the ones for $
\monomialAmp[{\permutationMono[{\indexIteration}]}]=1$) are scaled by a factor of $\sqrt{2}$ and the sign of $\voteVectorEDEle[\indexED][\indexIteration]$ determines which half is amplified. For $\voteVectorEDEle[\indexED][\indexIteration]=0$, the halves are not scaled.
\begin{table}[t]
	\caption{An example of encoded CSs based on votes for $\numberOfIterations=3$.}
	\centering
	\resizebox{3.5in}{!}{	
		\begin{tabular}{l|c|c|c|c|c|c|c|c}
			$\voteVectorED[\indexED]$ & $\transmittedSeqEle[\indexED,0]$ & $\transmittedSeqEle[\indexED,1]$ & $\transmittedSeqEle[\indexED,2]$ & $\transmittedSeqEle[\indexED,3]$  & $\transmittedSeqEle[\indexED,4]$  & $\transmittedSeqEle[\indexED,5]$ & $\transmittedSeqEle[\indexED,6]$ & $\transmittedSeqEle[\indexED,7]$ \\\hline
			$(0, 0, 0)$	& $1$ & $1$ & $1$ & $-1$ & $1$ & $1$ & $-1$ & $1$	\\	
			$(1, 0, 0)$	& $0$ & $\sqrt{2}$ & $\sqrt{2}$ & $0$ & $0$ & $\sqrt{2}$ & $-\sqrt{2}$ & $0$  \\
			$(1, 1, 0)$	& $0$ & $0$ & $2$ & $0$ & $0$ & $2$ & $0$ & $0$  \\
			$(1, 1, 1)$	& $0$ & $0$ & $0$ & $0$ & $0$ & $2\sqrt{2}$ & $0$ & $0$  \\
			$(1, 1, -1)$& $0$ & $0$ & $2\sqrt{2}$ & $0$ & $0$ & $0$ & $0$ & $0$\\
			$(1, -1, 0)$ & $0$ & $2$ & $0$ & $0$ & $0$ & $0$ & $-2$ & $0$  \\
			$(-1, 0, 0)$ & $\sqrt{2}$ & $0$ & $0$ & $-\sqrt{2}$ & $\sqrt{2}$ & $0$ & $0$ & $\sqrt{2}$  \\	
		\end{tabular}
	}
	\label{table:example}
\end{table}
\begin{example}
	\label{ex:sequences} \rm
Let $\seqPermutationCompShift=(3,2,1)$, $\numberOfPointsForPSK=2$, $\numberOfIterations=3$, $\angleexpAllAtRound[\indexIteration][\indexRound]=0$, $\forall\indexIteration$. Hence, the indices of the scaled elements are controlled by $\monomialAmp[{1}]=\monomial[{1}]$, $\monomialAmp[{2}]=(\monomial[{1}] +\monomial[{2}])_2$, and $\monomialAmp[3]=(\monomial[2] +\monomial[3])_2$ when $(\monomial[1],\monomial[2],\monomial[3])$ is listed in lexicographic order, i.e., $(0,0,0),(0,0,1),\mydots,(1,1,1)$.  The encoded \acp{CS} for  several realizations of  $\voteVectorED[\indexED]$ for $\scalingParameters\rightarrow\infty$ are given in \tablename~\ref{table:example}. For $\voteVectorED[\indexED]=(1, 0, 0)$ and $\voteVectorED[\indexED]=(-1, 0, 0)$, four elements determined by $\monomialAmp[3]$ of the uni-modular \ac{CS}  is scaled by $\sqrt{2}$, and the rest is multiplied by $0$. 
Similarly, for $\voteVectorED[\indexED]=(1, 1, 1)$ and $\voteVectorED[\indexED]=(1, 1, -1)$, four elements of the \ac{CS} (i.e., the CS for $\voteVectorED[\indexED]=(1, 1, 0)$) is scaled by $\sqrt{2}$, and the rest is multiplied with $0$. It is worth noting that if all the votes are non-zero, only one of the eight elements of the sequence is non-zero. 
\end{example}

For the proposed scheme, the values for  $\angleexpAllAtRound[\indexED,0][\indexRound],\angleexpAllAtRound[\indexED,1][\indexRound],\mydots,\angleexpAllAtRound[\indexED,\numberOfIterations][\indexRound]$ are chosen randomly from the set $\integers_\numberOfPointsForPSK$ for the randomization of $\transmittedSeq[\indexED]$ across the sensors. This choice is also in line with the cases where phase synchronization cannot be maintained in the network.

Based on \eqref{eq:symbolOnSubcarrier}, the received sequence at the \ac{UAV} after signal superposition can be expressed as
\begin{align}
	\polySeq[{\receivedSeqP}][\polyVariable] &=\sum_{\forall\seqx\in\integers^\numberOfIterations_2} \underbrace{\left(\sum_{\indexED=1}^{\numberOfEdgeDevices}
		\channelAtSubcarrier[\indexED,\funcEnum(\seqx)]
		\constante^{\funcfForFinalAmplitudeED[\indexED](\seqx)}
		\constante^{\constantj\frac{2\pi}{\numberOfPointsForPSK}\funcfForFinalPhaseED[\indexED](\seqx)}+\noiseAtSubcarrier[\funcEnum(\seqx)]\right)}_{\receivedSeqEle[\funcEnum(\seqx)]}
	\polyVariable^{\funcEnum(\seqx)}~,	
	\label{eq:super}
\end{align}
The scaled halves of the transmitted sequences based on \eqref{eq:modulation} and \eqref{eq:normalizationCoef} non-coherently aggregate and the positions of the aggregated elements for the $\indexIteration$th MV are determined by $\monomialAmp[{\permutationMono[{\indexIteration}]}]$. Thus, to compute the $\indexIteration$th \ac{MV}, the \ac{UAV} calculates two metrics given by
\begin{align}
\metricPlus[\indexIteration] &\triangleq \sum_{\substack{\forall\seqx\in\integers^\numberOfIterations_2\\\monomialAmp[{\permutationMono[{\indexIteration}]}]=1}}|\receivedSeqEle[\funcEnum(\seqx)]|^2\nonumber\\&=\sum_{\substack{\forall\seqx\in\integers^\numberOfIterations_2\\\monomialAmp[{\permutationMono[{\indexIteration}]}]=1}} \left|\sum_{\indexED=1}^{\numberOfEdgeDevices}	\channelAtSubcarrier[\indexED,\funcEnum(\seqx)] 	\constante^{\funcfForFinalAmplitudeED[\indexED](\seqx)}	\constante^{\constantj\frac{2\pi}{\numberOfPointsForPSK}\funcfForFinalPhaseED[\indexED](\seqx)}+\noiseAtSubcarrier[\funcEnum(\seqx)]\right|^2,
\label{eq:metricPlus}
\end{align}
and
\begin{align}
	\metricMinus[\indexIteration] &\triangleq \sum_{\substack{\forall\seqx\in\integers^\numberOfIterations_2\\\monomialAmp[{\permutationMono[{\indexIteration}]}]=0}}|\receivedSeqEle[\funcEnum(\seqx)]|^2\nonumber\\&=\sum_{\substack{\forall\seqx\in\integers^\numberOfIterations_2\\\monomialAmp[{\permutationMono[{\indexIteration}]}]=0}} \left|\sum_{\indexED=1}^{\numberOfEdgeDevices}	\channelAtSubcarrier[\indexED,\funcEnum(\seqx)] 	\constante^{\funcfForFinalAmplitudeED[\indexED](\seqx)}	\constante^{\constantj\frac{2\pi}{\numberOfPointsForPSK}\funcfForFinalPhaseED[\indexED](\seqx)}+\noiseAtSubcarrier[\funcEnum(\seqx)]\right|^2.
	\label{eq:metricMinus}
\end{align}
It then detects the $\indexIteration$th \ac{MV} by comparing the values of $\metricPlus[\indexIteration]$ and $\metricMinus[\indexIteration]$ as
\begin{align}
	\majorityVoteDetectedEle[\indexIteration]=\signNormal[{\metricPlus[\indexIteration]-\metricMinus[\indexIteration]}],\forall\indexIteration~.
	\label{eq:decision}
\end{align}

In \figurename~\ref{fig:systemModel}, we provide the transmitter and receiver block diagrams for the proposed \ac{OAC} scheme. The $\indexED$th sensor first estimates the position of the \ac{UAV}, e.g., by using some image processing \cite{MUSTAFAH2012575,Carrio_2020}. It computes the vector $\voteVectorED[\indexED]$ and  calculates $\transmittedSeq[\indexED]$ based on Theorem~\ref{th:reduced} by using the mapping in \eqref{eq:modulation}. It then maps the elements of the encoded \ac{CS} $\transmittedSeq[\indexED]$ to $2^\numberOfIterations$ \ac{OFDM} subcarriers and calculates the $N$-point \ac{IDFT} of the mapped \ac{CS}. All the sensors transmit their  signals along with a sufficiently large \ac{CP} duration for \ac{OAC}. The \ac{UAV} receives the non-coherently superposed signal. After discarding the \ac{CP} and calculating the DFT of the remaining received samples, it obtains $\metricPlus[\indexIteration]$ and $\metricMinus[\indexIteration]$. The UAV finally detects the \acp{MV} with \eqref{eq:decision} and updates its position based on \eqref{eq:dynamics}. We discuss the detector performance in \eqref{eq:decision} rigorously in the following section.


\begin{figure}
	\centering
	\includegraphics[width =1.8in]{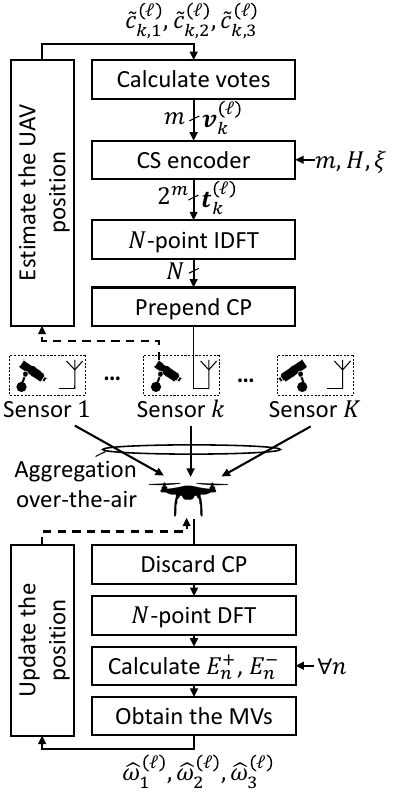}
	\caption{Transmitter and receiver diagrams for the proposed OAC scheme.}
	\label{fig:systemModel}
\end{figure}

\section{Performance Analysis}
\label{sec:performance}

\subsection{Average Performance}
Let $\numberOfEDsPlus[\indexIteration]$, $\numberOfEDsMinus[\indexIteration]$, and $\numberOfEDsZero[\indexIteration]$ be the number of sensors with positive, negative, and zero votes for $\indexIteration$th \ac{MV} computation, respectively. 
\begin{lemma}
$\expectationOperator[{\metricPlus[\indexIteration]}][]$ and $\expectationOperator[{\metricMinus[\indexIteration]}][]$ can be calculated as
\begin{align}
	\expectationOperator[{\metricPlus[\indexIteration]}][]
&=\frac{2^\numberOfIterations\constante^{2\scalingParameters}\numberOfEDsPlus[\indexIteration]}{1+\constante^{2\scalingParameters}}+\frac{2^\numberOfIterations\constante^{-2\scalingParameters}\numberOfEDsMinus[\indexIteration]}{1+\constante^{-2\scalingParameters}}+{2^{\numberOfIterations-1}}(\numberOfEDsZero[\indexIteration]+\noiseVariance)	\nonumber,
\\
\expectationOperator[{\metricMinus[\indexIteration]}][]
&=\frac{2^\numberOfIterations\numberOfEDsPlus[\indexIteration]}{1+\constante^{2\scalingParameters}}+\frac{2^\numberOfIterations\numberOfEDsMinus[\indexIteration]}{1+\constante^{-2\scalingParameters}}+{2^{\numberOfIterations-1}}(\numberOfEDsZero[\indexIteration]+\noiseVariance)	\nonumber,
\end{align}
respectively, where the expectation is over the distribution of channel and noise.
\label{lemma:exp}
\end{lemma}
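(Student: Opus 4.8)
The plan is to reduce each metric to a deterministic sum by taking expectations one received symbol at a time, and then to exploit the product structure of $\funcfForFinalAmplitudeED[\indexED]$ together with the normalization \eqref{eq:normalizationCoef}. First I would fix $\seqx$ and expand $|\receivedSeqEle[\funcEnum(\seqx)]|^2$. Because $\channelAtSubcarrier[\indexED,\funcEnum(\seqx)]\sim\complexGaussian[0][1]$ for $\indexED=1,\dots,\numberOfEdgeDevices$ and $\noiseAtSubcarrier[\funcEnum(\seqx)]\sim\complexGaussian[0][\noiseVariance]$ are mutually independent and zero-mean, every cross term vanishes under $\expectationOperator[\cdot][]$, and each unit-modulus factor $\constante^{\constantj\frac{2\pi}{\numberOfPointsForPSK}\funcfForFinalPhaseED[\indexED](\seqx)}$ drops out of the squared magnitude (using $\expectationOperator[{|\channelAtSubcarrier[\indexED,\funcEnum(\seqx)]|^2}][]=1$). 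This gives $\expectationOperator[{|\receivedSeqEle[\funcEnum(\seqx)]|^2}][]=\sum_{\indexED=1}^{\numberOfEdgeDevices}\constante^{2\funcfForFinalAmplitudeED[\indexED](\seqx)}+\noiseVariance$, hence $\expectationOperator[{\metricPlus[\indexIteration]}][]=\sum_{\seqx:\,\monomialAmp[{\permutationMono[\indexIteration]}]=1}(\sum_{\indexED}\constante^{2\funcfForFinalAmplitudeED[\indexED](\seqx)}+\noiseVariance)$, and analogously for $\expectationOperator[{\metricMinus[\indexIteration]}][]$ over the index set $\monomialAmp[{\permutationMono[\indexIteration]}]=0$.

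Next I would evaluate the per-sensor sum. By \eqref{eq:realPartReduced}, $\constante^{2\funcfForFinalAmplitudeED[\indexED](\seqx)}=\constante^{2\scaleEexpAtRound[\indexED,0][\indexRound]}\prod_{\indexIterationOther=1}^{\numberOfIterations}\constante^{2\scaleEexpAtRound[\indexED,\indexIterationOther][\indexRound]\monomialAmp[{\permutationMono[\indexIterationOther]}]}$, and the crucial observation is that $\seqx\mapsto(\monomialAmp[{\permutationMono[1]}],\dots,\monomialAmp[{\permutationMono[\numberOfIterations]}])$ is a bijection on $\integers_2^\numberOfIterations$ (the Gray-code property noted after Theorem~\ref{th:reduced}). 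Therefore the restricted sum over $\monomialAmp[{\permutationMono[\indexIteration]}]=1$ is a sum over all binary $\numberOfIterations$-tuples with the $\indexIteration$th coordinate fixed to $1$, which factorizes as $\constante^{2\scaleEexpAtRound[\indexED,0][\indexRound]}\constante^{2\scaleEexpAtRound[\indexED,\indexIteration][\indexRound]}\prod_{\indexIterationOther\neq\indexIteration}(1+\constante^{2\scaleEexpAtRound[\indexED,\indexIterationOther][\indexRound]})$. Substituting $\constante^{2\scaleEexpAtRound[\indexED,0][\indexRound]}=\prod_{\indexIterationOther=1}^{\numberOfIterations}2/(1+\constante^{2\scaleEexpAtRound[\indexED,\indexIterationOther][\indexRound]})$, which is just \eqref{eq:normalizationCoef} exponentiated, collapses this to $2^\numberOfIterations\constante^{2\scaleEexpAtRound[\indexED,\indexIteration][\indexRound]}/(1+\constante^{2\scaleEexpAtRound[\indexED,\indexIteration][\indexRound]})$; the $\monomialAmp[{\permutationMono[\indexIteration]}]=0$ case yields the same expression with the numerator $\constante^{2\scaleEexpAtRound[\indexED,\indexIteration][\indexRound]}$ replaced by $1$.

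Finally I would group the sensors by vote: with $\scaleEexpAtRound[\indexED,\indexIteration][\indexRound]\in\{+\scalingParameters,0,-\scalingParameters\}$ fixed by $\signNormal[{\voteVectorEDEle[\indexED][\indexIteration]}]$ through \eqref{eq:modulation}, the $\numberOfEDsPlus[\indexIteration]$ positive-vote sensors each contribute $2^\numberOfIterations\constante^{2\scalingParameters}/(1+\constante^{2\scalingParameters})$, the $\numberOfEDsMinus[\indexIteration]$ negative-vote sensors each contribute $2^\numberOfIterations\constante^{-2\scalingParameters}/(1+\constante^{-2\scalingParameters})$, and the $\numberOfEDsZero[\indexIteration]$ zero-vote sensors each contribute $2^{\numberOfIterations-1}$; and since the bijection above makes exactly $2^{\numberOfIterations-1}$ of the $2^\numberOfIterations$ tuples satisfy $\monomialAmp[{\permutationMono[\indexIteration]}]=1$ (likewise for $=0$), the noise contributes $2^{\numberOfIterations-1}\noiseVariance$ to each metric. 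Adding the pieces gives the two stated formulas. The step I expect to require the most care is the factorization in the second paragraph: one must verify that fixing the coordinate $\monomialAmp[{\permutationMono[\indexIteration]}]$ leaves the remaining $\numberOfIterations-1$ coordinates unconstrained, so that the per-sensor sum separates into the displayed product; the rest is bookkeeping with the normalization constant.
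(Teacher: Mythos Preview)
Your proposal is correct and follows essentially the same route as the paper: expand the squared magnitude, use independence of the zero-mean channels and noise to kill cross terms, reduce to the per-sensor sums $\sum_{\seqx}\constante^{2\funcfForFinalAmplitudeED[\indexED](\seqx)}$ over each half, evaluate these via the normalization \eqref{eq:normalizationCoef}, and then group sensors by vote. The only cosmetic difference is that the paper packages the per-sensor evaluation into a short preliminary proposition (showing $\sum_{\monomialAmp[{\permutationMono[\indexIteration]}]=1}\constante^{2\funcfForFinalAmplitudeED[\indexED](\seqx)}=\constante^{2\scaleEexpAtRound[\indexED,\indexIteration][\indexRound]}\sum_{\monomialAmp[{\permutationMono[\indexIteration]}]=0}\constante^{2\funcfForFinalAmplitudeED[\indexED](\seqx)}$ and then using $\norm{\transmittedSeq[\indexED]}_2^2=2^\numberOfIterations$ directly), whereas you obtain the same identity by explicitly factorizing through the Gray-code bijection before substituting the normalization constant.
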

The proof is given in Appendix~\ref{prof:lemma:exp}. 

Without any concern about the norm of  $\transmittedSeq[\indexED]$ with \eqref{eq:normalizationCoef}, we can  choose an arbitrarily large  $\scalingParameters$, leading to the following result:
\begin{corollary}
The following identities hold:
\begin{align}
\lim_{\scalingParameters\rightarrow\infty}\expectationOperator[{\metricPlus[\indexIteration]}][]&=2^\numberOfIterations\numberOfEDsPlus[\indexIteration]+2^{\numberOfIterations-1}\numberOfEDsZero[\indexIteration]+2^{\numberOfIterations-1}\noiseVariance~,\nonumber\\
\lim_{\scalingParameters\rightarrow\infty}\expectationOperator[{\metricMinus[\indexIteration]}][]&=2^\numberOfIterations\numberOfEDsMinus[\indexIteration]+2^{\numberOfIterations-1}\numberOfEDsZero[\indexIteration]+2^{\numberOfIterations-1}\noiseVariance~. \nonumber
\end{align}
\label{corol:exp}
\end{corollary}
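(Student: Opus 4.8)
The plan is to obtain the corollary directly from Lemma~\ref{lemma:exp} by evaluating the pointwise limit as $\scalingParameters\rightarrow\infty$ of each $\scalingParameters$-dependent term in the closed forms for $\expectationOperator[{\metricPlus[\indexIteration]}][]$ and $\expectationOperator[{\metricMinus[\indexIteration]}][]$. First I would rewrite the logistic-type coefficients in a form that makes the limits transparent: the coefficient of $\numberOfEDsPlus[\indexIteration]$ in $\expectationOperator[{\metricPlus[\indexIteration]}][]$ is $2^\numberOfIterations\constante^{2\scalingParameters}/(1+\constante^{2\scalingParameters})=2^\numberOfIterations/(1+\constante^{-2\scalingParameters})$, the coefficient of $\numberOfEDsMinus[\indexIteration]$ in $\expectationOperator[{\metricPlus[\indexIteration]}][]$ is $2^\numberOfIterations\constante^{-2\scalingParameters}/(1+\constante^{-2\scalingParameters})=2^\numberOfIterations/(1+\constante^{2\scalingParameters})$, and the coefficients appearing in $\expectationOperator[{\metricMinus[\indexIteration]}][]$ are $2^\numberOfIterations/(1+\constante^{2\scalingParameters})$ and $2^\numberOfIterations/(1+\constante^{-2\scalingParameters})$ for $\numberOfEDsPlus[\indexIteration]$ and $\numberOfEDsMinus[\indexIteration]$, respectively. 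Since $\constante^{-2\scalingParameters}\rightarrow0$ and $\constante^{2\scalingParameters}\rightarrow\infty$ as $\scalingParameters\rightarrow\infty$, and $t\mapsto 1/(1+t)$ is continuous on $[0,\infty)$ and vanishes as $t\rightarrow\infty$, it follows that $1/(1+\constante^{-2\scalingParameters})\rightarrow1$ and $1/(1+\constante^{2\scalingParameters})\rightarrow0$.

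The second step is to substitute these limits into the two identities of Lemma~\ref{lemma:exp}; the remaining term $2^{\numberOfIterations-1}(\numberOfEDsZero[\indexIteration]+\noiseVariance)$ is free of $\scalingParameters$ and is carried through unchanged. For $\expectationOperator[{\metricPlus[\indexIteration]}][]$ the $\numberOfEDsPlus[\indexIteration]$ term tends to $2^\numberOfIterations\numberOfEDsPlus[\indexIteration]$ and the $\numberOfEDsMinus[\indexIteration]$ term tends to $0$, which gives $\lim_{\scalingParameters\rightarrow\infty}\expectationOperator[{\metricPlus[\indexIteration]}][]=2^\numberOfIterations\numberOfEDsPlus[\indexIteration]+2^{\numberOfIterations-1}\numberOfEDsZero[\indexIteration]+2^{\numberOfIterations-1}\noiseVariance$; by the symmetric argument the $\numberOfEDsPlus[\indexIteration]$ term in $\expectationOperator[{\metricMinus[\indexIteration]}][]$ vanishes and its $\numberOfEDsMinus[\indexIteration]$ term tends to $2^\numberOfIterations\numberOfEDsMinus[\indexIteration]$, producing the second identity. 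Because each expression is a sum of only three terms, the limit of the sum equals the sum of the limits with no further justification needed.

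I do not expect a genuine obstacle: the statement is a short, elementary limit computation, and the only point that deserves a word is that having a finite limit term by term is what legitimizes passing the limit inside the finite sum. The value of the corollary is conceptual rather than technical --- because the normalization in \eqref{eq:normalizationCoef} fixes $\norm{\transmittedSeq[\indexED]}_2^2=2^\numberOfIterations$ independently of $\scalingParameters$, one may let $\scalingParameters\rightarrow\infty$ at no power cost, and in that regime the \emph{leakage} of a $+1$ vote's energy into $\metricMinus[\indexIteration]$ (and of a $-1$ vote's energy into $\metricPlus[\indexIteration]$) disappears, yielding the clean moment expressions used in the subsequent \ac{CER} analysis.
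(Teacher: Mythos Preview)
Your proposal is correct and matches the paper's approach: the corollary is presented there as an immediate consequence of Lemma~\ref{lemma:exp}, obtained by letting $\scalingParameters\rightarrow\infty$ in its closed-form expressions, and the paper does not even write out the limit computation you detail.
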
 
Based on Corollary~\ref{corol:exp}, we can infer that the detector in \eqref{eq:decision} is likely to detect the correct \ac{MV} for $\scalingParameters\rightarrow\infty$ since $\lim_{\scalingParameters\rightarrow\infty}\expectationOperator[{{\metricPlus[\indexIteration]-\metricMinus[\indexIteration]}}][]=2^{\numberOfIterations-1}(\numberOfEDsPlus[\indexIteration]-\numberOfEDsMinus[\indexIteration])$ holds. Also, the impact of absentee votes on the metrics $\metricPlus[\indexIteration]$ and $\metricMinus[\indexIteration]$ are equally shared.

\subsection{Computation Error Rate}
For a given set of all votes $\voteAll\triangleq(\voteVectorAcrossED[1],\mydots,\voteVectorAcrossED[\numberOfIterations])$  for $\voteVectorAcrossED[\indexIteration]\triangleq(\voteVectorEDEle[1][\indexIteration],\mydots,\voteVectorEDEle[\numberOfEdgeDevices][\indexIteration])$, the \ac{CER}  can be defined as
\begin{align}
	\computationErrorRate[\voteAll] \triangleq & \begin{cases}
		\probability[{\metricPlus[\indexIteration]-\metricMinus[\indexIteration]<0;\voteAll}]~, & \numberOfEDsPlus[\indexIteration]>\numberOfEDsMinus[\indexIteration]\\
		\probability[{\metricPlus[\indexIteration]-\metricMinus[\indexIteration]>0;\voteAll}]~, &
		\numberOfEDsPlus[\indexIteration]<\numberOfEDsMinus[\indexIteration]\\
		1~, & \numberOfEDsPlus[\indexIteration]=\numberOfEDsMinus[\indexIteration]
	\end{cases}~,\nonumber
	\\
	=&
	\begin{cases}
		\CDF[{\metricPlus[\indexIteration]-\metricMinus[\indexIteration]}][][0;\voteAll]~, & \numberOfEDsPlus[\indexIteration]>\numberOfEDsMinus[\indexIteration]\\
		1-\CDF[{\metricPlus[\indexIteration]-\metricMinus[\indexIteration]}][][0;\voteAll]~, &
		\numberOfEDsPlus[\indexIteration]<\numberOfEDsMinus[\indexIteration]\\
		1~, & \numberOfEDsPlus[\indexIteration]=\numberOfEDsMinus[\indexIteration]
	\end{cases}~,
	\label{eq:cerMain}
\end{align}
where $\CDF[{\metricPlus[\indexIteration]-\metricMinus[\indexIteration]}][][x;\voteAll]$ is the \ac{CDF} of $\metricPlus[\indexIteration]-\metricMinus[\indexIteration]$. It is worth noting that the detector in \eqref{eq:decision} always makes an error due to the noisy reception in communication channels when $\numberOfEDsPlus[\indexIteration]$ and $\numberOfEDsMinus[\indexIteration]$ are identical, leading to the third case in \eqref{eq:cerMain}. For a given $\voteAll$, the \ac{CDF} of $\metricPlus[\indexIteration]-\metricMinus[\indexIteration]$ can be obtained as follows:
\begin{lemma}
	Suppose  $\channelAtSubcarrier[\indexED,\varMonomial]\sim\complexGaussian[0][1]$ (i.e., frequency-selective fading) holds. $\CDF[{\metricPlus[\indexIteration]-\metricMinus[\indexIteration]}][][x;\voteAll]$ can be calculated as
	\begin{align}
		\CDF[{\metricPlus[\indexIteration]-\metricMinus[\indexIteration]}][][x;\voteAll]&=	\frac{1}{2}-\int_{-\infty}^{\infty}\frac{\charFcnPlus[\integralVar]\charFcnMinusConj[\integralVar]}{2\pi\constantj\integralVar} \constante^{-\constantj\integralVar\CDFvariable} d\integralVar~,
		\label{eq:cdfMain}
	\end{align}
	respectively, 	where $\charFcnPlus[\integralVar]$ and $\charFcnMinus[\integralVar]$ are given by
	\begin{align}
		\charFcnPlus[\integralVar]\triangleq\prod_{{\substack{\forall\seqx\in\integers^\numberOfIterations_2\\\monomialAmp[{\permutationMono[{\indexIteration}]}]=1}}}\frac{1}{1-\constantj\integralVar\rate[{\funcEnum(\seqx)}]^{-1}}~, 
	\end{align}
	and
	\begin{align}
		\charFcnMinus[\integralVar]\triangleq\prod_{{\substack{\forall\seqx\in\integers^\numberOfIterations_2\\\monomialAmp[{\permutationMono[{\indexIteration}]}]=0}}}\frac{1}{1-\constantj\integralVar\rate[{\funcEnum(\seqx)}]^{-1}}, 
	\end{align}
	respectively, for $
	\rate[{\funcEnum(\seqx)}]^{-1} \triangleq \sum_{\indexED=1}^{\numberOfEdgeDevices}
	\constante^{2\funcfForFinalAmplitudeED[\indexED](\seqx)}+\noiseVariance
	$.
	\label{lemma:errProbGivenVote}
\end{lemma}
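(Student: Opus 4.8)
The plan is to condition on the full vote matrix $\voteAll$, reduce each received subcarrier sample to a complex Gaussian variable, recognize $\metricPlus[\indexIteration]$ and $\metricMinus[\indexIteration]$ as independent sums of independent exponential random variables, and conclude by inverting the characteristic function of their difference.

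\emph{Step 1 (marginalize over the channel).} Conditioned on $\voteAll$, the amplitude exponents $\funcfForFinalAmplitudeED[\indexED](\seqx)$ are deterministic through \eqref{eq:modulation}--\eqref{eq:normalizationCoef}, while the phase terms contribute only unit-modulus factors $\constante^{\constantj\frac{2\pi}{\numberOfPointsForPSK}\funcfForFinalPhaseED[\indexED](\seqx)}$. Since $\channelAtSubcarrier[\indexED,\funcEnum(\seqx)]\sim\complexGaussian[0][1]$ are i.i.d.\ and independent of the $\complexGaussian[0][\noiseVariance]$ noise, each $\receivedSeqEle[\funcEnum(\seqx)]$ in \eqref{eq:super} is a zero-mean circularly symmetric complex Gaussian variable; adding the variances of the $\numberOfEdgeDevices$ scaled channel terms and the noise gives $\receivedSeqEle[\funcEnum(\seqx)]\sim\complexGaussian[0][\rate[{\funcEnum(\seqx)}]^{-1}]$ with $\rate[{\funcEnum(\seqx)}]^{-1}=\sum_{\indexED=1}^{\numberOfEdgeDevices}\constante^{2\funcfForFinalAmplitudeED[\indexED](\seqx)}+\noiseVariance$, exactly the quantity named in the lemma. \emph{Step 2 (independence and exponential laws).} Under the frequency-selective assumption $\channelAtSubcarrier[\indexED,\varMonomial]\sim\complexGaussian[0][1]$, the channel gains (and the noise) are independent across subcarriers, so the squared magnitudes $\{|\receivedSeqEle[\funcEnum(\seqx)]|^2:\seqx\in\integers^\numberOfIterations_2\}$ are mutually independent, each exponential with mean $\rate[{\funcEnum(\seqx)}]^{-1}$. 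Hence $\metricPlus[\indexIteration]$ and $\metricMinus[\indexIteration]$, being sums over the disjoint index sets $\{\seqx:\monomialAmp[{\permutationMono[{\indexIteration}]}]=1\}$ and $\{\seqx:\monomialAmp[{\permutationMono[{\indexIteration}]}]=0\}$, are independent, each a sum of independent exponentials with rates $\rate[{\funcEnum(\seqx)}]$ that generally differ across $\seqx$ (which is why the products of simple poles do not collapse to a single Gamma law).

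\emph{Step 3 (characteristic function of the test statistic).} The characteristic function of an exponential variable with mean $\mu$ is $(1-\constantj\integralVar\mu)^{-1}$, so taking products over the two index sets yields $\expectationOperator[\constante^{\constantj\integralVar\metricPlus[\indexIteration]}][]=\charFcnPlus[\integralVar]$ and $\expectationOperator[\constante^{\constantj\integralVar\metricMinus[\indexIteration]}][]=\charFcnMinus[\integralVar]$. By independence, the characteristic function of $\metricPlus[\indexIteration]-\metricMinus[\indexIteration]$ equals $\charFcnPlus[\integralVar]\,\charFcnMinus[-\integralVar]=\charFcnPlus[\integralVar]\,\charFcnMinusConj[\integralVar]$, the last step using that $\charFcnMinus[\cdot]$ is the characteristic function of a real random variable. \emph{Step 4 (inversion).} Because each summand has a density and $\noiseVariance>0$, the law of $\metricPlus[\indexIteration]-\metricMinus[\indexIteration]$ is absolutely continuous, and the Gil--Pelaez inversion theorem applied to the characteristic function $\charFcnPlus[\integralVar]\,\charFcnMinusConj[\integralVar]$ gives \eqref{eq:cdfMain} (with a factor $\constante^{-\constantj\integralVar x}$ in the integrand that disappears at $x=0$, the value relevant for $\computationErrorRate[\voteAll]$).

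The computations above are largely routine; the points that merit care are (i) isolating exactly what frequency selectivity buys — the full mutual independence of the $\receivedSeqEle[\funcEnum(\seqx)]$ across $\seqx$, which is what turns both metrics into sums of \emph{independent} exponentials — and (ii) in Step 4, reading the integral as a Cauchy principal value so the apparent pole at $\integralVar=0$ is harmless: since $\charFcnPlus[0]\,\charFcnMinusConj[0]=1$, the non-integrable part of the integrand near the origin is an odd function whose symmetric principal-value contribution vanishes.
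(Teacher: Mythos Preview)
Your proof is correct and follows essentially the same route as the paper: identify each $\receivedSeqEle[\funcEnum(\seqx)]$ as $\complexGaussian[0][\rate[\funcEnum(\seqx)]^{-1}]$, deduce that the $|\receivedSeqEle[\funcEnum(\seqx)]|^2$ are independent exponentials, form the characteristic functions of $\metricPlus[\indexIteration]$, $\metricMinus[\indexIteration]$, and their difference by taking products, and invert via Gil--Pelaez. You are a bit more explicit than the paper about why frequency selectivity is needed (mutual independence across $\seqx$), about the principal-value reading of the integral, and about the missing $\constante^{-\constantj\integralVar x}$ in \eqref{eq:cdfMain}, but the argument is the same.
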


The proof is given in Appendix~\ref{prof:lemma:errProbGivenVote}.

Although $\CDF[{\metricPlus[\indexIteration]-\metricMinus[\indexIteration]}][][x;\voteAll]$ in \eqref{eq:cdfMain} is not a closed-form expression, it can be easily evaluated with a numeric integration to compute $\computationErrorRate[\voteAll]$ in \eqref{eq:cerMain}. Also, as demonstrated in Section~\ref{sec:numerical} (i.e., \figurename~\ref{fig:cerChannel}), \eqref{eq:cdfMain}  holds approximately in the flat-fading  fading channels since the values for  $\angleexpAllAtRound[\indexED,0][\indexRound],\angleexpAllAtRound[\indexED,1][\indexRound],\mydots,\angleexpAllAtRound[\indexED,\numberOfIterations][\indexRound]$ are chosen randomly in our approach.

Let $\probabilityPlusDecision$, $\probabilityMinusDecision$, and $\probabilityNullDecision$ denote the probabilities given by $ \probability[{\voteVectorEDEle[\indexED][\indexIteration]>0}]$, $ \probability[{\voteVectorEDEle[\indexED][\indexIteration]<0}]$, and $ \probability[{\voteVectorEDEle[\indexED][\indexIteration]= 0}]$, respectively,  $\forall\indexED\in\{1,\mydots,\numberOfEdgeDevices\}$ and  $\forall\indexIteration\in\{1,\mydots,\numberOfIterations\}$. 
\begin{corollary}
	\label{cor:cerpqzKpKnKz}
	For given $\numberOfEDsPlus[\indexIteration]$, $\numberOfEDsMinus[\indexIteration]$, $\numberOfEDsZero[\indexIteration]$,   $\probabilityPlusDecision$, $\probabilityMinusDecision$, and $\probabilityNullDecision$, the \ac{CER} for the $\indexIteration$th \ac{MV} is given by
	\begin{align}
		&\computationErrorRate[\indexIteration;{\numberOfEDsPlus[\indexIteration], \numberOfEDsMinus[\indexIteration], \numberOfEDsZero[\indexIteration]}] \nonumber\\&~~~~~~= \begin{cases}
		\probabilityErrorPlus[\indexIteration;{\numberOfEDsPlus[\indexIteration], \numberOfEDsMinus[\indexIteration], \numberOfEDsZero[\indexIteration]}]~, & \numberOfEDsPlus[\indexIteration]>\numberOfEDsMinus[\indexIteration]\\
		\probabilityErrorMinus[\indexIteration;{\numberOfEDsPlus[\indexIteration], \numberOfEDsMinus[\indexIteration], \numberOfEDsZero[\indexIteration]}]~, &
		\numberOfEDsPlus[\indexIteration]<\numberOfEDsMinus[\indexIteration]\\
		1~, & \numberOfEDsPlus[\indexIteration]=\numberOfEDsMinus[\indexIteration]
		\label{eq:cerKpKnKz}
	\end{cases}~,
	\end{align}
	where
	\begin{align}
		&\probabilityErrorPlus[\indexIteration;{\numberOfEDsPlus[\indexIteration], \numberOfEDsMinus[\indexIteration], \numberOfEDsZero[\indexIteration]}]
		\nonumber\\&~~\triangleq\frac{1}{2}-
		\sum_{\forall\voteAllWithout}
		\probabilityPlusDecision^\numberOfCorrectDecision\probabilityMinusDecision^\numberOfIncorrectDecision\probabilityNullDecision^\numberOfZeroDecision\int_{-\infty}^{\infty}\frac{\charFcnPlus[\integralVar]\charFcnMinusConj[\integralVar]}{2\pi\constantj\integralVar} d\integralVar~,\label{eq:cerAveragePlus}\\
		&\probabilityErrorMinus[\indexIteration;{\numberOfEDsPlus[\indexIteration], \numberOfEDsMinus[\indexIteration], \numberOfEDsZero[\indexIteration]}]
		\nonumber\\&~~\triangleq\frac{1}{2}+
		\sum_{\forall\voteAllWithout}
		\probabilityPlusDecision^\numberOfCorrectDecision\probabilityMinusDecision^\numberOfIncorrectDecision\probabilityNullDecision^\numberOfZeroDecision\int_{-\infty}^{\infty}\frac{\charFcnPlus[\integralVar]\charFcnMinusConj[\integralVar]}{2\pi\constantj\integralVar} d\integralVar~,\label{eq:cerAverageMinus} 
	\end{align}
	respectively, where $\numberOfCorrectDecision$, $\numberOfIncorrectDecision$, and $\numberOfZeroDecision$ are the number of elements  in $\voteAllWithout\triangleq(\voteVectorAcrossED[1],\mydots,\voteVectorAcrossED[\indexIteration-1],\voteVectorAcrossED[\indexIteration+1],\mydots,\voteVectorAcrossED[\numberOfIterations])$ with $1$, $-1$, and $0$, respectively.
\end{corollary}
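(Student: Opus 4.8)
The plan is to obtain $\computationErrorRate[\indexIteration;{\numberOfEDsPlus[\indexIteration], \numberOfEDsMinus[\indexIteration], \numberOfEDsZero[\indexIteration]}]$ as the expectation, over the votes that the hypothesis does not pin down, of the vote-conditioned error probability already supplied by Lemma~\ref{lemma:errProbGivenVote} together with the defining relation \eqref{eq:cerMain}. I would fix the votes feeding the $\indexIteration$th \ac{MV} to the representative pattern $(\oneVector[{\numberOfEDsPlus[\indexIteration]}],\zeroVector[{\numberOfEDsZero[\indexIteration]}],-\oneVector[{\numberOfEDsMinus[\indexIteration]}])$ and let $\voteAllWithout$ collect the remaining votes---those belonging to the other \ac{MV} computations---which the model takes to be i.i.d.\ across sensors with the marginals $\probabilityPlusDecision,\probabilityMinusDecision,\probabilityNullDecision$. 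For any realization of $\voteAllWithout$ the full vote matrix $\voteAll$ is determined, and hence so are the rates $\rate[{\funcEnum(\seqx)}]^{-1}=\sum_{\indexED}\constante^{2\funcfForFinalAmplitudeED[\indexED](\seqx)}+\noiseVariance$ (recall that each $\funcfForFinalAmplitudeED[\indexED]$, and therefore each rate, depends on all of sensor $\indexED$'s votes via \eqref{eq:realPartReduced}, \eqref{eq:modulation}, and \eqref{eq:normalizationCoef}), the characteristic functions $\charFcnPlus[\integralVar],\charFcnMinus[\integralVar]$, and $\CDF[{\metricPlus[\indexIteration]-\metricMinus[\indexIteration]}][][x;\voteAll]$ of \eqref{eq:cdfMain}; averaging $\computationErrorRate[\voteAll]$ against the probability of that realization is exactly what \eqref{eq:cerAveragePlus}--\eqref{eq:cerAverageMinus} claim.

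Carrying this out, I would first invoke the law of total probability: since the votes are independent of the channels $\{\channelAtSubcarrier[\indexED,\varMonomial]\}$ and the noise $\{\noiseAtSubcarrier[\varMonomial]\}$, the error probability with the $\indexIteration$th-\ac{MV} counts pinned and everything else random equals $\sum_{\forall\voteAllWithout}\probability[\voteAllWithout]\computationErrorRate[\voteAll]$, and the i.i.d.\ structure gives $\probability[\voteAllWithout]=\probabilityPlusDecision^{\numberOfCorrectDecision}\probabilityMinusDecision^{\numberOfIncorrectDecision}\probabilityNullDecision^{\numberOfZeroDecision}$, with $(\numberOfCorrectDecision,\numberOfIncorrectDecision,\numberOfZeroDecision)$ the numbers of $+1$, $-1$, and $0$ entries in $\voteAllWithout$. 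I would then split along the three branches of \eqref{eq:cerMain}. For $\numberOfEDsPlus[\indexIteration]>\numberOfEDsMinus[\indexIteration]$ one has $\computationErrorRate[\voteAll]=\CDF[{\metricPlus[\indexIteration]-\metricMinus[\indexIteration]}][][x;\voteAll]$; substituting \eqref{eq:cdfMain} and interchanging the finite sum with the integral yields $\frac12-\sum_{\forall\voteAllWithout}\probabilityPlusDecision^{\numberOfCorrectDecision}\probabilityMinusDecision^{\numberOfIncorrectDecision}\probabilityNullDecision^{\numberOfZeroDecision}\int_{-\infty}^{\infty}\frac{\charFcnPlus[\integralVar]\charFcnMinusConj[\integralVar]}{2\pi\constantj\integralVar}d\integralVar$, i.e.\ \eqref{eq:cerAveragePlus}. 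For $\numberOfEDsPlus[\indexIteration]<\numberOfEDsMinus[\indexIteration]$ one has $\computationErrorRate[\voteAll]=1-\CDF[{\metricPlus[\indexIteration]-\metricMinus[\indexIteration]}][][x;\voteAll]$, and the $1-(\cdot)$ flips the sign of the integral term, producing \eqref{eq:cerAverageMinus}; the tie case $\numberOfEDsPlus[\indexIteration]=\numberOfEDsMinus[\indexIteration]$ returns $1$ directly from the third branch. The interchange is legitimate because the sum is finite and each summand's integral converges by Lemma~\ref{lemma:errProbGivenVote}.

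The only step that calls for an argument rather than bookkeeping is justifying the reduction ``for given $\voteVectorED[\indexIteration]=(\oneVector[{\numberOfEDsPlus[\indexIteration]}],\zeroVector[{\numberOfEDsZero[\indexIteration]}],-\oneVector[{\numberOfEDsMinus[\indexIteration]}])$'', i.e.\ that the averaged \ac{CER} depends on the $\indexIteration$th-\ac{MV} votes only through the counts $(\numberOfEDsPlus[\indexIteration],\numberOfEDsMinus[\indexIteration],\numberOfEDsZero[\indexIteration])$. I would establish this from sensor exchangeability: each rate $\rate[{\funcEnum(\seqx)}]^{-1}$ is a sum over the sensor index and is thus invariant under any relabeling of sensors; the channels $\channelAtSubcarrier[\indexED,\varMonomial]\sim\complexGaussian[0][1]$ and the random phases $\angleexpAllAtRound[\indexED,0][\indexRound],\dots,\angleexpAllAtRound[\indexED,\numberOfIterations][\indexRound]$ are i.i.d.\ across $\indexED$, so the conditional law of $\metricPlus[\indexIteration]-\metricMinus[\indexIteration]$ given $\voteAll$ is exchangeable in that index; and the outer average over $\voteAllWithout$ is itself invariant under the same relabeling because the remaining votes are i.i.d.\ across sensors. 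Hence every sensor assignment consistent with $(\numberOfEDsPlus[\indexIteration],\numberOfEDsMinus[\indexIteration],\numberOfEDsZero[\indexIteration])$ gives the same value and the displayed pattern is a valid representative. I do not expect a real obstacle beyond this; the corollary is essentially Lemma~\ref{lemma:errProbGivenVote} pushed through a probability-weighted average.
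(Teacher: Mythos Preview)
Your proposal is correct and follows exactly the same route as the paper: the paper's proof simply observes that the probability of a realization of $\voteAllWithout$ is $\probabilityPlusDecision^{\numberOfCorrectDecision}\probabilityMinusDecision^{\numberOfIncorrectDecision}\probabilityNullDecision^{\numberOfZeroDecision}$ and then averages \eqref{eq:cerMain} over that distribution to obtain \eqref{eq:cerAveragePlus} and \eqref{eq:cerAverageMinus}. Your write-up is more detailed---in particular the exchangeability argument justifying dependence only on the counts and the remark on interchanging the finite sum with the integral are yours, not the paper's---but the underlying idea is identical.
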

\begin{proof}
	The probability of a realization of $\voteAllWithout$ is $\probabilityPlusDecision^\numberOfCorrectDecision\probabilityMinusDecision^\numberOfIncorrectDecision\probabilityNullDecision^\numberOfZeroDecision$. Thus, the expectation of \eqref{eq:cerMain} over the distribution of $\voteAllWithout$ leads to \eqref{eq:cerAveragePlus} and \eqref{eq:cerAverageMinus}.
\end{proof}
The calculations of \eqref{eq:cerAveragePlus} and \eqref{eq:cerAverageMinus} can be intractable due to the enumerations of $\voteAllWithout$. To address this issue, we average the \ac{CER} in \eqref{eq:cerMain} over a few realizations of $\voteAll$ for a given triplet $\{\probabilityPlusDecision,\probabilityMinusDecision,\probabilityNullDecision\}$ to compute \eqref{eq:cerAveragePlus} and \eqref{eq:cerAverageMinus} in Section~\ref{sec:numerical}.

Finally, let us define $\probabilityCER$ by setting it as 
\begin{align}
\probabilityCER= \probability[{\majorityVoteDetectedEle[\indexIteration]\neq\majorityVoteEle[\indexIteration]}]~,
\label{eq:CERdef}
\end{align}
for given $\probabilityPlusDecision$, $\probabilityMinusDecision$, and $\probabilityNullDecision$. We can calculate $\probabilityCER$ as follows:
\begin{corollary}
	\label{cor:cerpqz}
	For given $\probabilityPlusDecision$, $\probabilityMinusDecision$, and $\probabilityNullDecision$, $\probabilityCER$  is given by
	\begin{align}
		\probabilityCER &= \probability[{\majorityVoteDetectedEle[\indexIteration]=-1,\majorityVoteEle[\indexIteration]=1}]+\probability[{\majorityVoteDetectedEle[\indexIteration]=1,\majorityVoteEle[\indexIteration]=-1}]\nonumber\\&~~~+\probability[{\majorityVoteEle[\indexIteration]=0}]~,
		\label{eq:CERpqz}
	\end{align}
	where
	\begin{align}
		&\probability[{\majorityVoteDetectedEle[\indexIteration]=-1,\majorityVoteEle[\indexIteration]=1}]\nonumber\\
		&=\sum_{\varN=0}^{\numberOfEdgeDevices}	\sum_{\varP=\varN+1}^{\numberOfEdgeDevices-\varN}	\binom{\numberOfEdgeDevices}{\varP}	\binom{\numberOfEdgeDevices-\varP}{\varN}	\probabilityPlusDecision^{\varP}\probabilityMinusDecision^{\varN}\probabilityNullDecision^{\numberOfEdgeDevices-\varP-\varN} \probabilityErrorPlus[\indexIteration;{\varP, \varN, \varZ}]~,\nonumber
		\\
		&\probability[{\majorityVoteDetectedEle[\indexIteration]=1,\majorityVoteEle[\indexIteration]=-1}]\nonumber\\
		&=\sum_{\varP=0}^{\numberOfEdgeDevices}\sum_{\varN=\varP+1}^{\numberOfEdgeDevices-\varP}\binom{\numberOfEdgeDevices}{\varP}\binom{\numberOfEdgeDevices-\varP}{\varN}\probabilityPlusDecision^{\varP}\probabilityMinusDecision^{\varN}\probabilityNullDecision^{\numberOfEdgeDevices-\varP-\varN} \probabilityErrorMinus[\indexIteration;{\varP, \varN, \varZ}]~,\nonumber
		\\
		&\probability[{\majorityVoteEle[\indexIteration]=0}]=\sum_{\varP=0}^{\floor{\frac{\numberOfEdgeDevices}{2}}}	\binom{\numberOfEdgeDevices}{\varP}	\binom{\numberOfEdgeDevices-\varP}{\varP}\probabilityPlusDecision^{\varP}\probabilityMinusDecision^{\varP}\probabilityNullDecision^{\numberOfEdgeDevices-2\varP}~.\nonumber
	\end{align}
\end{corollary}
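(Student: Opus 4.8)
The plan is to obtain~\eqref{eq:CERpqz} by marginalizing Corollary~\ref{cor:cerpqzKpKnKz} over the random vote composition. First I would split the error event according to the value of the ideal \ac{MV} $\majorityVoteEle[\indexIteration]\in\{-1,0,1\}$. Since $\{\majorityVoteEle[\indexIteration]=1\}$, $\{\majorityVoteEle[\indexIteration]=-1\}$, $\{\majorityVoteEle[\indexIteration]=0\}$ partition the sample space, $\probabilityCER=\probability[{\majorityVoteDetectedEle[\indexIteration]\neq\majorityVoteEle[\indexIteration]}]$ is the sum of $\probability[{\majorityVoteDetectedEle[\indexIteration]\neq\majorityVoteEle[\indexIteration],\majorityVoteEle[\indexIteration]=1}]$, $\probability[{\majorityVoteDetectedEle[\indexIteration]\neq\majorityVoteEle[\indexIteration],\majorityVoteEle[\indexIteration]=-1}]$, and $\probability[{\majorityVoteDetectedEle[\indexIteration]\neq\majorityVoteEle[\indexIteration],\majorityVoteEle[\indexIteration]=0}]$. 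On $\{\majorityVoteEle[\indexIteration]=0\}$ (a tie between positive and negative votes, cf.~\eqref{eq:strategies}) the detector is declared in error with probability one by the third branch of~\eqref{eq:cerMain}, so the last term equals $\probability[{\majorityVoteEle[\indexIteration]=0}]$. On $\{\majorityVoteEle[\indexIteration]=1\}$, i.e.\ $\numberOfEDsPlus[\indexIteration]>\numberOfEDsMinus[\indexIteration]$, the metric difference $\metricPlus[\indexIteration]-\metricMinus[\indexIteration]$ has no atom at the origin (it is absolutely continuous for $\noiseVariance>0$), so $\majorityVoteDetectedEle[\indexIteration]=\signNormal[{\metricPlus[\indexIteration]-\metricMinus[\indexIteration]}]\in\{-1,1\}$ almost surely and $\{\majorityVoteDetectedEle[\indexIteration]\neq\majorityVoteEle[\indexIteration]\}$ reduces to $\{\majorityVoteDetectedEle[\indexIteration]=-1\}$; symmetrically it reduces to $\{\majorityVoteDetectedEle[\indexIteration]=1\}$ on $\{\majorityVoteEle[\indexIteration]=-1\}$. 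This yields the three-term structure of~\eqref{eq:CERpqz}.

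Next, I would evaluate each term by conditioning on the counts $(\numberOfEDsPlus[\indexIteration],\numberOfEDsMinus[\indexIteration],\numberOfEDsZero[\indexIteration])$ of sensors voting $+1$, $-1$, and $0$ for the $\indexIteration$th \ac{MV}. Because the $\numberOfEdgeDevices$ sensors vote independently with $\probability[{\voteVectorEDEle[\indexED][\indexIteration]=1}]=\probabilityPlusDecision$, $\probability[{\voteVectorEDEle[\indexED][\indexIteration]=-1}]=\probabilityMinusDecision$, $\probability[{\voteVectorEDEle[\indexED][\indexIteration]=0}]=\probabilityNullDecision$, this triple is multinomially distributed, so a composition with $\numberOfEDsPlus[\indexIteration]=\varP$, $\numberOfEDsMinus[\indexIteration]=\varN$, $\numberOfEDsZero[\indexIteration]=\varZ=\numberOfEdgeDevices-\varP-\varN$ occurs with probability $\binom{\numberOfEdgeDevices}{\varP}\binom{\numberOfEdgeDevices-\varP}{\varN}\probabilityPlusDecision^{\varP}\probabilityMinusDecision^{\varN}\probabilityNullDecision^{\numberOfEdgeDevices-\varP-\varN}$. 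Conditioned on such a composition, $\majorityVoteEle[\indexIteration]=1$ precisely when $\varP>\varN$, and the conditional probability of $\majorityVoteDetectedEle[\indexIteration]=-1$ is the conditional \ac{CER}, which Corollary~\ref{cor:cerpqzKpKnKz} evaluates as $\probabilityErrorPlus[\indexIteration;{\varP,\varN,\varZ}]$; likewise $\majorityVoteEle[\indexIteration]=-1$ when $\varN>\varP$ with conditional error probability $\probabilityErrorMinus[\indexIteration;{\varP,\varN,\varZ}]$, and $\majorityVoteEle[\indexIteration]=0$ when $\varP=\varN$. Summing the product of the composition probability and the conditional error probability over the admissible ranges --- $0\le\varN\le\numberOfEdgeDevices$, $\varN+1\le\varP\le\numberOfEdgeDevices-\varN$ for the first term; $0\le\varP\le\numberOfEdgeDevices$, $\varP+1\le\varN\le\numberOfEdgeDevices-\varP$ for the second; and $\varN=\varP$, $0\le\varP\le\floor{\frac{\numberOfEdgeDevices}{2}}$ for the third --- reproduces the three displayed expressions.

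The argument is essentially bookkeeping and I do not anticipate a genuine obstacle. The two points needing care are: (i) checking that the stated summation limits simultaneously encode the feasibility constraint $\varP+\varN\le\numberOfEdgeDevices$ and the strict-majority conditions ($\varP>\varN$, $\varN>\varP$, $\varP=\varN$) defining $\majorityVoteEle[\indexIteration]$ through~\eqref{eq:strategies}, and that the tie term carries the correct factor $\binom{\numberOfEdgeDevices}{\varP}\binom{\numberOfEdgeDevices-\varP}{\varP}\probabilityPlusDecision^{\varP}\probabilityMinusDecision^{\varP}\probabilityNullDecision^{\numberOfEdgeDevices-2\varP}$ with $0\le\varP\le\floor{\frac{\numberOfEdgeDevices}{2}}$; and (ii) recognizing that Corollary~\ref{cor:cerpqzKpKnKz} already delivers $\probabilityErrorPlus[\indexIteration;{\cdot,\cdot,\cdot}]$ and $\probabilityErrorMinus[\indexIteration;{\cdot,\cdot,\cdot}]$ as the conditional \ac{CER} given only the $\indexIteration$th \ac{MV}'s vote composition --- all other randomness (the channel, the noise, the random phases $\angleexpAllAtRound[\indexED,0][\indexRound],\angleexpAllAtRound[\indexED,1][\indexRound],\mydots,\angleexpAllAtRound[\indexED,\numberOfIterations][\indexRound]$, and the votes of the remaining \ac{MV} computations that enter the metrics as absentee-like contributions) having been averaged out there --- so that only the single further expectation, over the multinomially distributed composition, remains, which is exactly what the displayed sums carry out.
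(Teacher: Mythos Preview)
Your proposal is correct and is precisely the natural argument the paper relies on: the paper states the corollary without a formal proof, remarking only that the third term comes from the tie case of~\eqref{eq:cerKpKnKz}, and your partitioning on $\majorityVoteEle[\indexIteration]\in\{-1,0,1\}$ followed by marginalizing Corollary~\ref{cor:cerpqzKpKnKz} over the multinomially distributed counts $(\numberOfEDsPlus[\indexIteration],\numberOfEDsMinus[\indexIteration],\numberOfEDsZero[\indexIteration])$ is exactly that justification made explicit.
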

The third term in \eqref{eq:CERpqz} is because of the third case in \eqref{eq:cerKpKnKz}. In Section~\ref{sec:numerical}, we evaluate $\probabilityCER$ for different $\probabilityPlusDecision$, $\probabilityMinusDecision$, and $\probabilityNullDecision$ values.

\subsection{Computation Rate and Resource Utilization Ratio}
The computation rate $\computationRate$ can be defined as the number of functions  computed per channel use (in real dimension) \cite{goldenbaum2015nomographic,Jeon_2016}. Since the proposed scheme computes $\numberOfIterations$ \acp{MV} over $2^{\numberOfIterations}$ complex-valued resources,  $\computationRate$ can be expressed as 
\begin{align}
	\computationRate =\frac{\numberOfIterations}{2^{\numberOfIterations+1}}~.
\end{align}
Hence, for a larger $\numberOfIterations$, the computation rate reduces while the \ac{CER} improves significantly as demonstrated in Section~\ref{sec:numerical}.

Let us define the resource utilization ratio as the ratio between the number of resources consumed with the proposed scheme and the number of resources when the communication and computation are considered as separate tasks, i.e., the traditional first-communicate-then-compute approach. For the separation, we assume that the spectral efficiency is $\rateTraditional$~bit/s/Hz. Hence, the  resources  needed for $\numberOfEdgeDevices$ EDs and $\numberOfIterations$ bits (i.e., votes) can be obtained as $\numberOfIterations\numberOfEdgeDevices/\rateTraditional$. The proposed scheme consumes $2^{\numberOfIterations}$ complex-valued resources for $\numberOfEdgeDevices$ EDs. Hence, the resource utilization ratio can be expressed as
\begin{align}
	\efficientImprovement = \frac{2^\numberOfIterations}{\numberOfIterations\numberOfEdgeDevices/\rateTraditional}~.
\end{align}
For instance, for $r=1$~bit/s/Hz, $\numberOfIterations=7$, and $\numberOfEdgeDevices=50$,  the wireless resources needed with the proposed scheme is $\efficientImprovement=0.3657$ times the ones  with the traditional approach.

\subsection{Complexity Analysis}
As can be seen from Theorem~\ref{th:reduced}, the functions $\funcfForFinalAmplitudeED[\indexED](\seqx)$ and $\funcfForFinalPhaseED[\indexED](\seqx)$ alter the coefficients of the first-order monomials. Hence, {\em without} a recursive method, $\numberOfIterations2^\numberOfIterations$ real-valued sum operations must be performed to calculate the functions $\funcfForFinalAmplitudeED[\indexED](\seqx)$ and $\funcfForFinalPhaseED[\indexED](\seqx)$,  and the multiplication operations can be replaced with multiplexers as $\seqx$ has binary elements. Also, $2^\numberOfIterations$ multiplications are required to compute the elements of $\transmittedSeq[\indexED]$. Thus, the computation complexity at the transmitter is $\mathcal{O}(2^\numberOfIterations)$. However, if the recursive construction of the \ac{RM} codes is exploited, the number of operations asymptotically scales with $\log_2(2^\numberOfIterations)$, resulting in a computation complexity of $\mathcal{O}(\numberOfIterations)$. At the receiver, 
\eqref{eq:metricPlus} and \eqref{eq:metricMinus} are calculated to obtain $\numberOfIterations$ \acp{MV}, which require at least $2^\numberOfIterations$ complex multiplications to compute the magnitude squares of the elements of the received sequence. Hence, the computation complexity at the receiver is $\mathcal{O}(2^\numberOfIterations)$, i.e., a linearly increased complexity with the length of \ac{CS}.

\section{Convergence Analysis}
\label{sec:convergence}
In this section, we discuss the convergence of the resulting systems under the control strategies in~\eqref{eq:strategies} by analyzing their Lyapunov stability based on the following definition:
\begin{definition}[\cite{lv2023sliding}]
	\rm
	A stochastic system $x^{(\ell+1)}=f\left(x^{(\ell)}\right)$, where $f$ describes the dynamics and $x^{(\ell)}$ is the state, is called globally uniformly ultimately bounded in mean square with ultimate bound $b$, if there exist positive $b$ and $L$ such that for any $\mathbb E\left[\|x^{(0)}\|^2\right] < \infty$, the inequality $\mathbb E\left[\|x^{(\ell)}\|^2\right] \le b$ holds for all $\ell\ge L$.
\end{definition}
We define $\distance[\indexCoordinate][\indexRound]$ by setting $\distance[\indexCoordinate][\indexRound]=\locationEle[\indexRound][\indexCoordinate] - \locationTargetEle[\indexCoordinate]$ and re-express \eqref{eq:dynamics}  as
\begin{equation} \label{eq:dynamics1}
	\distance[\indexCoordinate][\indexRound+1] = \distance[\indexCoordinate][\indexRound] -\Trefresh\velocityVectorEle[\indexRound][\indexCoordinate].
\end{equation}
for the convergence analysis of each case in \eqref{eq:strategies}.

\subsection{Case 1: Continuous Update  \& Ideal Communications}
The control strategy can be written~as
\begin{align*}
	\feedbackEle[\indexRound][\indexCoordinate] & = \frac{1}{\numberOfEdgeDevices} \sum_{\indexED=1}^\numberOfEdgeDevices \left( \locationEle[\indexRound][\indexCoordinate] + \locationEstimationErrorEle[\indexRound][\indexED,\indexCoordinate] - \locationTargetEle[\indexCoordinate]\right) = \distance[\indexCoordinate][\indexRound] + \frac{1}{\numberOfEdgeDevices} \sum_{\indexED=1}^\numberOfEdgeDevices \locationEstimationErrorEle[\indexRound][\indexED,\indexCoordinate]~.
\end{align*}
Since $\locationEstimationErrorEle[\indexRound][\indexED,\indexCoordinate]$ is Gaussian and the summands are independent of each other, $\frac{1}{\numberOfEdgeDevices} \sum_{\indexED=1}^\numberOfEdgeDevices \locationEstimationErrorEle[\indexRound][\indexED,\indexCoordinate]$ is also Gaussian.  Based on stochastic control theory~\cite{aastrom2012introduction}, as long as $|1-\updateRate\Trefresh| < 1$, the resulting closed-loop system is stable in a stochastic sense.

\def\energyFunction[#1]{V\left(#1\right)}
\subsection{Case 2: MV-based Update \& Ideal Communications}\label{subsec:idealMV}
 In this case, $|\feedbackEle[\indexRound][\indexCoordinate]|$ is at most 1. Therefore, as long as $\updateRate \le \maximumVelocity$, input saturation will not happen and the dynamic can be written as  
\begin{align} \label{eq:IMV}
	\distance[\indexCoordinate][\indexRound+1] & = \distance[\indexCoordinate][\indexRound] -\updateRate \Trefresh \: {\rm \sign}\left(\sum_{\indexED=1}^\numberOfEdgeDevices \voteVectorEDEleCoordinate[\indexED,\indexCoordinate][\indexRound]\right) 
	\\ & = \distance[\indexCoordinate][\indexRound] -\updateRate \Trefresh \:\feedbackEle[\indexRound][\indexCoordinate]
	. \nonumber
\end{align}
Recall that $\voteVectorEDEleCoordinate[\indexED,\indexCoordinate][\indexRound]=\signNormal[{\locationEstimateEle[\indexRound][\indexED,\indexCoordinate]-\locationTargetEle[\indexCoordinate]}] = {\rm sign}\left(\distance[\indexCoordinate][\indexRound] + \locationEstimationErrorEle[\indexRound][\indexED,\indexCoordinate] \right)$ and $\locationEstimationErrorEle[\indexRound][\indexED,\indexCoordinate]$ is Gaussian. We can then obtain the values of $\probabilityPlusDecision$, $\probabilityMinusDecision$, and $\probabilityNullDecision$ as
\begin{align}
	\probabilityPlusDecision&={\rm Pr}(\voteVectorEDEleCoordinate[\indexED,\indexCoordinate][\indexRound] = +1) =  {\rm Pr}(\distance[\indexCoordinate][\indexRound] + \locationEstimationErrorEle[\indexRound][\indexED,\indexCoordinate] > 0) \nonumber \\ & = {\rm Pr}\left(\frac{\locationEstimationErrorEle[\indexRound][\indexED,\indexCoordinate]}{\stdSensor}  >  \frac{-\distance[\indexCoordinate][\indexRound]}{\stdSensor} \right) = 1 - \normalCDF[\frac{-\distance[\indexCoordinate][\indexRound]}{\stdSensor}]~, 
	\label{eq:alphaExp}\\
	\probabilityMinusDecision&={\rm Pr}(\voteVectorEDEleCoordinate[\indexED,\indexCoordinate][\indexRound] = -1)  =  {\rm Pr}(\distance[\indexCoordinate][\indexRound] + \locationEstimationErrorEle[\indexRound][\indexED,\indexCoordinate] < 0)\nonumber \\& = {\rm Pr}\left(\frac{\locationEstimationErrorEle[\indexRound][\indexED,\indexCoordinate]}{\stdSensor}  <  \frac{-\distance[\indexCoordinate][\indexRound]}{\stdSensor} \right)  = \normalCDF[\frac{-\distance[\indexCoordinate][\indexRound]}{\stdSensor}] ~,\label{eq:betaExp}\\
	\probabilityNullDecision&={\rm Pr}(\voteVectorEDEleCoordinate[\indexED,\indexCoordinate][\indexRound] = 0)  =  0~,\nonumber
\end{align}
respectively. With the distribution of $\voteVectorEDEleCoordinate[\indexED,\indexCoordinate][\indexRound]$, we have
\begin{align}
	\phiFunc[{+}][{\distance[\indexCoordinate][\indexRound]}] &\triangleq {\rm Pr}\left(\feedbackEle[\indexRound][\indexCoordinate] = +1\right) = \sum_{\anIndexForK=0}^{\ceil{\frac{\numberOfEdgeDevices}{2}}-1} \binom{\numberOfEdgeDevices}{\anIndexForK} \probabilityPlusDecision^{\numberOfEdgeDevices-\anIndexForK}\probabilityMinusDecision^{\anIndexForK}~, \label{eq:plus}
	\\
	\phiFunc[{-}][{\distance[\indexCoordinate][\indexRound]}] &\triangleq {\rm Pr}\left(\feedbackEle[\indexRound][\indexCoordinate]  =-1\right) = \sum_{\anIndexForK=0}^{\ceil{\frac{\numberOfEdgeDevices}{2}}-1} \binom{\numberOfEdgeDevices}{\anIndexForK} \probabilityPlusDecision^{\anIndexForK}\probabilityMinusDecision^{\numberOfEdgeDevices-\anIndexForK}~,	 \label{eq:minus}
		\\
\phiFunc[0][{\distance[\indexCoordinate][\indexRound]}]& \triangleq 
\begin{cases}
	0, & {\rm odd~}K \\ 
	\binom{\numberOfEdgeDevices}{{\numberOfEdgeDevices}/{2}}
	\probabilityPlusDecision^{\frac{\numberOfEdgeDevices}{2}} \probabilityMinusDecision^{\frac{\numberOfEdgeDevices}{2}}, & {\rm even~}\numberOfEdgeDevices
\end{cases}.	
\end{align}
Given the distribution of $\feedbackEle[\indexRound][\indexCoordinate]$, we can show the convergence of the system under the MV (ideal) case:
\begin{theorem} \label{thm2}
	Given $\updateRate \le \maximumVelocity$, the system in~\eqref{eq:dynamics1} is globally uniformly ultimately bounded in mean square under the MV (ideal) control strategy in~\eqref{eq:strategies}.
\end{theorem}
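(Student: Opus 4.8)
The plan is to reduce the closed-loop recursion to three self-contained scalar dynamics and to verify a stochastic Lyapunov drift inequality for each. Since the feedback $\feedbackEle[\indexRound][\indexCoordinate]$ for the $\indexCoordinate$th coordinate depends on the state only through $\voteVectorEDEleCoordinate[\indexED,\indexCoordinate][\indexRound]=\signNormal[{\distance[\indexCoordinate][\indexRound]+\locationEstimationErrorEle[\indexRound][\indexED,\indexCoordinate]}]$, each coordinate obeys a recursion in that coordinate alone and $\mathbb E\left[\|x^{(\indexRound)}\|^{2}\right]=\sum_{\indexCoordinate=1}^{3}\mathbb E\left[(\distance[\indexCoordinate][\indexRound])^{2}\right]$; it therefore suffices to bound $\mathbb E\left[(\distance[\indexCoordinate][\indexRound])^{2}\right]$ for one fixed $\indexCoordinate$, which I abbreviate by $\delta_\indexRound$. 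Because $\updateRate\le\maximumVelocity$ keeps the clamp in \eqref{eq:clamp} inactive, the scalar recursion is exactly \eqref{eq:IMV}, i.e., $\delta_{\indexRound+1}=\delta_\indexRound-\updateRate\Trefresh g_\indexRound$ with $g_\indexRound\triangleq\feedbackEle[\indexRound][\indexCoordinate]\in\{-1,0,1\}$ taking the values $+1,0,-1$ with probabilities $\phiFunc[+][\delta_\indexRound]$, $\phiFunc[0][\delta_\indexRound]$, $\phiFunc[-][\delta_\indexRound]$ from \eqref{eq:plus}--\eqref{eq:minus}.

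Next I would take the Lyapunov function $V(\delta_\indexRound)=\delta_\indexRound^{2}$ and compute the one-step conditional drift exactly. Expanding $(\delta_\indexRound-\updateRate\Trefresh g_\indexRound)^{2}$ and using $\mathbb E\left[g_\indexRound\mid\delta_\indexRound\right]=\phiFunc[+][\delta_\indexRound]-\phiFunc[-][\delta_\indexRound]$ and $\mathbb E\left[g_\indexRound^{2}\mid\delta_\indexRound\right]=\phiFunc[+][\delta_\indexRound]+\phiFunc[-][\delta_\indexRound]\le1$ gives
\begin{equation*}
\mathbb E\left[\delta_{\indexRound+1}^{2}\mid\delta_\indexRound\right]=\delta_\indexRound^{2}-2\updateRate\Trefresh\,\delta_\indexRound\bigl(\phiFunc[+][\delta_\indexRound]-\phiFunc[-][\delta_\indexRound]\bigr)+\updateRate^{2}\Trefresh^{2}\bigl(\phiFunc[+][\delta_\indexRound]+\phiFunc[-][\delta_\indexRound]\bigr),
\end{equation*}
so everything comes down to the sign and the growth of the cross term $\delta_\indexRound\bigl(\phiFunc[+][\delta_\indexRound]-\phiFunc[-][\delta_\indexRound]\bigr)$.

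For this I would use \eqref{eq:alphaExp}--\eqref{eq:betaExp}: $\probabilityPlusDecision=\normalCDF[\delta_\indexRound/\stdSensor]$, $\probabilityMinusDecision=1-\probabilityPlusDecision$, $\probabilityNullDecision=0$, so $\probabilityPlusDecision>\tfrac12>\probabilityMinusDecision$ precisely when $\delta_\indexRound>0$. Subtracting \eqref{eq:minus} from \eqref{eq:plus} and regrouping,
\begin{equation*}
\phiFunc[+][\delta_\indexRound]-\phiFunc[-][\delta_\indexRound]=\sum_{\anIndexForK=0}^{\ceil{\numberOfEdgeDevices/2}-1}\binom{\numberOfEdgeDevices}{\anIndexForK}\,(\probabilityPlusDecision\probabilityMinusDecision)^{\anIndexForK}\bigl(\probabilityPlusDecision^{\numberOfEdgeDevices-2\anIndexForK}-\probabilityMinusDecision^{\numberOfEdgeDevices-2\anIndexForK}\bigr),
\end{equation*}
and since $\numberOfEdgeDevices-2\anIndexForK\ge1$ in every summand, each summand carries the sign of $\probabilityPlusDecision-\probabilityMinusDecision$, hence of $\delta_\indexRound$; thus $\delta_\indexRound\bigl(\phiFunc[+][\delta_\indexRound]-\phiFunc[-][\delta_\indexRound]\bigr)\ge0$ for all $\delta_\indexRound$ and the drift never exceeds $\updateRate^{2}\Trefresh^{2}$. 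Moreover $\phiFunc[+][\delta_\indexRound]$ is the probability that a $\mathrm{Binomial}(\numberOfEdgeDevices,\probabilityMinusDecision)$ count of $-1$ votes is below $\numberOfEdgeDevices/2$, hence nondecreasing in $\probabilityPlusDecision$, and $\probabilityPlusDecision$ is increasing in $\delta_\indexRound$; therefore $\phiFunc[+][\delta_\indexRound]-\phiFunc[-][\delta_\indexRound]$ is odd in $\delta_\indexRound$, nondecreasing, and tends to $1$ as $\delta_\indexRound\to\infty$, so $h(\delta_\indexRound)\triangleq|\delta_\indexRound|\,\bigl|\phiFunc[+][\delta_\indexRound]-\phiFunc[-][\delta_\indexRound]\bigr|$ is nondecreasing in $|\delta_\indexRound|$ and diverges.

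Finally I would pick $\Delta_0>0$ with $2\updateRate\Trefresh\,h(\Delta_0)>\updateRate^{2}\Trefresh^{2}$ and set $\eta\triangleq2\updateRate\Trefresh\,h(\Delta_0)-\updateRate^{2}\Trefresh^{2}>0$. The drift identity and monotonicity of $h$ then give $\mathbb E\left[\delta_{\indexRound+1}^{2}\mid\delta_\indexRound\right]\le\delta_\indexRound^{2}-\eta$ whenever $|\delta_\indexRound|\ge\Delta_0$, while whenever $|\delta_\indexRound|<\Delta_0$ the recursion forces $|\delta_{\indexRound+1}|\le\Delta_0+\updateRate\Trefresh$, hence $\delta_{\indexRound+1}^{2}\le(\Delta_0+\updateRate\Trefresh)^{2}$, deterministically. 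This is precisely the stochastic-Lyapunov configuration---a uniformly negative drift outside a compact set together with a deterministic re-entry bound---from which global uniform ultimate boundedness in mean square follows by a standard supermartingale/Foster-type argument \cite{aastrom2012introduction,lv2023sliding}: with $\tau$ the first time $|\delta_\indexRound|\le\Delta_0$, the process $\delta_{\indexRound\wedge\tau}^{2}+\eta(\indexRound\wedge\tau)$ is a nonnegative supermartingale, so $\mathbb E[\tau]\le\mathbb E[\delta_0^{2}]/\eta<\infty$, and tracking the excursions above $\{|\delta_\indexRound|\le\Delta_0\}$ yields $\mathbb E\left[\delta_\indexRound^{2}\right]\le b_\star$ for all $\indexRound\ge L$, with $b_\star$ depending only on $\numberOfEdgeDevices,\stdSensor,\updateRate,\Trefresh$ and $L$ also on $\mathbb E[\delta_0^{2}]$; summing over the three coordinates gives the ultimate bound $b=3b_\star$. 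The main obstacle is the third step---establishing that $\phiFunc[+][\cdot]-\phiFunc[-][\cdot]$ keeps the sign of $\delta_\indexRound$ and that $h$ diverges, so one fixed compact set suffices for the negative-drift condition---with a secondary subtlety that $V(\delta)=\delta^{2}$ yields only an \emph{additive} negative drift (the sign nonlinearity caps the one-step displacement), so the passage to a genuine ultimate bound must go through the Foster/supermartingale route rather than a contraction estimate.
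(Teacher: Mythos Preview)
Your proposal is correct and follows essentially the same approach as the paper: both take the quadratic Lyapunov function $V(\delta)=\delta^{2}$, compute the one-step conditional drift to obtain $-2\updateRate\Trefresh\,\delta\bigl(\phiFunc[+][\delta]-\phiFunc[-][\delta]\bigr)+\updateRate^{2}\Trefresh^{2}\bigl(\phiFunc[+][\delta]+\phiFunc[-][\delta]\bigr)$, show that the cross term is nonnegative and grows unboundedly in $|\delta|$, and then select a threshold (your $\Delta_0$, the paper's $\gamma$) beyond which the drift is strictly negative. Your write-up is in fact somewhat more careful than the paper's---you make the coordinate decoupling explicit, give a clean algebraic argument for the sign of $\phiFunc[+][\cdot]-\phiFunc[-][\cdot]$, and spell out the Foster/supermartingale passage from ``negative drift outside a compact set plus bounded jump inside'' to a uniform ultimate bound---whereas the paper simply asserts the monotonicity and concludes directly.
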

\begin{proof}
	Let $\energyFunction[x] \triangleq x^2$.  Then, by using \eqref{eq:IMV},
	\begin{align}
		&  \expectationOperator[{\energyFunction[{\distance[\indexCoordinate][\indexRound+1]}]|\distance[\indexCoordinate][\indexRound]}][] - \energyFunction[{\distance[\indexCoordinate][\indexRound]}] \label{eq:thm2-1}
		\\
		&= -2 \updateRate \Trefresh   \expectationOperator[{\feedbackEle[\indexRound][\indexCoordinate]|\distance[\indexCoordinate][\indexRound]}][] \distance[\indexCoordinate][\indexRound] + \updateRate^2 \Trefresh^2 \expectationOperator[{(\feedbackEle[\indexRound][\indexCoordinate])^2|\distance[\indexCoordinate][\indexRound]}][]. \nonumber
	\end{align}
By using \eqref{eq:plus} and \eqref{eq:minus}, 
\begin{align}
	&\expectationOperator[{\feedbackEle[\indexRound][\indexCoordinate]|\distance[\indexCoordinate][\indexRound]}][] = \phiFunc[+][{\distance[\indexCoordinate][\indexRound]}] - \phiFunc[-][{\distance[\indexCoordinate][\indexRound]}]~, \label{eq:exp1}\\
	& \expectationOperator[{(\feedbackEle[\indexRound][\indexCoordinate])^2|~\distance[\indexCoordinate][\indexRound]}][] = \phiFunc[+][{\distance[\indexCoordinate][\indexRound]}] + \phiFunc[-][{\distance[\indexCoordinate][\indexRound]}] \in [0,1]~.\label{eq:exp2}
\end{align}
Therefore, by using \eqref{eq:exp1} and \eqref{eq:exp2}, \eqref{eq:thm2-1} yields
\begin{align*}
	&  \expectationOperator[{\energyFunction[{\distance[\indexCoordinate][\indexRound+1]}]|\distance[\indexCoordinate][\indexRound]}][]  - \energyFunction[{\distance[\indexCoordinate][\indexRound]}] \\
	&= -2 \updateRate \Trefresh \left(\phiFunc[+][{\distance[\indexCoordinate][\indexRound]}] - \phiFunc[-][{\distance[\indexCoordinate][\indexRound]}]\right) \distance[\indexCoordinate][\indexRound] \\&~~~~ + \updateRate^2\Trefresh^2 \left( \phiFunc[+][{\distance[\indexCoordinate][\indexRound]}] + \phiFunc[-][{\distance[\indexCoordinate][\indexRound]}] \right)\\
	&\le -2 \updateRate\Trefresh \left(\phiFunc[+][{\distance[\indexCoordinate][\indexRound]}] - \phiFunc[-][{\distance[\indexCoordinate][\indexRound]}]\right) \distance[\indexCoordinate][\indexRound] + \updateRate^2 \Trefresh^2~.
\end{align*}
Note that $\distance[\indexCoordinate][\indexRound] > 0 \Rightarrow \phiFunc[+][{\distance[\indexCoordinate][\indexRound]}] > \phiFunc[-][{\distance[\indexCoordinate][\indexRound]}]$ and $\distance[\indexCoordinate][\indexRound] < 0 \Rightarrow \phiFunc[+][{\distance[\indexCoordinate][\indexRound]}] < \phiFunc[-][{\distance[\indexCoordinate][\indexRound]}]$, which means
\[
\left(\phiFunc[+][{\distance[\indexCoordinate][\indexRound]}] - \phiFunc[-][{\distance[\indexCoordinate][\indexRound]}]\right) \distance[\indexCoordinate][\indexRound] > 0~,
\]
always holds. Meanwhile, $\left(\phiFunc[{+}][{\distance[\indexCoordinate][\indexRound]}] - \phiFunc[{-}][{\distance[\indexCoordinate][\indexRound]}]\right) \distance[\indexCoordinate][\indexRound]$ is monotonic increasing when $\distance[\indexCoordinate][\indexRound] >0$ and monotonic decreasing when $\distance[\indexCoordinate][\indexRound] < 0$  with the global minimum at $\distance[\indexCoordinate][\indexRound] =0$. Thus, we can define
	\[
	\gamma = \max_{q \in \realNumbers}|q|,~~{\rm s.t.}~~ \left(\phiFunc[{+}][q] - \phiFunc[{-}][q]\right) q = \frac{\updateRate\Trefresh}{2}~,
	\]
	and the parameter $\gamma$ must be finite, whose value depends on $\updateRate\Trefresh$ and $\stdSensor$.
	For any $\distance[\indexCoordinate][\indexRound]$ such that $|\distance[\indexCoordinate][\indexRound]| > \gamma$, we have 
	\[
	\left(\phiFunc[{+}][{\distance[\indexCoordinate][\indexRound]}] - \phiFunc[{-}][{\distance[\indexCoordinate][\indexRound]}]\right) \distance[\indexCoordinate][\indexRound] > \frac{\updateRate\Trefresh}{2}~,
	\]
	and then
	\[
	 \expectationOperator[{\energyFunction[{\distance[\indexCoordinate][\indexRound+1]}]|\distance[\indexCoordinate][\indexRound]}][]  - \energyFunction[{\distance[\indexCoordinate][\indexRound]}] < 0~.
	\]
	Thus, the system is mean-square globally uniformly ultimately bounded and the mean-square ultimate bound is determined by $\gamma$.
\end{proof}

\subsection{Case 3: MV with the Proposed OAC}
By using Corollary~\ref{cor:cerpqzKpKnKz}, we can re-calculate $\phiFunc[{+}][{\distance[\indexCoordinate][\indexRound]}]$ and $\phiFunc[{-}][{\distance[\indexCoordinate][\indexRound]}]$ defined in \eqref{eq:plus} and \eqref{eq:minus}, respectively, as
\begin{align*}
	&\phiFunc[{+}][{\distance[\indexCoordinate][\indexRound]}] =1-
	\underbrace{\sum_{\anIndexForK=0}^{\numberOfEdgeDevices}	\binom{\numberOfEdgeDevices}{\anIndexForK}		\probabilityPlusDecision^{\anIndexForK}\probabilityMinusDecision^{\numberOfEdgeDevices-\anIndexForK} \probabilityErrorPlus[\indexIteration;{\anIndexForK, \numberOfEdgeDevices-\anIndexForK, 0}]}_{\triangleq\positiveHalfOAC}~,
	\\
	&\phiFunc[{-}][{\distance[\indexCoordinate][\indexRound]}] =1-	\phiFunc[{+}][{\distance[\indexCoordinate][\indexRound]}]~,
\end{align*}
where $\probabilityPlusDecision$ and $\probabilityMinusDecision$ are given in \eqref{eq:alphaExp} and \eqref{eq:betaExp}, respectively.
Now the convergence of the system under the OAC (MV) strategy is presented in the following theorem:
\begin{theorem}
	Given $\updateRate \le \maximumVelocity$, the system in~\eqref{eq:dynamics1} is globally uniformly ultimately bounded in mean square under the MV (OAC) control strategy in~\eqref{eq:decision}.
	\label{th:oacMVstable}
\end{theorem}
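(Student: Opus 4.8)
The plan is to follow the Lyapunov argument of Theorem~\ref{thm2} almost verbatim; the only new ingredient is the sign and growth behaviour of the OAC-induced functions $\phiFunc[{+}][\cdot]$ and $\phiFunc[{-}][\cdot]$ defined just before the theorem. First I would note that the three coordinates decouple: \eqref{eq:dynamics1} holds separately for each $\indexCoordinate\in\{1,2,3\}$ with its own votes and its own detected \ac{MV}, so it suffices to bound $\expectationOperator[{(\distance[\indexCoordinate][\indexRound])^2}][]$ for a fixed $\indexCoordinate$ and then sum the three bounds. Since $|\feedbackEle[\indexRound][\indexCoordinate]|=|\majorityVoteDetectedEle[\indexCoordinate]|\le1$ and $\updateRate\le\maximumVelocity$, the clamp in \eqref{eq:clamp} is inactive, so \eqref{eq:dynamics1} reduces to $\distance[\indexCoordinate][\indexRound+1]=\distance[\indexCoordinate][\indexRound]-\updateRate\Trefresh\majorityVoteDetectedEle[\indexCoordinate]$. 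The structural fact distinguishing this case from Case~2 is that $\majorityVoteDetectedEle[\indexCoordinate]=\signNormal[{\metricPlus[\indexIteration]-\metricMinus[\indexIteration]}]\in\{-1,+1\}$ almost surely, because $\metricPlus[\indexIteration]-\metricMinus[\indexIteration]$ admits a density by Lemma~\ref{lemma:errProbGivenVote}; hence the zero branch of $\phiFunc[\cdot][\cdot]$ vanishes and
\begin{align*}
\expectationOperator[{(\feedbackEle[\indexRound][\indexCoordinate])^2|~\distance[\indexCoordinate][\indexRound]}][]=1~,\qquad
\expectationOperator[{\feedbackEle[\indexRound][\indexCoordinate]|~\distance[\indexCoordinate][\indexRound]}][]=\phiFunc[{+}][{\distance[\indexCoordinate][\indexRound]}]-\phiFunc[{-}][{\distance[\indexCoordinate][\indexRound]}]=1-2\positiveHalfOAC~,
\end{align*}
with $\positiveHalfOAC$ as in Section~\ref{sec:convergence}.

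Taking $\energyFunction[x]\triangleq x^2$ and repeating the computation leading to \eqref{eq:thm2-1}, I would obtain
\begin{align*}
\expectationOperator[{\energyFunction[{\distance[\indexCoordinate][\indexRound+1]}]|~\distance[\indexCoordinate][\indexRound]}][]-\energyFunction[{\distance[\indexCoordinate][\indexRound]}]
=-2\updateRate\Trefresh\,g(\distance[\indexCoordinate][\indexRound])+\updateRate^2\Trefresh^2~,
\end{align*}
where $g(\delta)\triangleq\bigl(\phiFunc[{+}][\delta]-\phiFunc[{-}][\delta]\bigr)\delta$. The whole theorem then rests on showing $g(\delta)\to+\infty$ as $|\delta|\to\infty$; everything after that is the drift-based ultimate-boundedness argument already used in Theorem~\ref{thm2}.

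To control $g$, I would use \eqref{eq:alphaExp}--\eqref{eq:betaExp}: as $\delta\to+\infty$, $\probabilityPlusDecision=\normalCDF[\delta/\stdSensor]\to1$ and $\probabilityMinusDecision=\normalCDF[-\delta/\stdSensor]\to0$, so the binomial mixture $\positiveHalfOAC$ tends to $\probabilityErrorPlus[\indexIteration;{\numberOfEdgeDevices,0,0}]$, the \ac{CER} for the configuration in which all $\numberOfEdgeDevices$ sensors correctly vote $+1$; symmetrically, $\positiveHalfOAC\to\probabilityErrorPlus[\indexIteration;{0,\numberOfEdgeDevices,0}]$ as $\delta\to-\infty$. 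The key inequality I need is $\probabilityErrorPlus[\indexIteration;{\numberOfEdgeDevices,0,0}]<\tfrac12<\probabilityErrorPlus[\indexIteration;{0,\numberOfEdgeDevices,0}]$, which I would prove by a stochastic-dominance comparison of $\metricPlus[\indexIteration]$ and $\metricMinus[\indexIteration]$ built from the per-tone rates $\rate[{\funcEnum(\seqx)}]^{-1}=\sum_{\indexED}\constante^{2\funcfForFinalAmplitudeED[\indexED](\seqx)}+\noiseVariance$ of Lemma~\ref{lemma:errProbGivenVote}: when every sensor votes $+1$, every tone in the ``$+$'' half carries signal-plus-noise whereas every tone in the ``$-$'' half carries noise only, so $\metricPlus[\indexIteration]$ stochastically dominates $\metricMinus[\indexIteration]$; since the two metrics are independent and their difference has no atom at zero, $\probability[{\metricPlus[\indexIteration]-\metricMinus[\indexIteration]<0}]<\tfrac12$, and the all-$(-1)$ configuration gives the reverse. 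Consequently $\lim_{\delta\to+\infty}(\phiFunc[{+}][\delta]-\phiFunc[{-}][\delta])=1-2\probabilityErrorPlus[\indexIteration;{\numberOfEdgeDevices,0,0}]>0$ and $\lim_{\delta\to-\infty}(\phiFunc[{+}][\delta]-\phiFunc[{-}][\delta])<0$, so $g(\delta)\to+\infty$ at both ends.

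Finally, since $\delta\mapsto\phiFunc[{+}][\delta]-\phiFunc[{-}][\delta]$ is continuous in $\delta$ (a polynomial in $\normalCDF[\delta/\stdSensor]$), the sub-level set $\{\delta\in\realNumbers:g(\delta)\le\updateRate\Trefresh/2\}$ is bounded --- here, unlike Case~2, I do not need monotonicity of the CER-weighted mixture, only continuity plus the divergence just established. Letting $\gamma$ be a finite bound for the absolute values of that set, $|\distance[\indexCoordinate][\indexRound]|>\gamma$ forces $g(\distance[\indexCoordinate][\indexRound])>\updateRate\Trefresh/2$ and hence a strictly negative drift, whereas for $|\distance[\indexCoordinate][\indexRound]|\le\gamma$ the drift is bounded above by a constant. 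The stochastic-Lyapunov/supermartingale argument from the proof of Theorem~\ref{thm2} then yields a finite ultimate bound on $\expectationOperator[{(\distance[\indexCoordinate][\indexRound])^2}][]$ depending only on $\gamma$, $\updateRate$, $\Trefresh$, and $\stdSensor$; summing over $\indexCoordinate\in\{1,2,3\}$ proves the claim. I expect the step $\probabilityErrorPlus[\indexIteration;{\numberOfEdgeDevices,0,0}]<1/2$ to be the main obstacle: it must be made rigorous while accounting for the random phases $\angleexpAllAtRound[\indexED,\indexIteration][\indexRound]$ and the finite-$\scalingParameters$ normalization in \eqref{eq:normalizationCoef} (not just the $\scalingParameters\to\infty$ limit), so that $g(\delta)$ provably inherits the sign of $\delta$ for large $|\delta|$; the remainder is a transcription of the Case~2 reasoning.
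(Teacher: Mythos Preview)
Your proposal follows the same Lyapunov route as the paper: both take $V(x)=x^2$, both reduce the conditional drift to $-2\updateRate\Trefresh\bigl(\phiFunc[{+}][{\distance[\indexCoordinate][\indexRound]}]-\phiFunc[{-}][{\distance[\indexCoordinate][\indexRound]}]\bigr)\distance[\indexCoordinate][\indexRound]+\updateRate^2\Trefresh^2$ using $\expectationOperator[{(\feedbackEle[\indexRound][\indexCoordinate])^2|\distance[\indexCoordinate][\indexRound]}][]=1$, and both then invoke the ultimate-boundedness argument of Theorem~\ref{thm2}. The difference lies entirely in how the key sign/growth property of $g(\delta)=\bigl(\phiFunc[{+}][\delta]-\phiFunc[{-}][\delta]\bigr)\delta$ is handled. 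The paper simply \emph{asserts} that $\positiveHalfOAC<\tfrac12$ for $\distance[\indexCoordinate][\indexRound]>0$ and $\positiveHalfOAC>\tfrac12$ for $\distance[\indexCoordinate][\indexRound]<0$, i.e., $g(\delta)>0$ for every $\delta\neq0$, and then refers back to the monotonicity-based level-set argument of Theorem~\ref{thm2} without checking that the OAC-weighted mixture is monotone. You instead establish only the asymptotic sign, via a stochastic-dominance comparison of $\metricPlus[\indexIteration]$ and $\metricMinus[\indexIteration]$ in the all-$(+1)$ and all-$(-1)$ configurations, and replace monotonicity by continuity of $\delta\mapsto\phiFunc[{+}][\delta]-\phiFunc[{-}][\delta]$ to conclude that the sub-level set $\{g\le\updateRate\Trefresh/2\}$ is bounded. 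Your route is slightly longer but more self-contained; it actually closes a gap the paper leaves open, since neither the universal sign of $g$ nor the monotonicity borrowed from Theorem~\ref{thm2} is verified for the OAC case there. The step you flag as the main obstacle, $\probabilityErrorPlus[\indexIteration;{\numberOfEdgeDevices,0,0}]<\tfrac12$, is precisely what the paper takes for granted.
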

\begin{proof}
	Similar to the proof of Theorem~\ref{thm2}, let $\energyFunction[x] \triangleq x^2$ and we have
	\begin{align}
		&  \expectationOperator[{\energyFunction[{\distance[\indexCoordinate][\indexRound+1]}]|\distance[\indexCoordinate][\indexRound]}][]  - \energyFunction[{\distance[\indexCoordinate][\indexRound]}]  \nonumber\\
		&= -2 \updateRate \Trefresh \: \expectationOperator[{\feedbackEle[\indexRound][\indexCoordinate]|\distance[\indexCoordinate][\indexRound]}][] \distance[\indexCoordinate][\indexRound] + \updateRate^2 \Trefresh^2 \expectationOperator[{(\feedbackEle[\indexRound][\indexCoordinate])^2|~\distance[\indexCoordinate][\indexRound]}][]. \nonumber
	\end{align}
	Based on the distribution of $\feedbackEle[\indexRound][\indexCoordinate]$ for the OAC (MV) case, 
	\begin{align*}
		&\expectationOperator[{\feedbackEle[\indexRound][\indexCoordinate]|\distance[\indexCoordinate][\indexRound]}][] = 
	\phiFunc[{+}][{\distance[\indexCoordinate][\indexRound]}] - \phiFunc[{-}][{\distance[\indexCoordinate][\indexRound]}]=1-
	2\positiveHalfOAC~,\nonumber\\
		& \expectationOperator[{(\feedbackEle[\indexRound][\indexCoordinate])^2|~\distance[\indexCoordinate][\indexRound]}][] = 1~.
	\end{align*}
	Therefore,
	\begin{align*}
		&  \expectationOperator[{\energyFunction[{\distance[\indexCoordinate][\indexRound+1]}]|\distance[\indexCoordinate][\indexRound]}][]  - \energyFunction[{\distance[\indexCoordinate][\indexRound]}]  \\
		&=-2 \updateRate \Trefresh \: \left( \phiFunc[{+}][{\distance[\indexCoordinate][\indexRound]}] -  \phiFunc[{-}][{\distance[\indexCoordinate][\indexRound]}]\right) \distance[\indexCoordinate][\indexRound]+ \updateRate^2 \Trefresh^2~. 
	\end{align*}
	Note that $\positiveHalfOAC<1/2$ for $\distance[\indexCoordinate][\indexRound]>0$ and $\positiveHalfOAC>1/2$ for $\distance[\indexCoordinate][\indexRound]<0$ with the proposed OAC scheme, which means
	\[
	\left(\phiFunc[+][{\distance[\indexCoordinate][\indexRound]}] - \phiFunc[-][{\distance[\indexCoordinate][\indexRound]}]\right) \distance[\indexCoordinate][\indexRound] > 0~,
	\]
	always holds.  Thus, following a similar analysis in the proof of Theorem~\ref{thm2}, we conclude that the system is mean-square globally uniformly ultimately bounded and the ultimate bound is determined by $\updateRate\Trefresh$ and $\stdSensor$.
\end{proof}

\def\goldenbaumLength{L}
\section{Numerical Results}
\label{sec:numerical}
In this section, we first numerically analyze the performance of the scheme for an arbitrary application. Subsequently, we apply it to the UAV waypoint flight control scenario discussed in Section~\ref{sec:system}. 
For all analyses, we assume that there are  $\numberOfEdgeDevices=50$ sensors. 
For comparison, we also generate our results with Goldenbaum's non-coherent \ac{OAC} scheme discussed in \cite{Goldenbaum_2013tcom}. 
In this approach, the power of the transmitted signal is modulated. To this end, a transmitter maps three possible votes, i.e., $-1$, $0$, and $1$, to the symbols $0$, $1$, and $2$, respectively, and multiplies a unimodular random sequence of length $\goldenbaumLength$ with the square of the symbol to be aggregated. The receiver calculates the norm-square of the aggregated sequences and re-scales it with $f(x)=x/\goldenbaumLength-\numberOfEdgeDevices$. It then calculates the sign of scaled value to obtain the \ac{MV}. To make a fair comparison, $\goldenbaumLength$ is set to the nearest integer to
 $2^\numberOfIterations/\numberOfIterations$, and $\numberOfIterations$ sequences are mapped to the subcarriers back-to-back to compute $\numberOfIterations$ \acp{MV}. For instance, for $\numberOfIterations=6$, our scheme computes $6$~MVs by using $64$~resources, i.e., $64$~OFDM subcarriers. Hence, we choose $\goldenbaumLength$ to be $11$ as $2^\numberOfIterations/\numberOfIterations\approx10.67$ and use orthogonal resources to compute $6$~MVs for Goldenbaum's approach. We choose the phase of an element of unimodular sequence uniformly between 0 and $2\pi$.  Note that the relative positions of the sensors to the UAV and the room shape do not alter  the performance of the evaluated schemes due to the power control assumption and sensor model discussed in Section~\ref{sec:system}.


\subsection{CER and PMEPR Results}

\begin{figure}
	\centering
	\includegraphics[width =\figuresize]{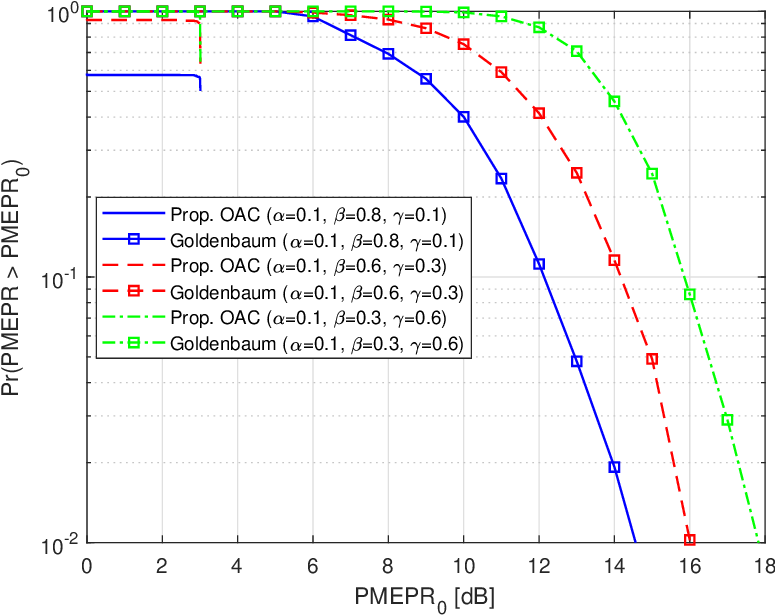}
	\caption{PMEPR distribution.}
	\label{fig:pmepr}
\end{figure}
In \figurename~\ref{fig:pmepr}, we analyze the \ac{PMEPR} distribution of the transmitted signals  for $\numberOfIterations=8$, $\goldenbaumLength=32$,  $\probabilityPlusDecision=0.1$, and $\probabilityNullDecision\in\{0.1,0.3,0.6\}$. As can be seen from \figurename~\ref{fig:pmepr}, the \ac{PMEPR} of a transmitted signal with the proposed scheme is always less than or equal to $3$~dB due to the properties of the \acp{CS}. If there are no absentee votes, the maximum \ac{PMEPR} of the proposed scheme is $0$~dB since a single subcarrier is used for the transmission (see the cases for $\voteVectorED[\indexED]=(1, 1, 1)$ and $\voteVectorED[\indexED]=(1, 1, -1)$ in Example~\ref{ex:sequences}). Hence, for a larger absentee vote probability, the probability of observing $0$~dB PMEPR increases. The combination of sequences that lead to $0$~dB  and $3$~dB PMEPR values and results in the jumps in the PMEPR distribution given in \figurename~\ref{fig:pmepr}. The \ac{PMEPR} characteristics for Goldenbaum's approach are similar to the ones for typical OFDM transmissions and the gap between the proposed scheme and Goldenbaum's approach is considerable large in term of \ac{PMEPR}.

\begin{figure*}
	\centering
	\subfloat[{Flat fading (Rice distribution, infinite $K$-factor, $\probabilityNullDecision=0.1$).}]{\includegraphics[width =\figuresizeS]{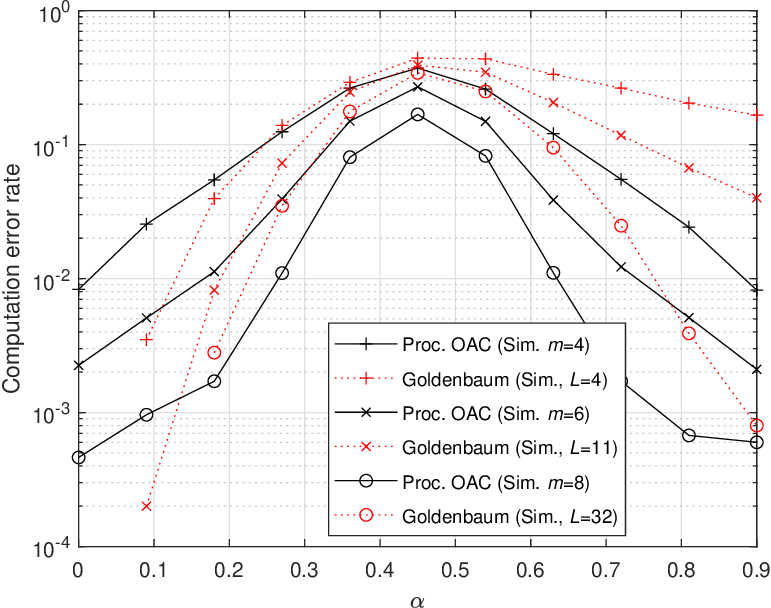}
		\label{subfig:CER01riceFlat}}~~~~~~~~
	\subfloat[{Flat fading (Rice distribution, infinite  $K$-factor, $\probabilityNullDecision=0.6$).}]{\includegraphics[width =\figuresizeS]{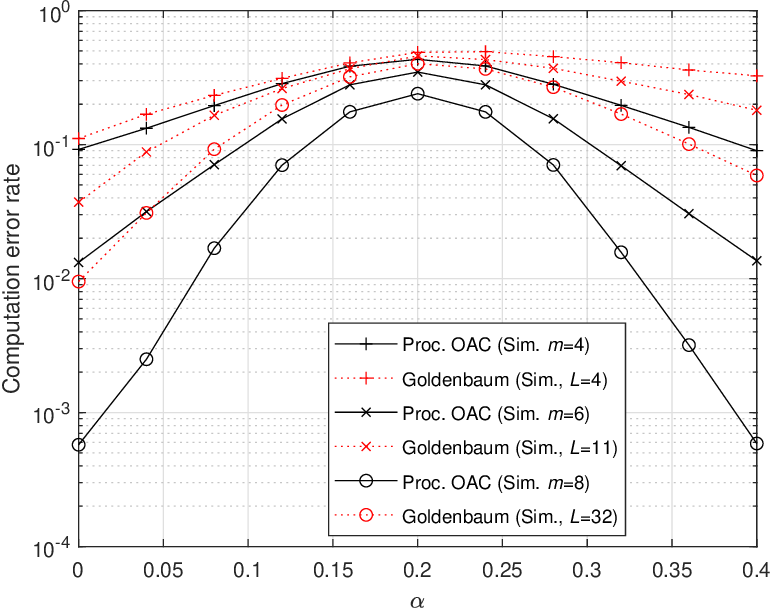}
	\label{subfig:CER06riceFlat}}		
			\\
	\subfloat[{Flat fading   (Rayleigh distribution, $\probabilityNullDecision=0.1$).}]{\includegraphics[width =\figuresizeS]{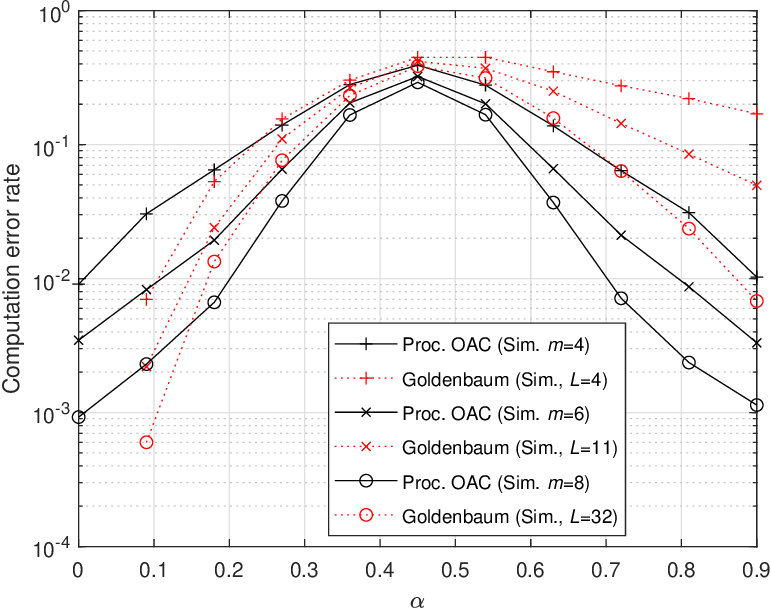}
		\label{subfig:CER01flatFading}}~~~~~~~~
	\subfloat[{Flat fading  (Rayleigh distribution, $\probabilityNullDecision=0.6$).}]{\includegraphics[width =\figuresizeS]{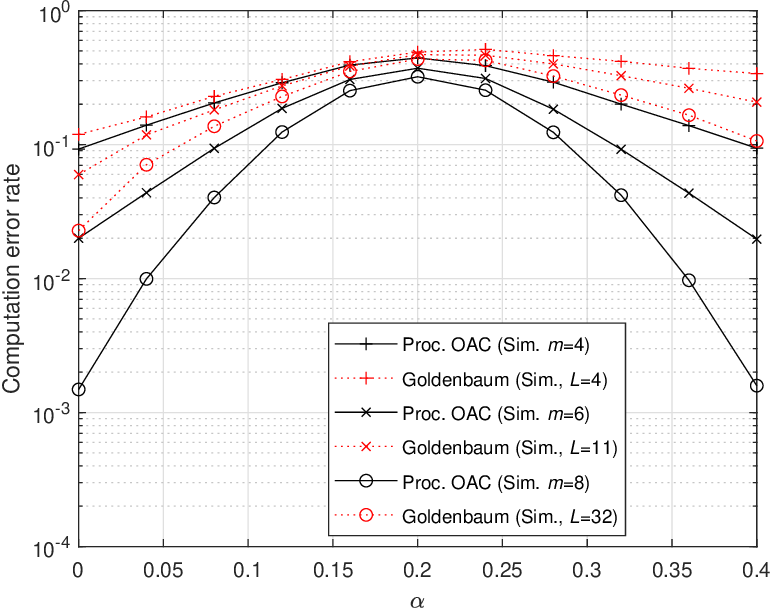}
		\label{subfig:CER06flatFading}}
		\\
	\subfloat[{Frequency-selective fading (Rayleigh distribution, $\probabilityNullDecision=0.1$).}]{\includegraphics[width =\figuresizeS]{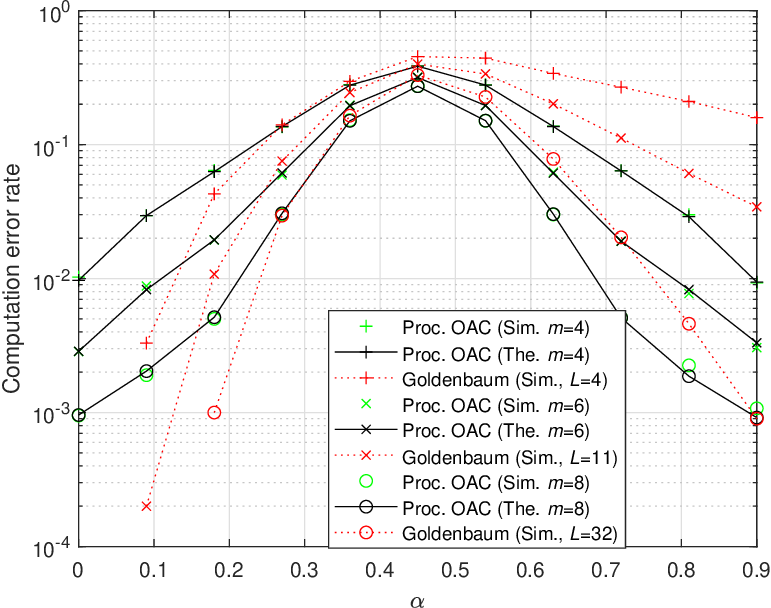}\label{subfig:CER01selectiveFading}}~~~~~~~~
	\subfloat[{Frequency-selective fading (Rayleigh distribution, $\probabilityNullDecision=0.6$).}]{\includegraphics[width =\figuresizeS]{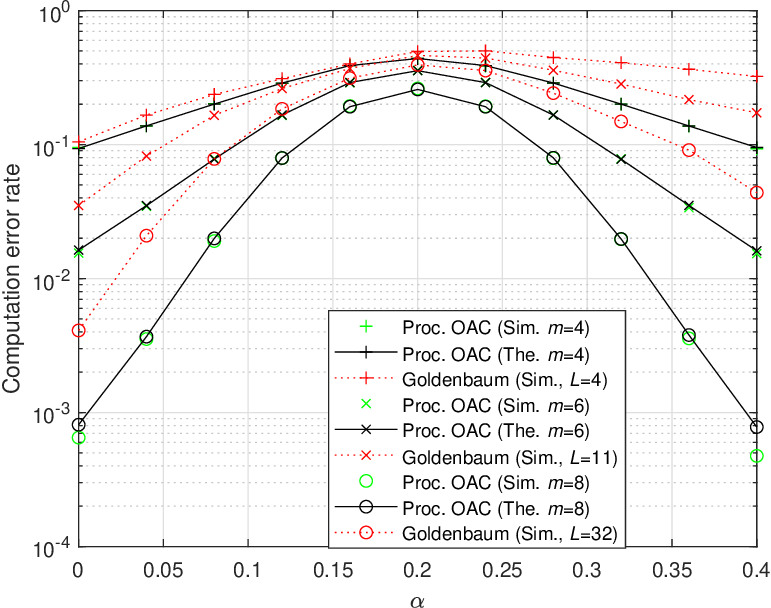}\label{subfig:CER06selectiveFading}}	
	\caption{CER for flat and frequency-selective channels ($\numberOfEdgeDevices=50$ sensors).}
	\label{fig:cerChannel}
\end{figure*}
In \figurename~\ref{fig:cerChannel}, we analyze $\probabilityCER$ defined in \eqref{eq:CERdef} for  $\probabilityNullDecision\in\{0.1,0.6\}$,  $\numberOfIterations=\{4,6,8\}$ for the proposed scheme and $\goldenbaumLength=\{4,11,32\}$ for Goldenbaum's approach by sweeping $\probabilityPlusDecision$ in flat-fading   (i.e., $\channelAtSubcarrier[\indexED,\varMonomial]=\channelAtSubcarrier[\indexED,\varMonomial']\sim\complexGaussian[0][1]$, $\varMonomial\neq\varMonomial'$ for Rayleigh distribution and $\channelAtSubcarrier[\indexED,\varMonomial]=\channelAtSubcarrier[\indexED,\varMonomial']=\constante^{\constantj2\pi\theta_{\indexED,\varMonomial}}$, $\theta_{\indexED,\varMonomial}\sim\uniformDistribution[0][1]$, $\varMonomial\neq\varMonomial'$ for Rice distribution with an infinite Rician $K$-factor) and frequency-selective  (i.e., $\channelAtSubcarrier[\indexED,\varMonomial]\sim\complexGaussian[0][1]$) channels, respectively. For both schemes, as expected, the \ac{CER} improves for a small or a large $\probabilityPlusDecision$ since more sensors vote for $-1$ or $+1$, respectively. The performance in the frequency selective channel is slightly better than the ones in flat-fading channels because of the diversity~gain. Also, both schemes achieve a better \ac{CER} for increasing $\numberOfIterations$ or $\goldenbaumLength$ at the expense of more resource consumption in all channel conditions.
When $\probabilityPlusDecision$ and $\probabilityNullDecision$ are small (as in \figurename~\ref{fig:cerChannel}\subref{subfig:CER01riceFlat},  \figurename~\ref{fig:cerChannel}\subref{subfig:CER01flatFading} and \figurename~\ref{fig:cerChannel}\subref{subfig:CER01selectiveFading}), most of the votes are $-1$. In this case, the norm square of the superposed sequence, calculated at the receiver for Goldenbaum's scheme, becomes less sensitive to the fading coefficients as most of the sensors are silent. On the contrary, if $\probabilityPlusDecision$ gets larger, the norm square of the superposed sequence fluctuates based on the fading coefficients. As a result, the corresponding detector performs poorly compared to the case with smaller $\probabilityPlusDecision$. Thus, Goldenbaum’s scheme exhibits an unbalanced behavior across different $\probabilityPlusDecision$ values. In contrast,  the proposed scheme does not show such unbalanced characteristics while improving the performance in most cases.
 For $\probabilityNullDecision=0.1$, the proposed scheme performs better than Goldenbaum's approach for $\probabilityPlusDecision\in[0.27,0.8]$. When there are more absentee votes, i.e., $\probabilityNullDecision=0.6$, the proposed scheme is superior to Goldenbaum's scheme for all values of $\probabilityPlusDecision$, as can be seen in \figurename~\ref{fig:cerChannel}\subref{subfig:CER06riceFlat},  \figurename~\ref{fig:cerChannel}\subref{subfig:CER06flatFading} and \figurename~\ref{fig:cerChannel}\subref{subfig:CER06selectiveFading}.  Also, the theoretical CERs based on the expression in \eqref{eq:CERpqz} are well-aligned with the simulation results in \figurename~\ref{fig:cerChannel}\subref{subfig:CER01selectiveFading} and \figurename~\ref{fig:cerChannel}\subref{subfig:CER06selectiveFading}.

\if\IEEEsubmission1 \def\figuresize{3.3in} \else \def\figuresize{3.5in} \fi
\begin{figure*}
	\centering
	\subfloat[{$\probabilityPlusDecision=1$, $\probabilityMinusDecision=0$, $\probabilityNullDecision=0$ or $\probabilityPlusDecision=0$, $\probabilityMinusDecision=1$, $\probabilityNullDecision=0$.}]{\includegraphics[width =\figuresize]{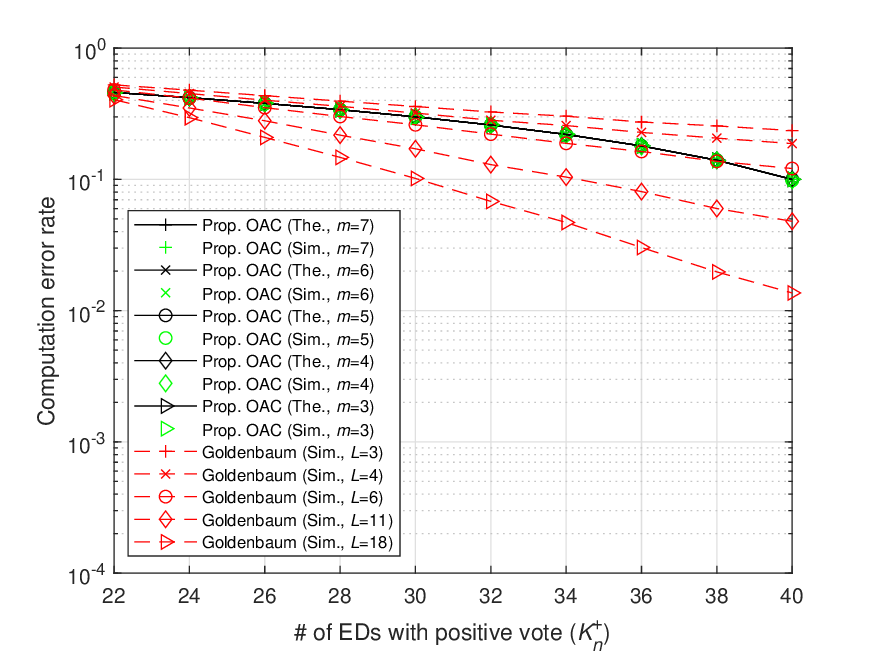}
		\label{subfig:case_p1_p0_z0_errProb}}	
	\subfloat[{$\probabilityPlusDecision=1/2$, $\probabilityMinusDecision=1/2$, $\probabilityNullDecision=0$.}]{\includegraphics[width =\figuresize]{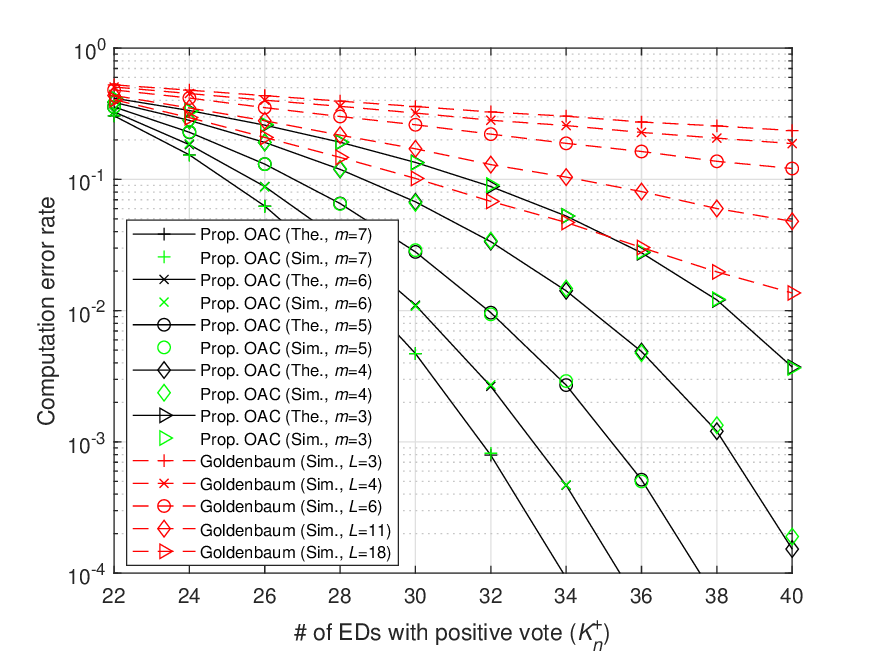}
		\label{subfig:case_p12_p12_z0_errProb}}	\\		
	\subfloat[{$\probabilityPlusDecision=0$, $\probabilityMinusDecision=0$, $\probabilityNullDecision=1$.}]{\includegraphics[width =\figuresize]{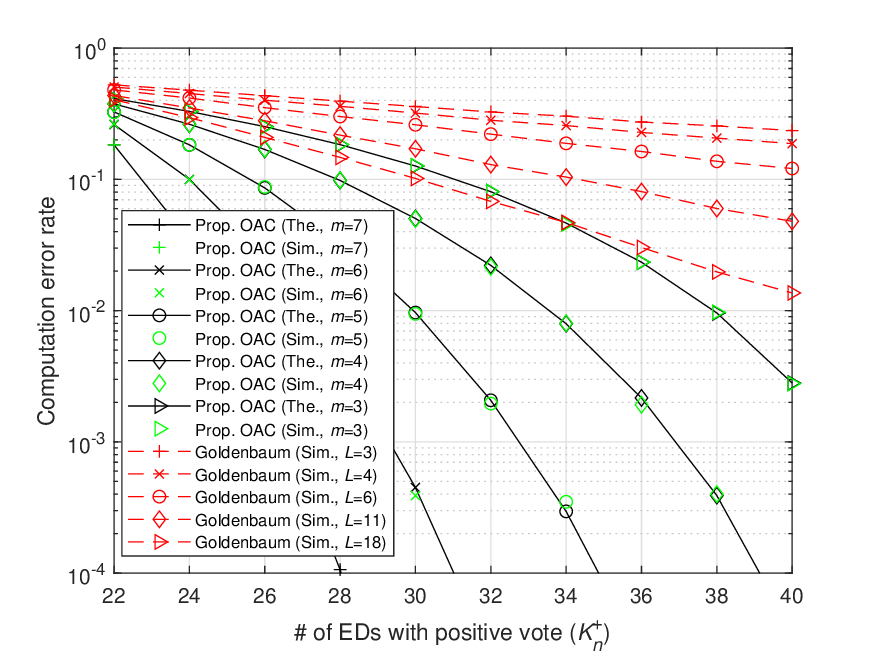}
		\label{subfig:case_p0_p0_z1_errProb}}
	\subfloat[{$\probabilityPlusDecision=1/3$, $\probabilityMinusDecision=1/3$, $\probabilityNullDecision=1/3$.}]{\includegraphics[width =\figuresize]{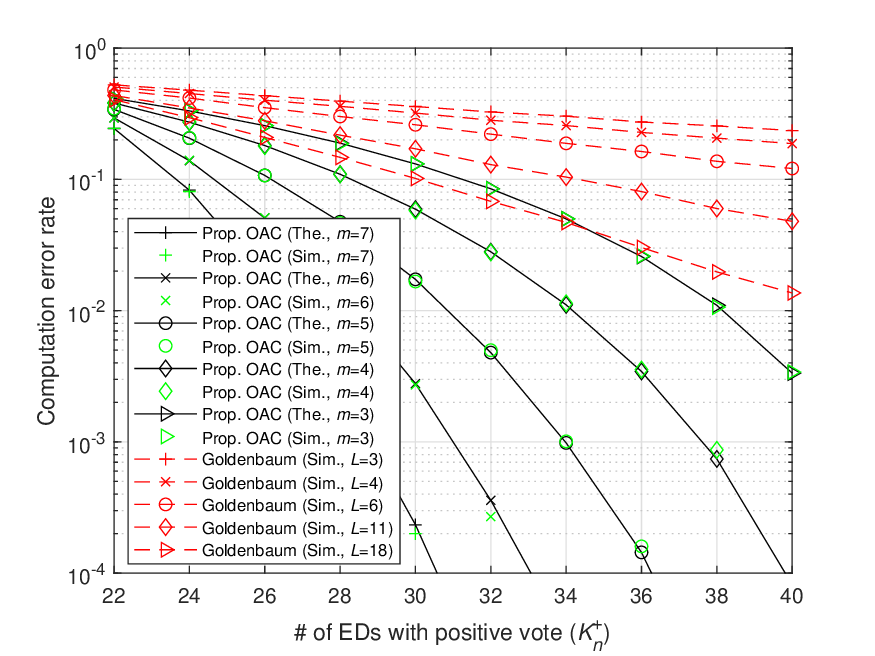}
		\label{subfig:case_p13_p13_z13_errProb}}		
	\caption{$\probabilityErrorPlus[\indexIteration;{\numberOfEDsPlus[\indexIteration], \numberOfEDsMinus[\indexIteration], \numberOfEDsZero[\indexIteration]}]$ for different values of $\probabilityPlusDecision$, $\probabilityMinusDecision$, $\probabilityNullDecision$ ($\numberOfEdgeDevices=50$~EDs, $\numberOfEDsZero[\indexIteration]=10$~EDs, $\SNR=10$~dB, frequency-selective fading channel).}
	\label{fig:cer}
\end{figure*}
In \figurename~\ref{fig:cer}, we evaluate $\computationErrorRate[\indexIteration;{\numberOfEDsPlus[\indexIteration], \numberOfEDsMinus[\indexIteration], \numberOfEDsZero[\indexIteration]}]$ in  Corollary~\ref{cor:cerpqzKpKnKz} in frequency-selective fading channel by increasing $\numberOfEDsPlus[\indexIteration]$ from $22$ to $40$ for $\numberOfEDsZero[\indexIteration]=10$ (i.e., the first case of \eqref{eq:cerKpKnKz}),  $\SNR=10$~dB, $\numberOfIterations\in\{3,4,5,6,7\}$ for the proposed scheme and $\goldenbaumLength\in\{3,4,6,11,18\}$ for the Goldenbaum's approach. In \figurename~\ref{fig:cer}\subref{subfig:case_p1_p0_z0_errProb}, we assume $\probabilityPlusDecision=1$, $\probabilityMinusDecision=0$, and $\probabilityNullDecision=0$ (or $\probabilityPlusDecision=0$, $\probabilityMinusDecision=1$, and $\probabilityNullDecision=0$).  As $\probabilityNullDecision=0$, there are no absentee votes. Also, the same number of sensors activates the same element of the transmitted sequence for all realizations. Since the other elements are not used for the transmission and the energy accumulation is non-coherent, the scheme does not provide any performance gain with increasing $\numberOfIterations$. In \figurename~\ref{fig:cer}\subref{subfig:case_p12_p12_z0_errProb}, we assume that $\probabilityPlusDecision=1/2$, $\probabilityMinusDecision=1/2$, and $\probabilityNullDecision=0$. As compared to the previous case, we observe a significant improvement with increasing $\numberOfIterations$. This is because the randomness enables the votes to accumulate on $2^{\numberOfIterations-1}$ subcarriers, rather than a single resource. Hence, accumulating the energy over multiple subcarriers yields a better estimation of $\metricPlus[\indexIteration]$ and $\metricMinus[\indexIteration]$. A similar result is given in  \figurename~\ref{fig:cer}\subref{subfig:case_p0_p0_z1_errProb} when all the sensors have absentee votes, i.e., {$\probabilityPlusDecision=0$, $\probabilityMinusDecision=0$, and $\probabilityNullDecision=1$.} This is due to the fact that all sensors activate $2^{\numberOfIterations-1}$ elements of the transmitted \ac{CS}. Hence, the \ac{CER} decreases when $\numberOfIterations$ increases. Finally, in \figurename~\ref{fig:cer}\subref{subfig:case_p13_p13_z13_errProb}, we analyze the case for {$\probabilityPlusDecision=1/3$, $\probabilityMinusDecision=1/3$, and $\probabilityNullDecision=1/3$} and show that \ac{CER} performance improves with increasing $\numberOfIterations$. For all cases, the theoretical results  match with the simulations, and the \ac{CER} performance improves with increasing $\numberOfEDsPlus[\indexIteration]$. For Goldenbaum's scheme, each MV is calculated on orthogonal resources. Hence, the corresponding $\computationErrorRate[\indexIteration;{\numberOfEDsPlus[\indexIteration], \numberOfEDsMinus[\indexIteration], \numberOfEDsZero[\indexIteration]}]$ is not a function of $\probabilityPlusDecision$, $\probabilityMinusDecision$, and $\probabilityNullDecision$. We observe that Goldenbaum's scheme performs better for a larger $\goldenbaumLength$. However, since the proposed scheme can exploit the available number of subcarriers much more effectively, it yields notably better performance, as seen in \figurename~\ref{fig:cer}\subref{subfig:case_p12_p12_z0_errProb}-\subref{subfig:case_p13_p13_z13_errProb}.

\subsection{UAV Waypoint Flight Control}
\if\IEEEsubmission1
\begin{figure}
	\centering
	\subfloat[{UAV's trajectory in time.}]{\includegraphics[width =3in]{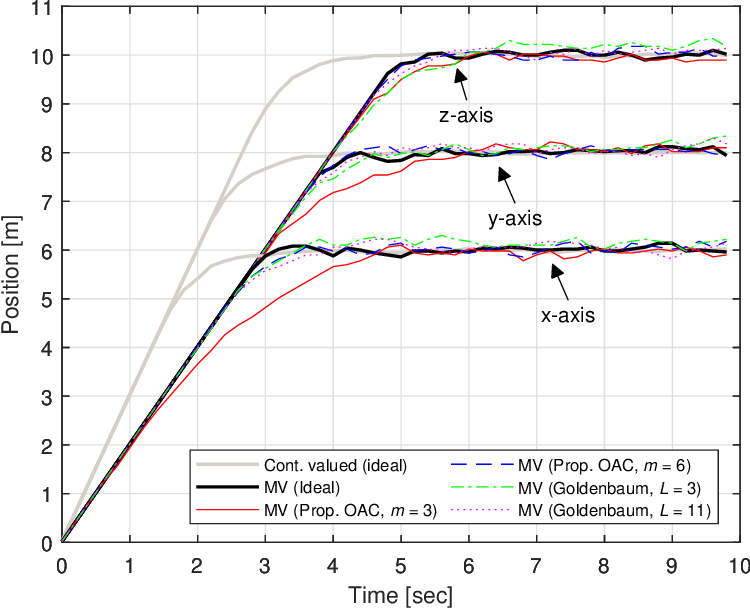}
		\label{subfig:wayPointSingleTime}}	~~
	\subfloat[{UAV's trajectory in space.}]{\includegraphics[width =3.5in]{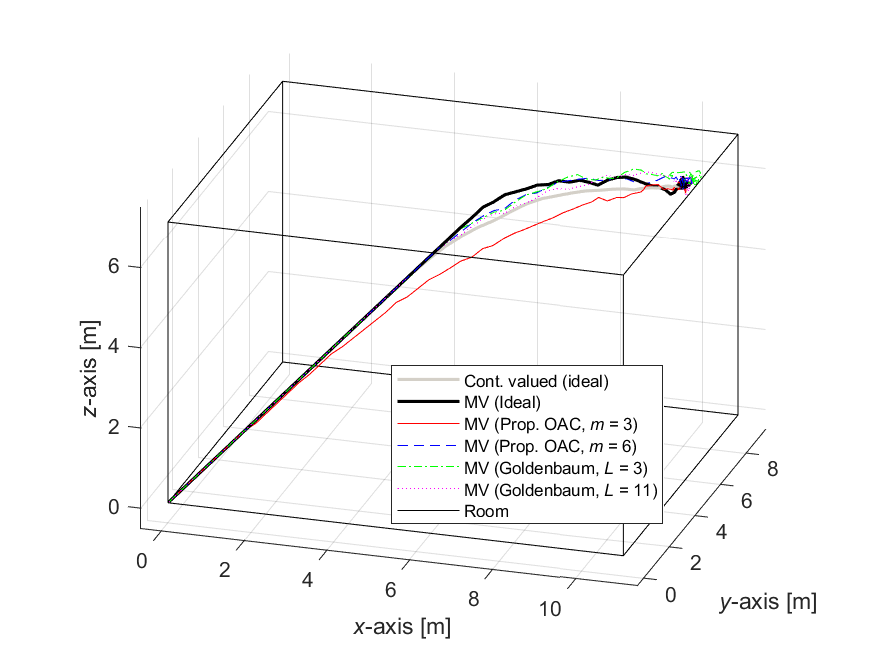}
		\label{subfig:wayPointSingleSpace}}
	\caption{UAV's trajectory with a single point of interest. The initial position is $(0,0,0)$ and the target position is $(10,8,6)$.}
	\label{fig:waypointSingle}
\end{figure}
\begin{figure}
	\centering
	\subfloat[UAV's trajectory in time.]{\includegraphics[width =3in]{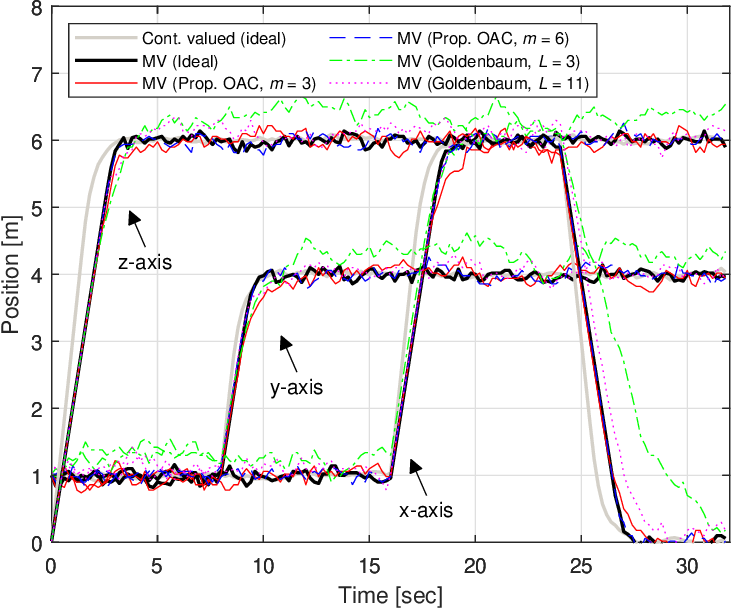}
		\label{subfig:wayPointTime}}~~
	\subfloat[{UAV's trajectory in space.}]{\includegraphics[width =3.5in]{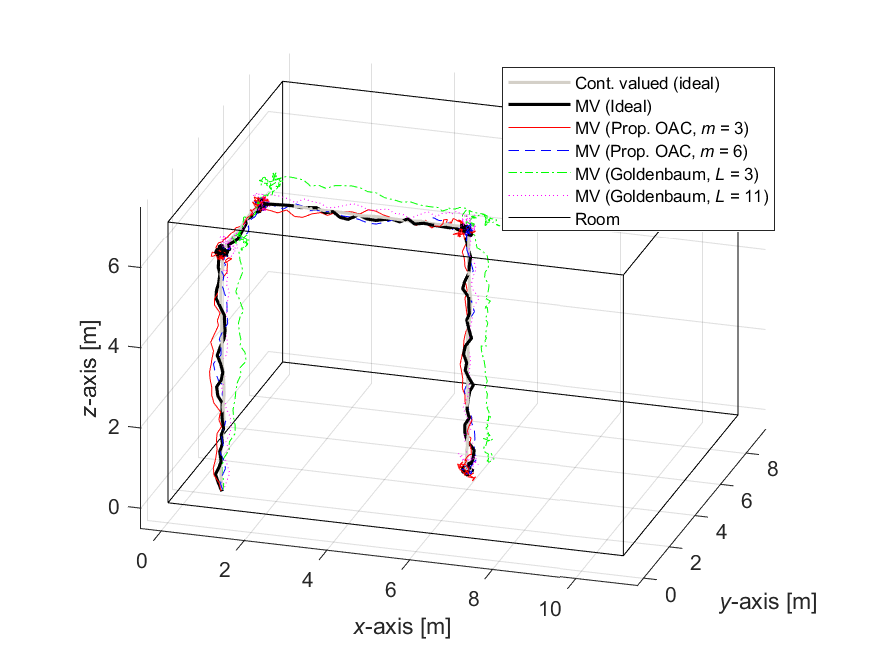}
		\label{subfig:waypointSpace}}		
	\caption{UAV's trajectory with multiple points of interest, i.e., $(1,1,6)$, $(1,4,6)$, $(6,4,6)$, and $(6,4,0)$. The initial position  is $(1,1,0)$.}
	\label{fig:waypoint}
\end{figure}
\else
\begin{figure}
	\centering
	\subfloat[{UAV's trajectory in time.}]{\includegraphics[width =3in]{figure_wayPointSingle_time.eps}
		\label{subfig:wayPointSingleTime}}	\\
	\subfloat[{UAV's trajectory in space.}]{\includegraphics[width =3.5in]{figure_wayPointSingle_space.eps}
		\label{subfig:wayPointSingleSpace}}
	\caption{UAV's trajectory with a single point of interest. The initial position is $(0,0,0)$ and the target position is $(10,8,6)$.}
	\label{fig:waypointSingle}
\end{figure}
\begin{figure}
	\centering
	\subfloat[UAV's trajectory in time.]{\includegraphics[width =3in]{figure_wayPointMultiple_time.eps}
		\label{subfig:wayPointTime}}\\	
	\subfloat[{UAV's trajectory in space.}]{\includegraphics[width =3.5in]{figure_wayPointMultiple_space.eps}
		\label{subfig:waypointSpace}}		
	\caption{UAV's trajectory with multiple points of interest, i.e., $(1,1,6)$, $(1,4,6)$, $(6,4,6)$, and $(6,4,0)$. The initial position  is $(1,1,0)$.}
	\label{fig:waypoint}
\end{figure}
\fi
 In \figurename~\ref{fig:waypointSingle} and  \figurename~\ref{fig:waypoint},  we consider the UAV  waypoint flight control scenario discussed in Section~\ref{sec:system} for $\numberOfEdgeDevices=50$ sensors.  We assume $\Trefresh=10$~ms, $\updateRate=2$, $\maximumVelocity = 3$~m/s, $\varianceSensor=2$, and $\SNR=10$~dB. We  provide the trajectory of the \ac{UAV} in time and space.  We consider two cases. In the first case, there is only one point of interest $(\locationTargetEle[1],\locationTargetEle[2],\locationTargetEle[3])=(10,8,6)$ and the initial position of the \ac{UAV} is $(0,0,0)$. In the second case, the points of interest are $(1,1,6)$, $(1,4,6)$, $(6,4,6)$, and $(6,4,0)$, where the initial position of the \ac{UAV} is $(1,1,0)$. We compare the  proposed scheme for $\numberOfIterations\in\{3,6\}$ with both continuous and \ac{MV}-based feedback in an ideal communication channel (i.e., no error due to the communication) and Goldenbaum's approach for $\goldenbaumLength\in\{3,11\}$.
   As can be seen from \figurename~\ref{fig:waypointSingle}\subref{subfig:wayPointSingleTime}, for the continuous-valued feedback, the \ac{UAV} reaches its position faster than any \ac{MV}-based approach. This is because the velocity increment is limited by the step size for \ac{MV}-based feedback in our setup. Hence, as can be seen from \figurename~\ref{fig:waypointSingle}\subref{subfig:wayPointSingleSpace}, the \ac{UAV}'s trajectory in space is slightly bent. Since the proposed scheme is also based on the \ac{MV} computation, its characteristics are similar to the one with MV computation in an ideal channel. Since the \ac{CER} with $\numberOfIterations=6$ is lower than the one with  $\numberOfIterations=3$, the proposed scheme for $\numberOfIterations=6$ performs better and its characteristics are similar to the ideal \ac{MV}-based feedback. Goldebaum's approach has similar characteristics to the proposed scheme in terms of trajectory. However, the variation of the UAV position is considerably large when the UAV reaches its final point, as can be seen in \figurename~\ref{fig:waypointSingle}\subref{subfig:wayPointSingleSpace}, in particular for $\goldenbaumLength=3$.  
    The position of the \ac{UAV} in time and space for multiple points of interest is given in \figurename~\ref{fig:waypoint}\subref{subfig:wayPointTime} and \figurename~\ref{fig:waypoint}\subref{subfig:waypointSpace}, respectively. The proposed scheme for $\numberOfIterations=6$ performs similarly to the one with the  \acp{MV} in ideal communications, and increasing $\numberOfIterations$ leads to a more stable trajectory. For $\goldenbaumLength=3$, the trajectory is less stable for Goldenbaum's method. However, its performance improves for a larger $\goldenbaumLength$ for Goldenbaum's approach.

\section{Concluding Remarks}
\label{sec:conclusion}
In this study, we modulate the amplitude of the \ac{CS} based on Theorem~\ref{th:reduced} to develop a new non-coherent \ac{OAC} scheme for \ac{MV} computation. We show that the proposed scheme reduces the \ac{CER} via bandwidth expansion in flat-fading and frequency-selective fading channel conditions while maintaining the \ac{PMEPR} of the transmitted OFDM signals to be less than or equal to $3$~dB. In this work, we derive the theoretical \ac{CER} and provide the convergence analyses for a control scenario. We show that the proposed scheme results in a lower \ac{CER} than Goldenbaum's method while providing a significant \ac{PMEPR} gain. 
Finally, we demonstrate its applicability to a flight control scenario. The proposed scheme with a larger length of sequences performs similarly to the case where \ac{MV} without OAC. The proposed approach can also be utilized in other applications, such as wireless federated learning or distributed optimization over wireless networks using \ac{MV} computation, to address the congestion problems in band-limited wireless channels.

\appendices

\section{Proof of Lemma~\ref{lemma:exp}}
\label{prof:lemma:exp}
We first need the following proposition:
\begin{proposition}
	The following identities hold:
	\begin{align}
		\sum_{\substack{\forall\seqx\in\integers^\numberOfIterations_2\\\monomialAmp[{\permutationMono[{\indexIteration}]}]=1}}
		\constante^{2\funcfForFinalAmplitudeED[\indexED](\seqx)}=\constante^{2\scaleEexpAtRound[\indexED,\indexIteration][\indexRound]}\sum_{\substack{\forall\seqx\in\integers^\numberOfIterations_2\\\monomialAmp[{\permutationMono[{\indexIteration}]}]=0}}
		\constante^{2\funcfForFinalAmplitudeED[\indexED](\seqx)}=\frac{\constante^{2\scaleEexpAtRound[\indexED,\indexIteration][\indexRound]}}{1+\constante^{2\scaleEexpAtRound[\indexED,\indexIteration][\indexRound]}}2^\numberOfIterations~.\nonumber
	\end{align}
	\label{pro:sum}
\end{proposition}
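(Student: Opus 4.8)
The plan is to reduce both constrained sums to a single product over the $\numberOfIterations$ amplitude parameters $\scaleEexpAtRound[\indexED,1][\indexRound],\mydots,\scaleEexpAtRound[\indexED,\numberOfIterations][\indexRound]$ by exploiting the bijection noted right after Theorem~\ref{th:reduced}. Since $\funcfForFinalAmplitudeED[\indexED](\seqx)=\scaleEexpAtRound[\indexED,0][\indexRound]+\sum_{\indexIteration=1}^{\numberOfIterations}\scaleEexpAtRound[\indexED,\indexIteration][\indexRound]\monomialAmp[{\permutationMono[{\indexIteration}]}]$ by \eqref{eq:realPartReduced}, the summand factors as
\begin{align}
\constante^{2\funcfForFinalAmplitudeED[\indexED](\seqx)}=\constante^{2\scaleEexpAtRound[\indexED,0][\indexRound]}\prod_{\indexIteration=1}^{\numberOfIterations}\constante^{2\scaleEexpAtRound[\indexED,\indexIteration][\indexRound]\monomialAmp[{\permutationMono[{\indexIteration}]}]}~,\nonumber
\end{align}
where each $\monomialAmp[{\permutationMono[{\indexIteration}]}]\in\{0,1\}$. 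The key fact I would invoke is that $\seqx\mapsto(\monomialAmp[{\permutationMono[{1}]}],\mydots,\monomialAmp[{\permutationMono[{\numberOfIterations}]}])$ is a bijection of $\integers^\numberOfIterations_2$ onto itself; hence summing over $\seqx\in\integers^\numberOfIterations_2$ with $\monomialAmp[{\permutationMono[{\indexIteration}]}]$ fixed to $b\in\{0,1\}$ amounts to summing the product above over all free choices of the remaining coordinates $\monomialAmp[{\permutationMono[{\indexIterationOther}]}]$, $\indexIterationOther\neq\indexIteration$.

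Carrying out that constrained sum, summing $\constante^{2\scaleEexpAtRound[\indexED,\indexIterationOther][\indexRound]\monomialAmp[{\permutationMono[{\indexIterationOther}]}]}$ over $\monomialAmp[{\permutationMono[{\indexIterationOther}]}]\in\{0,1\}$ produces a factor $1+\constante^{2\scaleEexpAtRound[\indexED,\indexIterationOther][\indexRound]}$, so
\begin{align}
\sum_{\substack{\forall\seqx\in\integers^\numberOfIterations_2\\\monomialAmp[{\permutationMono[{\indexIteration}]}]=b}}\constante^{2\funcfForFinalAmplitudeED[\indexED](\seqx)}=\constante^{2\scaleEexpAtRound[\indexED,0][\indexRound]}\,\constante^{2b\scaleEexpAtRound[\indexED,\indexIteration][\indexRound]}\prod_{\substack{\indexIterationOther=1\\\indexIterationOther\neq\indexIteration}}^{\numberOfIterations}\bigl(1+\constante^{2\scaleEexpAtRound[\indexED,\indexIterationOther][\indexRound]}\bigr)~.\nonumber
\end{align}
Taking the ratio of the $b=1$ instance to the $b=0$ instance cancels the common prefactor and the product over $\indexIterationOther\neq\indexIteration$, which is precisely the first equality of the proposition. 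For the closed form I would substitute the normalization \eqref{eq:normalizationCoef}, i.e., $\constante^{2\scaleEexpAtRound[\indexED,0][\indexRound]}=1/\totalPowerScale$ with $\totalPowerScale=\prod_{\indexIteration=1}^{\numberOfIterations}(1+\constante^{2\scaleEexpAtRound[\indexED,\indexIteration][\indexRound]})/2$; the product over $\indexIterationOther\neq\indexIteration$ then cancels all but one factor of $\totalPowerScale$, leaving $2^\numberOfIterations/(1+\constante^{2\scaleEexpAtRound[\indexED,\indexIteration][\indexRound]})$ for $b=0$ and $2^\numberOfIterations\constante^{2\scaleEexpAtRound[\indexED,\indexIteration][\indexRound]}/(1+\constante^{2\scaleEexpAtRound[\indexED,\indexIteration][\indexRound]})$ for $b=1$, as claimed. (Alternatively, the total $\sum_{\forall\seqx\in\integers^\numberOfIterations_2}\constante^{2\funcfForFinalAmplitudeED[\indexED](\seqx)}$ equals $\norm{\transmittedSeq[\indexED]}_2^2=2^\numberOfIterations$ by construction, and splitting it through the first equality yields the same value.)

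There is no genuine obstacle here; the argument is a short computation. The only point that needs care is the bookkeeping around the permutation $\seqPermutationCompShift$ and the slightly different definition of the last coordinate, $\monomialAmp[{\permutationMono[{\numberOfIterations}]}]=\monomial[{\permutationMono[{\numberOfIterations}]}]$ versus $\monomialAmp[{\permutationMono[{\indexIteration}]}]=(\monomial[{\permutationMono[{\indexIteration}]}]+\monomial[{\permutationMono[{\indexIteration+1}]}])_2$ for $\indexIteration<\numberOfIterations$. But since the whole calculation uses only that $\seqx\mapsto(\monomialAmp[{\permutationMono[{1}]}],\mydots,\monomialAmp[{\permutationMono[{\numberOfIterations}]}])$ is bijective and that each $\monomialAmp[{\permutationMono[{\indexIteration}]}]$ takes each value in $\{0,1\}$ on exactly $2^{\numberOfIterations-1}$ inputs, these definitional details never actually enter the computation.
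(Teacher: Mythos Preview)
Your argument is correct and follows essentially the same route as the paper: the first identity comes from the fact that the summands on the two halves differ only by the factor $\constante^{2\scaleEexpAtRound[\indexED,\indexIteration][\indexRound]\monomialAmp[{\permutationMono[{\indexIteration}]}]}$, and the closed form follows from the normalization $\norm{\transmittedSeq[\indexED]}_2^2=2^\numberOfIterations$. The paper's proof is terser---it obtains the value by splitting the total $\sum_{\seqx}\constante^{2\funcfForFinalAmplitudeED[\indexED](\seqx)}=2^\numberOfIterations$ through the first identity rather than computing the explicit product over $\indexIterationOther\neq\indexIteration$---which is exactly the alternative you mention at the end.
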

\begin{proof}
	The first identity is because $\constante^{2\scaleEexpAtRound[\indexED,\indexIteration][\indexRound]\monomialAmp[{\permutationMono[{\indexIteration}]}]}=1$ for $\monomialAmp[{\permutationMono[{\indexIteration}]}]=0$. 
	Under \eqref{eq:normalizationCoef}, $\norm{\transmittedSeq[\indexED]}_2^2=2^\numberOfIterations$ holds. Hence,
	\begin{align}
		\norm{\transmittedSeq[\indexED]}_2^2=&\sum_{\forall\seqx\in\integers^\numberOfIterations_2}
		\constante^{2\funcfForFinalAmplitudeED[\indexED](\seqx)}=
		\sum_{\substack{\forall\seqx\in\integers^\numberOfIterations_2\\\monomialAmp[{\permutationMono[{\indexIteration}]}]=0}} 
		\constante^{2\funcfForFinalAmplitudeED[\indexED](\seqx)}+\sum_{\substack{\forall\seqx\in\integers^\numberOfIterations_2\\\monomialAmp[{\permutationMono[{\indexIteration}]}]=1}} 
		\constante^{2\funcfForFinalAmplitudeED[\indexED](\seqx)}\nonumber \\=&\sum_{\substack{\forall\seqx\in\integers^\numberOfIterations_2\\\monomialAmp[{\permutationMono[{\indexIteration}]}]=0}}
		\constante^{2\funcfForFinalAmplitudeED[\indexED](\seqx)}+\constante^{2\scaleEexpAtRound[\indexED,\indexIteration][\indexRound]}\sum_{\substack{\forall\seqx\in\integers^\numberOfIterations_2\\\monomialAmp[{\permutationMono[{\indexIteration}]}]=0}}\constante^{2\funcfForFinalAmplitudeED[\indexED](\seqx)} = 2^\numberOfIterations~.\nonumber
	\end{align}
\end{proof}

\begin{proof}[Proof of Lemma~\ref{lemma:exp}]
	By using Proposition~\ref{pro:sum}, we can calculate $\expectationOperator[{\metricPlus[\indexIteration]}][]$ and $\expectationOperator[{\metricMinus[\indexIteration]}][]$ as 
	\begin{align}
		&\expectationOperator[{\metricPlus[\indexIteration]}][]
		=\sum_{\substack{\forall\seqx\in\integers^\numberOfIterations_2\\\monomialAmp[{\permutationMono[{\indexIteration}]}]=1}} \expectationOperator[{\left|\sum_{\indexED=1}^{\numberOfEdgeDevices}	\channelAtSubcarrier[\indexED,\funcEnum(\seqx)] 	\constante^{\funcfForFinalAmplitudeED[\indexED](\seqx)}	\constante^{\constantj\frac{2\pi}{\numberOfPointsForPSK}\funcfForFinalPhaseED[\indexED](\seqx)}+\noiseAtSubcarrier[\funcEnum(\seqx)]\right|^2}][]
		\nonumber\\
		&=\sum_{\indexED=1}^{\numberOfEdgeDevices} \sum_{\substack{\forall\seqx\in\integers^\numberOfIterations_2\\\monomialAmp[{\permutationMono[{\indexIteration}]}]=1}}
		\constante^{2\funcfForFinalAmplitudeED[\indexED](\varMonomial)}+2^{\numberOfIterations-1}\noiseVariance\nonumber\\
		&=\sum_{\indexED=1}^{\numberOfEdgeDevices}
		\frac{\constante^{2\scaleEexpAtRound[\indexED,\indexIteration][\indexRound]}}{1+\constante^{2\scaleEexpAtRound[\indexED,\indexIteration][\indexRound]}}2^\numberOfIterations+2^{\numberOfIterations-1}\noiseVariance\nonumber\\
		&=2^\numberOfIterations\left(\frac{\constante^{2\scalingParameters}}{1+\constante^{2\scalingParameters}}\numberOfEDsPlus[\indexIteration]+\frac{1}{2}\numberOfEDsZero[\indexIteration]+\frac{\constante^{-2\scalingParameters}}{1+\constante^{-2\scalingParameters}}\numberOfEDsMinus[\indexIteration]+\frac{1}{2}\noiseVariance	\right)\nonumber~,
	\end{align}
	and
	\begin{align}
		&\expectationOperator[{\metricMinus[\indexIteration]}][]
		=\sum_{\substack{\forall\seqx\in\integers^\numberOfIterations_2\\\monomialAmp[{\permutationMono[{\indexIteration}]}]=0}} \expectationOperator[{\left|\sum_{\indexED=1}^{\numberOfEdgeDevices}	\channelAtSubcarrier[\indexED,\funcEnum(\seqx)] 	\constante^{\funcfForFinalAmplitudeED[\indexED](\seqx)}	\constante^{\constantj\frac{2\pi}{\numberOfPointsForPSK}\funcfForFinalPhaseED[\indexED](\seqx)}+\noiseAtSubcarrier[\funcEnum(\seqx)]\right|^2}][]
		\nonumber\\
		&=\sum_{\indexED=1}^{\numberOfEdgeDevices} \sum_{\substack{\forall\seqx\in\integers^\numberOfIterations_2\\\monomialAmp[{\permutationMono[{\indexIteration}]}]=0}}
		\constante^{2\funcfForFinalAmplitudeED[\indexED](\varMonomial)}+2^{\numberOfIterations-1}\noiseVariance\nonumber\\
		&=\sum_{\indexED=1}^{\numberOfEdgeDevices}
		\frac{1}{1+\constante^{2\scaleEexpAtRound[\indexED,\indexIteration][\indexRound]}}2^\numberOfIterations+2^{\numberOfIterations-1}\noiseVariance\nonumber\\
		&=2^\numberOfIterations\left(\frac{1}{1+\constante^{2\scalingParameters}}\numberOfEDsPlus[\indexIteration]+\frac{1}{2}\numberOfEDsZero[\indexIteration]+\frac{1}{1+\constante^{-2\scalingParameters}}\numberOfEDsMinus[\indexIteration]+\frac{1}{2}\noiseVariance	\right)\nonumber~.\end{align}
\end{proof}

\section{Proof of Lemma~\ref{lemma:errProbGivenVote}}
\label{prof:lemma:errProbGivenVote}
\begin{proof}
	For a given $\voteAll$, $|\receivedSeqEle[{\funcEnum(\seqx)}]|^2$ is an exponential random variable with the mean 
	$\rate[{\funcEnum(\seqx)}]^{-1}=\sum_{\indexED=1}^{\numberOfEdgeDevices}
	\constante^{2\funcfForFinalAmplitudeED[\indexED](\seqx)}+\noiseVariance	$ since $\receivedSeqEle[{\funcEnum(\seqx)}]$ is a zero-mean symmetric complex Gaussian distribution in Rayleigh fading channel. Thus, the characteristic function for $|\receivedSeqEle[{\funcEnum(\seqx)}]|^2$ can be calculated as $({1-\constantj\integralVar\rate[{\funcEnum(\seqx)}]^{-1}})^{-1}$, i.e., the Fourier transform of its \ac{PDF}.  
	
	The sum of independent random variables is equal to the convolutions of their \acp{PDF}. Hence, by using the convolution theorem, the characteristic functions of $\metricPlus[\indexIteration]$ and $\metricMinus[\indexIteration]$ can be written as the product of the  characteristic functions of the corresponding exponential random variables as in $\charFcnPlus[\integralVar]$ and $\charFcnMinus[\integralVar]$, respectively. Similarly, the characteristic function of ${\metricPlus[\indexIteration]-\metricMinus[\indexIteration]}$ is equal to $\charFcnPlus[\integralVar]\charFcnMinusConj[\integralVar]$.
	
	Based on the inversion formula given in \cite{waller_1995inversionCDF}, the \ac{CDF} of ${\metricPlus[\indexIteration]-\metricMinus[\indexIteration]}$ can be obtained from its characteristic function as
	\begin{align}
		\CDF[{\metricPlus[\indexIteration]-\metricMinus[\indexIteration]}][][x;\voteAll]=\frac{1}{2}-\int_{-\infty}^{\infty}\frac{\charFcnPlus[\integralVar]\charFcnMinusConj[\integralVar]}{2\pi\constantj\integralVar} \constante^{-\constantj\integralVar\CDFvariable} d\integralVar~.
	\end{align}
\end{proof}

\bibliographystyle{IEEEtran}
\bibliography{references}

\end{document}